\documentclass[11pt,letterpaper]{article}

\usepackage{amsmath,amsthm,amsfonts,amssymb}
\usepackage{thm-restate}
\usepackage{fullpage}
\usepackage[utf8]{inputenc}
\usepackage[dvipsnames]{xcolor}
\usepackage{xspace,enumitem}
\usepackage[hypertexnames=false,colorlinks=true,urlcolor=Blue,citecolor=Blue,linkcolor=BrickRed]{hyperref}
\usepackage[capitalise]{cleveref}
\usepackage[OT4]{fontenc}
\usepackage{ifpdf}
\usepackage{todonotes}
\usepackage{thmtools}
\usepackage{algorithm}
\usepackage[noend]{algpseudocode}
\usepackage{algorithmicx}
\usepackage{makecell}
\usepackage{multirow}
\usetikzlibrary{patterns}

\usepackage[margin=1in]{geometry}
\title{Fully Dynamic Strongly Connected Components in Planar Digraphs}

\newcommand{\email}[1]{\href{mailto:#1}{#1}}
\date{\vspace{-5ex}}
\author{Adam Karczmarz\thanks{University of Warsaw and IDEAS NCBR, Poland. \email{a.karczmarz@mimuw.edu.pl}. Partially supported by the ERC CoG grant TUgbOAT no 772346 and the National Science Centre (NCN) grant no. 2022/47/D/ST6/02184.}
\and Marcin Smulewicz\thanks{University of Warsaw, Poland. \email{m.smulewicz@mimuw.edu.pl}}}

\theoremstyle{plain}
\newtheorem{theorem}{Theorem}[section]
\newtheorem{lemma}[theorem]{Lemma}
\newtheorem{corollary}[theorem]{Corollary}

\newtheorem{definition}[theorem]{Definition}

\newtheorem{remark}[theorem]{Remark}

\def\polylog{\operatorname{polylog}}

\newcommand{\Ot}{\ensuremath{\widetilde{O}}}
\newcommand{\eps}{\ensuremath{\epsilon}}
\newcommand{\dist}{\text{dist}}

\newcommand{\Vor}{\text{Vor}}

\newcommand{\rdiv}{\mathcal{R}}
\newcommand{\bnd}{\partial}

\newcommand{\sccs}{\mathcal{S}}
\newcommand{\rev}[1]{{#1}^{\textrm{R}}}

\begin{document}

\maketitle

  \begin{abstract}
    In this paper, we consider maintaining strongly connected components (SCCs) of
    a directed planar graph subject to edge insertions and deletions.
    We show a data structure maintaining an implicit representation of the SCCs
    within $\Ot(n^{6/7})$ worst-case time per update.
    The data structure supports, in $O(\log^2{n})$ time, reporting
    vertices of any specified SCC (with constant overhead per reported vertex) and
    aggregating vertex information (e.g., computing the maximum label) over all the vertices of that SCC.
    Furthermore, it can maintain global information about the structure of SCCs,
    such as the number of SCCs, or the size of the largest SCC.
    
    To the best of our knowledge, no fully dynamic SCCs data structures with sublinear
    update time have been previously known for any major subclass of digraphs.
    Our result should be contrasted with the $n^{1-o(1)}$ amortized update time lower bound conditional on SETH~\cite{AbboudW14},~which holds even for dynamically maintaining whether a general digraph has more than two SCCs.
\end{abstract}

\section{Introduction}

Two vertices of a directed graph $G=(V,E)$ are called strongly connected if they
can reach each other using paths in $G$. Pairwise strong connectivity is an equivalence
relation and the strongly connected components (SCCs) of $G$ are its equivalence classes.
Computing the SCCs is among the most classical and fundamental algorithmic problems
on digraphs and there exists a number of linear-time algorithms for
that~\cite{Gabow00,Sharir,Tarjan72}.
Therefore, it is no surprise that \emph{maintaining} SCCs has been one
of the most actively studied problems on \emph{dynamic directed graphs}~\cite{AbboudW14, BenderFGT16, BernsteinDP21, BernsteinGS20, BernsteinPW19, ChechikHILP16, chen2023almostlinear, HaeuplerKMST12, ItalianoKLS17, KarczmarzS23, Lacki13, RodittyZ08}.

When maintaining the strongly connected components, the information we care about may vary.
First, we could be interested in efficiently answering \emph{pairwise} strong connectivity
queries: given $u,v\in V$, decide whether $u$ and $v$ are strongly connected.
Pairwise strong connectivity queries, however, cannot easily provide any information about
the \emph{global} structure of SCCs (such as the number of SCCs, the size of the largest
SCC). Neither they enable, e.g., listing the vertices strongly connected to some $u\in V$.
This is why, in the following, we distinguish between \emph{dynamic pairwise strong connectivity} and \emph{dynamic SCCs} data structures which provide a more global view.
In particular, all the data about the SCCs can be easily accessed if the SCCs are maintained \emph{explicitly},
e.g., if the SCC identifier of every vertex is stored at all times and explicitly updated.
\subsection{Previous work}
In the following, let $n=|V|$ and $m=|E|$.
Dynamic graph data structures are traditionally studies in \emph{incremental}, \emph{decremental}
or \emph{fully dynamic} settings, which permit the graph to evolve by either only edge insertions,
only deletions, or both, respectively.
A decremental data structure maintaining SCCs with near-optimal total
update time is known~\cite{BernsteinPW19}. Very recently, a deterministic data structure with $m^{1+o(1)}$ total update
time has been obtained also for the incremental setting~\cite{chen2023almostlinear}.
Both these state-of-the art data structures maintain the SCCs explicitly.

The \emph{fully dynamic} variant -- which is our focus in this paper -- although
the most natural, has been studied the least.
First of all, there is strong evidence that a non-trivial dynamic SCCs data structure
for sparse graphs cannot exist.
If the SCCs have to be maintained explicitly,
then a single update can cause a rather dramatic $\Omega(n)$-sized amortized change in the set of SCCs\footnote{Consider a directed cycle and switching its arbitrary single edge on and off.}.
As a result, an explicit update may be asymptotically as costly as recomputing SCCs from scratch.
This argument -- applicable also for maintaining \emph{connected components} of an undirected graph -- does not exclude the possibility of maintaining an \emph{implicit} representation
of the SCCs in sublinear time, though.
After all, there exist very efficient fully dynamic connectivity data structures, e.g.,~\cite{HolmLT01, HuangHKPT23, Wulff-Nilsen13a}, typically maintaining also an explicit spanning forest which allows retrieving any ``global'' component-wise information one can think of rather easily.
However,~\cite{AbboudW14}~showed that even for maintaining a single-bit
information whether $G$ has \emph{more than two SCCs}, a data
structure with $O(n^{1-\eps})$ amortized time
is unlikely, as it would break the Orthogonal Vectors conjecture implied by the
SETH~\cite{ImpagliazzoP01, Williams05}.\footnote{In~\cite{AbboudW14}, a conditional lower bound of the same strength is also derived
for the dynamic \#SSR problem where the goal is to dynamically count vertices reachable from a source $s\in V$.}
This considerably limits the possible global information about the SCCs that can be
maintained within sublinear time per update.

For denser graphs,~\cite{AbboudW14} also proved that maintaining essentially any (even pairwise) information about SCCs dynamically
within truly subquadratic update time has to rely on fast matrix multiplication.
And indeed, that pairwise strong connectivity can be maintained this way follows
easily from the dynamic matrix inverse-based dynamic $st$-reachability data structures~\cite{BrandNS19, Sankowski05}.
More recently, \cite{KarczmarzS23} showed that in fact SCCs can be maintained
explicitly in $O(n^{1.529})$ worst-case time per update.
They also proved that maintaining whether $G$ has just a single SCC (\emph{dynamic~SC}) is easier\footnote{Interestingly, the SETH-based lower bound
of~\cite{AbboudW14} does not apply to the dynamic SC problem.}
and can be achieved within $O(n^{1.406})$ worst-case time per update.
Both these bounds are tight conditional on the appropriate variants~\cite{BrandNS19} of the OMv conjecture~\cite{HenzingerKNS15}.

In summary, the complexity of maintaining SCCs in general
directed graphs is rather well-understood now. In partially dynamic
settings, the known bounds are near optimal unconditionally,
whereas in the fully dynamic setting, the picture appears complete
unless some popular hardness conjectures are proven wrong.
In particular, for general sparse digraphs, no (asymptotically) non-trivial fully
dynamic SCCs data structure can exist.

\paragraph{Planar graphs.}
It is thus natural to ask whether 
non-trivial dynamic SCCs data structures are possible if we limit our attention
to some significant class of sparse digraphs.
And indeed, this question has been partially addressed for \emph{planar digraphs} in the past.
Since pairwise $s,t$-strong connectivity queries reduce to two $s,t$-reachability
queries, the known planar dynamic reachability data structures~\cite{DiksS07, Subramanian93} imply that
sublinear ($\Ot(n^{2/3})$ or $\Ot(\sqrt{n})$-time, depending on whether embedding-respecting insertions are required)
updates/queries are possible for \emph{pairwise} strong connectivity.
Another trade-off for dynamic pairwise strong connectivity has been showed by~\cite{Charalampopoulos22}.
Namely,
they showed a fully dynamic data structure for planar graphs
with $\Ot(n^{4/5})$ worst-case update time that can produce an identifier $s_v$
of an SCC of a given query vertex $v$ in $O(\log^2{n})$ time. Whereas this is
slightly more general\footnote{Than answering pairwise strong connectivity queries.
Using the SCC-identifiers, one can, e.g., partition any $k$ vertices of $G$ into strongly connected classes in $\Ot(k)$~time, whereas using pairwise queries this requires $\Theta(k^2)$ queries.}, it still not powerful enough to enable efficiently maintaining any of the global
data about the SCCs of a dynamic planar digraph
such as the SCCs count.

To the best of our knowledge, the question whether a more robust -- that is, giving a more ``global'' perspective on the SCCs beyond only supporting pairwise queries -- fully dynamic SCCs data structure for planar digraphs (or digraphs from any other interesting class) with sublinear update time is possible has not been addressed before.

\subsection{Our results}
In this paper, we address the posed question in the case of planar directed
graphs. Specifically, our main result is a dynamic SCCs data structure summarized by the following theorem.
  \begin{restatable}{theorem}{planarsccs}\label{t:planar-sccs}
    Let $G$ be a planar digraph subject to planarity-preserving edge insertions and deletions.
    There exists a data structure maintaining the strongly connected components of $G$ implicitly in $\Ot(n^{6/7})$ worst-case time per update.
    Specifically:
    \begin{itemize}
      \item The data structure maintains the number of SCCs and the size of the largest SCC in $G$.
      \item For any query vertex $v$, in $O(\log^2{n})$ time the data structure can compute the size of the SCC
    of $v$, and enable reporting the elements of the SCC of $v$ in $O(1)$ worst-case time per element.
    \end{itemize}
  \end{restatable}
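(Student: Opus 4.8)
The plan is to reduce dynamic SCCs to maintaining strong connectivity \emph{among a small set of boundary vertices}, using a planar $r$‑division together with the dense‑distance‑graph / FR–Dijkstra toolkit. Fix a parameter $r$ (optimized at the end) and maintain an $r$‑division $\rdiv$ of $G$ with $O(n/r)$ pieces, each piece $P$ having $O(r)$ vertices and $O(\sqrt{r})$ boundary vertices $\partial P$; let $\partial=\bigcup_{P}\partial P$, so $|\partial|=O(n/\sqrt{r})$. A planarity‑preserving update falls inside $O(1)$ pieces, so it suffices to be able to \emph{rebuild} the auxiliary data of a single piece from scratch and then refresh a global ``boundary layer''. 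For every piece $P$ I maintain: (i) the condensation $C_P$ of $G[P]$ (computable in $O(r)$ time); and (ii) a reachability dense distance graph on $\partial P$ recording, for each node of $C_P$, which $\partial P$‑classes it reaches and is reached by using paths inside $P$ (stored implicitly, in $O(|\partial P|^2)=O(r)$ space, via the usual Monge/Voronoi representation).

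The structural backbone is a dichotomy that I would isolate as the key lemma: every SCC $S$ of $G$ is either \emph{internal} --- disjoint from $\partial$, in which case it is an entire SCC of $G[P\setminus\partial P]$ for a single piece $P$ --- or \emph{boundary‑touching}, i.e.\ it contains a vertex of $\partial$. For the first case, any closed walk leaving a piece must pass through that piece's boundary, so an SCC avoiding $\partial$ stays inside one piece and its realizing cycles avoid $\partial$. For the second case: if a non‑boundary $v\in P$ lies in a boundary‑touching SCC $S$, then there are $b_1,b_2\in\partial P\cap S$ with $v\rightsquigarrow b_1$ and $b_2\rightsquigarrow v$ via paths inside $P$ (take $b_1$, resp.\ $b_2$, as the first, resp.\ last, $\partial P$‑vertex on a cycle through $v$ and a $\partial$‑vertex of $S$), and conversely the existence of such $b_1,b_2$ that are strongly connected \emph{in $G$} forces $v\in S$. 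Hence $S$ is determined by its boundary vertices, and the strong‑connectivity classes of $\partial$ in $G$ coincide with the SCCs of the \emph{boundary graph} $\mathcal{H}$ on vertex set $\partial$ whose arcs are the union of the per‑piece reachability relations on the $\partial P$'s together with the original $G$‑edges having both endpoints in $\partial$; this uses that every $G$‑path between two $\partial$‑vertices decomposes into single‑piece reachability steps glued at boundary vertices.

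The problem thus splits into: (1) maintaining the condensation of $\mathcal{H}$ --- equivalently the strong‑connectivity structure of the reachability dense distance graph of the whole $r$‑division --- together with, for each $\mathcal{H}$‑SCC, its total $G$‑size; and (2) on a query vertex $v$, combining the already‑materialized local structure $C_{P(v)}$ with the $\mathcal{H}$‑condensation to decide whether $v$'s class is internal or boundary‑touching, and in the latter case to name the $\mathcal{H}$‑SCC it belongs to. For (1) one cannot touch $\mathcal{H}$ explicitly (it has $\Theta(n)$ arcs); instead one runs FR–Dijkstra and exploits the Monge structure of dense distance graphs so that the SCC decomposition of $\mathcal{H}$ is obtained in time near‑polynomial in $|\partial|$, not in the number of arcs. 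The $G$‑size of an $\mathcal{H}$‑SCC $t$ equals $|t\cap\partial|$ plus, summed over pieces $P$, the number of non‑boundary $v\in P$ whose $\partial P$‑reach/co‑reach data is ``glued'' by the current $\mathcal{H}$‑condensation to $t$; after refreshing the $\mathcal{H}$‑condensation these per‑piece contributions are re‑derived from the stored $C_P$ and $\partial P$‑reachability data. Equipping the $\mathcal{H}$‑condensation with balanced‑search‑tree / Euler‑tour structures makes the number of SCCs, the per‑SCC size, and the maximum size queryable in $O(\log n)$ time, while the internal‑SCC count and the largest internal SCC are maintained as a sum / maximum of the $O(n/r)$ piece‑local quantities. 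To report the elements of the SCC of $v$: if its class is internal, output the stored vertex list of that node of $C_{P(v)}$; if it is boundary‑touching, walk the boundary vertices of the corresponding $\mathcal{H}$‑SCC and, for each such $b$, emit the piece‑local list of non‑boundary vertices of $P(b)$ glued to $b$ --- $O(1)$ per reported vertex.

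The main obstacle is exactly part (1) together with making part (2) run in $O(\log^2 n)$. A single edge update can move $\Omega(\sqrt{r})$ boundary vertices --- hence $\Omega(n)$ vertices of $G$ --- between SCCs, so membership must stay implicit and all aggregates must be re‑derivable while touching only $O(1)$ pieces plus the boundary layer. The delicate point is that whether a $C_P$‑node is boundary‑touching, and which $\mathcal{H}$‑SCC it is glued to, depends on the \emph{current} $\mathcal{H}$; one therefore needs, for each $C_P$‑node, a compact certificate of its gluing that can be evaluated quickly once the new $\mathcal{H}$‑condensation is known (and queried in $O(\log^2 n)$), without iterating over all $\Theta(n)$ condensation nodes --- this is where point‑location in Voronoi diagrams and the dense‑distance‑graph structure are used in earnest and where the bulk of the technical work lies. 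Finally, one also has to maintain the $r$‑division itself under planarity‑preserving updates (by standard periodic rebuilding, keeping everything worst‑case); the per‑update cost then splits into roughly $\Ot(r)$ for rebuilding an affected piece and its local tables against $\Ot(\poly(n/\sqrt{r}))$ for re‑deriving the boundary condensation and the aggregates, and balancing these terms yields the parameter choice giving the stated $\Ot(n^{6/7})$ bound.
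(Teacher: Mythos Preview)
Your high-level dichotomy (internal vs.\ boundary-touching SCCs) and the characterization of a boundary-touching SCC $S$ inside a piece $P$ as the set of vertices reaching and reached from $S\cap\partial P$ within $P$ are correct and match the paper's Lemma~\ref{l:scc-reach}. The genuine gap is exactly the step you yourself flag as ``where the bulk of the technical work lies'': given the freshly computed SCC-partition of $\partial$, compute for each class $A\subseteq\partial P$ the count $|\Pi_P(A)\setminus\partial P|$ (and an iterator over those vertices) in total time $\Ot(|\partial P|)$ per piece, i.e.\ $\Ot(n/\sqrt{r})$ overall. You gesture at FR--Dijkstra, Monge structure, and Voronoi point location, but none of these solve this aggregation problem. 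FR--Dijkstra/Monge are shortest-path tools, not counting tools; and the Voronoi-diagram representation of~\cite{Charalampopoulos22} expresses $S\cap V(P)$ as an \emph{intersection of cells from two unrelated additively-weighted Voronoi diagrams}, which supports $O(\log^2 n)$ membership tests but --- as the paper explicitly discusses at the start of Section~\ref{s:ds-planar} --- gives no apparent way to aggregate or enumerate over such an intersection. The paper's answer is a new \emph{path net data structure} (Theorem~\ref{t:ds-planar}): after contracting SCCs of $P$ to make it a DAG and fixing a system of pairwise non-crossing canonical paths $\pi_{u,v}$ between boundary pairs, a polygon-triangulation-style cutting procedure reduces any closed query $A$ to $O(|A|)$ lookups into tables precomputed for all $\le 5$-element boundary tuples; building these $\Ot(|\partial P|^5)$ tables costs $\Ot(r^3)$ per piece.

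This is also why your arithmetic does not close. With $\Ot(r)$ piece rebuild (your items~(i)+(ii)) against $\Ot(\poly(n/\sqrt{r}))$ boundary work, no natural exponent yields $n^{6/7}$; e.g.\ $r$ vs.\ $n/\sqrt{r}$ balances at $n^{2/3}$. The $n^{6/7}$ bound comes precisely from the $\Ot(r^3)$ path-net preprocessing: $r^3=n/\sqrt{r}$ gives $r=n^{2/7}$. Two smaller points: (a) the boundary graph is lighter than you assume --- Subramanian's reachability certificates (Lemma~\ref{l:subramanian}) give each piece an $\Ot(\sqrt{r})$-size certificate, so $X=\bigcup_P X_P$ has size $\Ot(n/\sqrt{r})$ and its SCCs are found by a plain linear-time algorithm, no Monge tricks required; (b) for the $O(\log^2 n)$ per-vertex SCC-identifier query the paper simply runs~\cite{Charalampopoulos22} as a black box alongside (adding $\Ot(n^{4/5})$ to the update bound), rather than rederiving that functionality.
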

  In particular, Theorem~\ref{t:planar-sccs} constitutes the first known fully dynamic SCCs data structure with sublinear update time for any significant class of sparse digraphs.
It also shows that the conditional lower bound of~\cite{AbboudW14} does not hold in planar digraphs.
  
The data structure of Theorem~\ref{t:planar-sccs} is deterministic and
does not require the edge insertions to respect any fixed embedding of the graph
(this also applies to side results discussed below).
Obtaining more efficient data structures for fully dynamic embedding-respecting updates
is an interesting direction (see, e.g.,~\cite{DiksS07}) that is beyond the scope of this paper.

\paragraph{Related problems.}
  Motivated by the discrepancies between the known bounds for dynamic SCCs and dynamic SC in general digraphs (both from the lower-~\cite{AbboudW14} and upper bounds~\cite{KarczmarzS23} perspective), we also complement Theorem~\ref{t:planar-sccs} with a significantly simpler and faster data structure suggesting that the dynamic SC might be easier (than dynamic SCCs) in planar digraphs as well.\footnote{Clearly, one could use Theorem~\ref{t:planar-sccs} for dynamic SC as well.}

  \begin{restatable}{lemma}{tplanarsccon}\label{l:planar-sccon}
    Let $G$ be a planar digraph subject to planarity-preserving edge insertions and deletions.
    One can maintain whether $G$ has a single SCC in $\Ot(n^{2/3})$ worst-case time per update.
  \end{restatable}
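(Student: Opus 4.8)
The plan is to combine a combinatorial characterization of strong connectivity in terms of an $r$-division with the $\Ot(n^{2/3})$-time dynamic planar reachability data structure of~\cite{DiksS07,Subramanian93}. We maintain, alongside $G$, an $r$-division $\rdiv$ for $r=n^{2/3}$: a decomposition into $O(n/r)$ pieces, each an $O(r)$-edge planar subgraph with $O(\sqrt r)$ \emph{boundary vertices} (those shared with other pieces), with $|\bnd\rdiv|=O(n/\sqrt r)=O(n^{2/3})$ boundary vertices overall, where $\bnd\rdiv$ denotes their set. Since an edge update may invalidate $\rdiv$, we rebuild it from scratch every $\Theta(\sqrt r)$ updates (an $O(n)$ computation, amortized $O(n^{2/3})$ per update, deamortized in the standard way by running the rebuild in the background), and between rebuilds we only let the boundary of a piece hit by an update grow, preserving the $O(\sqrt r)$ per-piece bound up to a constant. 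For each piece $P$ we additionally maintain, under updates, the bits $\mathrm{fwd}(P)$ = ``every vertex of $P$ is reachable within $P$ from $\bnd P$'' and $\mathrm{bwd}(P)$ = ``every vertex of $P$ reaches $\bnd P$ within $P$'', each recomputable by one (reverse) multi-source search of $P$ in $O(r)$ time, plus a counter of pieces violating either. As in~\cite{DiksS07,Subramanian93}, we also maintain, for each piece, its boundary-to-boundary reachability relation; let $\DDGR$ be the digraph on $\bnd\rdiv$ with these pairs as arcs, so that reachability in $\DDGR$ equals reachability in $G$ restricted to $\bnd\rdiv$. Since an update touches $O(1)$ pieces, all of the above is maintained in $\Ot(r)=\Ot(n^{2/3})$ per update.

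The key claim is: for $n$ above a fixed constant, fixing any $s\in\bnd\rdiv$, the graph $G$ is strongly connected if and only if (a) the violation counter is zero and (b) $s$ reaches every vertex of $\bnd\rdiv$ in $G$ and every vertex of $\bnd\rdiv$ reaches $s$ in $G$ (equivalently, since $s\in V(\DDGR)$, $\DDGR$ is strongly connected). The ``only if'' direction is immediate. For ``if'', assume $G$ is not strongly connected; then it has a source strongly connected component $C\subsetneq V$ (one receiving no arc from $V\setminus C$). If $C\cap\bnd\rdiv=\emptyset$, then $C$ lies in the interior of a single piece $P$ (pieces share only boundary vertices), and any within-$P$ path from $\bnd P$ into $C$ would witness an arc entering $C$, contradicting that $C$ is a source component; hence $\mathrm{fwd}(P)$ fails. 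If $\emptyset\ne C\cap\bnd\rdiv\subsetneq\bnd\rdiv$, then no vertex of $\bnd\rdiv\setminus C$ reaches a vertex of $C\cap\bnd\rdiv$ in $G$ (such a path would enter $C$ from outside), hence none does in $\DDGR$, so $\DDGR$ is not strongly connected and (b) fails. Finally, if $C\cap\bnd\rdiv=\bnd\rdiv$, then $V\setminus C$ consists of interior vertices, none of which reaches $\bnd\rdiv$ (again, such a path would enter $C\supseteq\bnd\rdiv$), so $\mathrm{bwd}(P)$ fails for the piece containing any $v\in V\setminus C$. In every case (a) or (b) is violated, proving the claim.

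After each update we recompute the ``strongly connected'' bit using the claim: (a) is read off the counter, and (b) is checked by running, from $s$, a forward and a backward reachability search over $\DDGR$ and verifying that each reaches all of $\bnd\rdiv$. This is a minor variant (indeed a special case) of the single-pair reachability query of~\cite{DiksS07,Subramanian93}, which already performs exactly such a search over the induced graph on $\bnd\rdiv$ --- exploiting that boundary-to-boundary reachability inside a planar piece with boundary on $O(1)$ holes has enough non-crossing structure to be searched with near-constant amortized work per reached boundary vertex, rather than by listing all $\Theta(n)$ arcs of $\DDGR$ --- and hence costs $\Ot(|\bnd\rdiv|)=\Ot(n^{2/3})$. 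For $n$ below the constant threshold we maintain $G$ explicitly. The total per-update cost is $\Ot(r+n/\sqrt r)$, minimized at $r=n^{2/3}$, giving $\Ot(n^{2/3})$ worst-case.

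The conceptual content is the reduction in the second paragraph: it reduces dynamic strong connectivity of a planar digraph to a handful of purely local per-piece bits plus essentially one reachability query against a black-box dynamic planar reachability structure. I expect no real obstacle beyond assembling the (standard) $r$-division and planar-reachability machinery --- specifically, exposing from that data structure the per-piece boundary-to-boundary reachability and the over-$\bnd\rdiv$ reachability search, both of which~\cite{DiksS07,Subramanian93} provide by virtue of being built on $r$-divisions --- and deamortizing the periodic rebuild; if a self-contained account is preferred, these two primitives (recomputing a piece's boundary-to-boundary reachability in $\Ot(r)$ and searching $\DDGR$ from a source in $\Ot(|\bnd\rdiv|)$) can be implemented directly along the same lines.
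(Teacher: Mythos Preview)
Your proposal is correct and follows essentially the same blueprint as the paper: maintain a dynamic $r$-division with per-piece Subramanian-style reachability certificates, a per-piece local bit (your $\mathrm{fwd}(P)\wedge\mathrm{bwd}(P)$ is equivalent to the paper's ``$P\cup C_{\bnd P}$ is strongly connected'' whenever $\bnd P\neq\emptyset$), and test strong connectivity of $\bnd\rdiv$ via the $\Ot(n/\sqrt r)$-size certificate graph, balancing at $r=n^{2/3}$. Your contrapositive argument through a source SCC is a stylistic variation on the paper's direct proof of the same characterization, with the minor bonus that it subsumes the separate undirected-connectivity check the paper performs.
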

Similarly, one could ask how \emph{dynamic \#SSR} (i.e., counting vertices reachable from a single source) relates to dynamic SCCs in planar digraphs.
Especially since:
\begin{enumerate}[label=(\arabic*)]
  \item in general directed graphs, dynamic SCCs and dynamic \#SSR  currently have matching lower-~\cite{AbboudW14, BrandNS19} and upper bounds~\cite{KarczmarzS23, Sankowski05} (up to polylogarithmic factors);
  \item the former problem is at least as hard as the latter in the sense that dynamic \#SSR reduces to dynamic SCCs in general graphs easily\footnote{Consider the graph $G'$ obtained from $G$ by adding a supersink $t$ with a single outgoing edge $ts$ and incoming edges $vt$ for all $v\in V$. Then $v\in V$ is reachable
from $s$ in $G$ iff $s$ and $v$ are strongly connected in $G'$. See also~\cite{BrandNS19}.}, whereas
an opposite reduction is not known.
\end{enumerate}
Unfortunately, the aforementioned reduction of dynamic \#SSR to dynamic SCCs breaks planarity rather badly.
Interestingly, the \emph{path net} technique we develop to obtain Theorem~\ref{t:planar-sccs} does not
seem to work for counting ``asymmetric'' reachabilities from a single source.

Nevertheless, we observe that the Voronoi diagram machinery developed for computing the diameter of a planar graph~\cite{GawrychowskiKMS21}
almost immediately yields a more efficient data structure for dynamic \#SSR in planar digraphs with
$\Ot(n^{4/5})$ update time; see Lemma~\ref{t:planar-ssr-count} in Section~\ref{s:ssr}.
It is worth noting that Voronoi diagrams-based techniques (as developed for distance oracles~\cite{GawrychowskiMWW18}) have been used in the
pairwise strong connectivity data structure~\cite{Charalampopoulos22}. However, as we discuss later
on, it is not clear how to apply those for the dynamic SCCs problem. This is why
Theorem~\ref{t:planar-sccs} relies on a completely different path net approach developed in this paper.

\paragraph{Organization.}
We review some standard planar graph tools in Section~\ref{s:prelims}.
Then, as a warm-up, we show the data structure for dynamic SC in Section~\ref{s:planar-conn}.
In Section~\ref{s:planar-sccs} we define the path net data structure and show how it can be
used to obtain a dynamic SCCs data structure. In Section~\ref{s:ds-planar} we describe
the path net data structure in detail. Finally, Section~\ref{s:ssr} is devoted to the fully dynamic \#SSR problem in planar digraphs.

\section{Preliminaries}\label{s:prelims}
  In this paper we deal with \emph{directed} graphs.
  We write $V(G)$ and $E(G)$ to denote the sets of vertices and edges of $G$, respectively. We omit $G$ when the graph in consideration is clear from the context.
  A graph $H$ is a \emph{subgraph} of $G$, which~we~denote by $H\subseteq G$, iff $V(H)\subseteq V(G)$ and $E(H)\subseteq E(G)$.
  We write $e=uv\in E(G)$ when referring to edges of $G$.
  By $\rev{G}$ we denote $G$ with edges reversed.

  A sequence of vertices $P=v_1\ldots v_k$, where $k\geq 1$, is called
  an $s\to t$ path in~$G$ if $s=v_1$, $v_k=t$ and there is an edge $v_iv_{i+1}$ in $G$ for each $i=1,\ldots,k-1$.
  We sometimes view a path $P$ as a subgraph of $G$ with vertices $\{v_1,\ldots,v_k\}$
  and (possibly zero) edges $\{v_1v_2,\ldots,v_{k-1}v_k\}$.
  For convenience, we sometimes consider a single edge $uv$ a path.
  If $P_1$ is a $u \to v$ path and $P_2$ is a $v \to w$ path, we denote by $P_1\cdot P_2$ (or simply $P_1P_2$) a path obtained by concatenating $P_1$ with $P_2$.
  A vertex $t\in V(G)$ is \emph{reachable} from $s\in V(G)$ if there is an $s\to t$ path in $G$.
  \newcommand{\dstr}{\mathcal{D}}

\paragraph{Planar graph toolbox.}
An \emph{$r$-division}~\cite{DBLP:journals/siamcomp/Frederickson87} $\rdiv$ of a planar graph, for $r \in [1,n]$,
is a decomposition of the graph into a union of $O(n/r)$ pieces $P$, each of size $O(r)$ and with $O(\sqrt{r})$ boundary vertices
(denoted~$\bnd{P}$), i.e., vertices shared with some other
piece of $\rdiv$.
We denote by $\bnd{\rdiv}$ the set $\bigcup_{P\in\rdiv}\bnd{P}$.
If additionally $G$ is plane-embedded, all pieces are connected, and the boundary vertices of each piece~$P$ of the $r$-division~$\rdiv$ are distributed
among $O(1)$ faces of $P$ that contain the vertices from $\bnd{P}$ \emph{exclusively} (also called holes of $P$), we call $\rdiv$ an \emph{$r$-division with few holes}.
\cite{KleinMS13} showed that an $r$-division with few holes of a triangulated graph can be computed in linear time.

\paragraph{Fully dynamic $r$-divisions.}
Many dynamic algorithms for planar graphs maintain
$r$-divisions and useful piecewise auxiliary data structures under dynamic updates.
Let us slightly generalize the definition of an $r$-division with few holes
to non-planar graphs by dropping the requirement that $G$ as a whole is planar but
retaining all the other requirements (in particular, the individual pieces
are plane-embedded).

\begin{restatable}{theorem}{tdynrdiv}\label{t:dyn-rdiv}{\normalfont\cite{Charalampopoulos22, KleinS98, Subramanian93}}
  Let $G=(V,E)$ be a planar graph that undergoes edge deletions and edge insertions (assumed to preserve the planarity of $G$). 
  Let $r\in [1,n]$.
  
  There is a data structure maintaining an $r$-division with few holes $\rdiv$ of some $G^+$, where $G^+$ can be obtained from $G$ by adding edges\footnote{Note that $G^+$ need not be planar.}, such that each piece $P \in \rdiv$
  is accompanied with some auxiliary data structures that
      can be constructed in $T(r)$ time given~$P$ and use $S(r)$ space.
  
  The data structure uses $O\left(n+\frac{n}{r}\cdot S(r)\right)$ space and can be initialized
  in $O\left(n+\frac{n}{r}\cdot T(r)\right)$ time.
  After each edge insertion/deletion, it
  can be updated in $O(r+T(r))$ \emph{worst-case} time.
\end{restatable}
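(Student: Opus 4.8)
The plan is to combine a periodic global rebuild with cheap local surgery between rebuilds, and then to de-amortize so that the bound becomes worst-case.

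\textbf{Global rebuilds.} Every $\Theta(n/r)$ updates I would recompute everything from scratch: triangulate the current graph to obtain a supergraph $G^+$, compute an $r$-division with few holes $\rdiv$ of $G^+$ using the linear-time algorithm of \cite{KleinMS13}, and rebuild the auxiliary structure of every piece. The total cost is $O\!\left(n+\tfrac n r\,T(r)\right)$, which is $O(r+T(r))$ amortized over a window of $\Theta(n/r)$ updates since $\tfrac{n+(n/r)T(r)}{r+T(r)}=\tfrac n r$. The standard de-amortization — carrying out the next rebuild in the background, spread evenly over the current window, while the previous copy stays live and absorbs the updates — turns this into a worst-case bound at the price of a constant-factor blow-up in time and space, accounting for the $O\!\left(n+\tfrac n r S(r)\right)$ space.

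\textbf{Updates inside a window.} Each update may touch only $O(1)$ pieces. If $uv$ is deleted, I remove it from the one or two pieces containing it, splitting a piece into its two components if it became disconnected; rebuilding the $O(1)$ affected auxiliary structures costs $O(T(r))$ and re-extracting/re-embedding a piece costs $O(r)$. If $uv$ is inserted with $u\in P_u$, $v\in P_v$, the case $P_u=P_v$ with the insertion respecting the embedding of $P_u$ is immediate; otherwise I make $u$ a boundary vertex of $P_u$ and $v$ a boundary vertex of $P_v$, and place the new edge $uv$ itself into a fresh two-vertex connector piece $C=(\{u,v\},\{uv\})$ with $\bnd C=\{u,v\}$, which is trivially a valid piece with a single hole. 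Turning $u$ into a boundary vertex may force $P_u$ to pick up an extra hole, and repeated touches may push its boundary past $\Theta(\sqrt r)$; to repair this I re-extract $P_u$ together with the few marked vertices accumulated in the current window, add to it the edges needed to bring all marked vertices onto $O(1)$ faces (these are exactly the edges in which $G^+$ differs from $G$), re-embed it, and, if necessary, split it into $O(1)$ sub-pieces via a planar separator so that each again has $O(\sqrt r)$ boundary vertices. Since every update spends $O(r+T(r))$ time this way and creates $O(1)$ new pieces, over a window the number of pieces stays $O(n/r)$ and every piece remains fresh, hence satisfies all the requirements of an $r$-division with few holes.

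\textbf{The hard part.} The delicate point is maintaining the \emph{few holes} property under this surgery. Few holes is not a local property of a single piece — a planar graph with a prescribed set of $\Theta(\sqrt r)$ terminals need not admit an embedding with all terminals on $O(1)$ faces — so one cannot simply re-embed a disturbed piece in isolation. This is precisely what forces two features of the statement: permitting $G^+$ to be a proper (and possibly non-planar) supergraph of $G$, so that helper edges can be added to coalesce the relevant faces, just as the triangulation step of \cite{KleinMS13} does globally; and allowing an $O(r+T(r))$ rather than polylogarithmic update time, so that a piece can be rebuilt from scratch whenever a handful of updates have disturbed it, with the amortized cost absorbed into the global rebuild. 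Verifying that these per-piece rebuilds can always restore the few-holes invariant while adding only $O(\sqrt r)$ edges to each piece per window — so that pieces stay of size $O(r)$ — is the technical core; for the full details I would follow the treatments in \cite{Charalampopoulos22, KleinS98, Subramanian93}.
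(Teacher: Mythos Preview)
The paper does not actually prove this theorem; it is stated with attribution to \cite{Charalampopoulos22, KleinS98, Subramanian93} and then used as a black box. There is therefore no in-paper proof to compare your sketch against.

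That said, your reconstruction is in the right spirit: the combination of (i) a linear-time global $r$-division with few holes via \cite{KleinMS13}, (ii) $O(1)$-piece local repair per update costing $O(r+T(r))$, (iii) a $\Theta(n/r)$-update rebuild window, and (iv) a standard background-rebuild de-amortization is exactly the template used in the cited sources. You also correctly identify that preserving the few-holes invariant is the only non-routine step and that this is what necessitates the supergraph $G^+$ (so one may add edges to merge faces) and the willingness to rebuild a whole piece in $O(r)$ time.

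Two places where your sketch is a bit loose relative to the cited constructions: first, your ``fresh two-vertex connector piece'' is not quite how \cite{Charalampopoulos22, KleinS98} handle an insertion $uv$ with $u,v$ in different pieces --- the usual move is to attach $uv$ to one of the two pieces and promote the other endpoint to a boundary vertex, which avoids proliferating tiny pieces; second, the sentence ``add to it the edges needed to bring all marked vertices onto $O(1)$ faces'' hides the real work, since one has to argue that a bounded number of such edges suffice per rebuild and that the resulting piece is still plane. Both of these are precisely the details for which you (appropriately) punt to the references, so as a sketch this is acceptable.
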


\section{Fully dynamic SC data structure}\label{s:planar-conn}

To illustrate the general approach and introduce some of the concepts used for obtaining Theorem~\ref{t:planar-sccs},
in this section we first prove Lemma~\ref{l:planar-sccon}. That is, we show that the information whether
a planar graph $G$ is strongly connected can be maintained in $\Ot(n^{2/3})$ time per update.

We build upon the following general template used previously for designing fully dynamic
data structures supporting reachability, strong connectivity, and shortest paths queries in planar graphs, e.g.,~\cite{Subramanian93, KleinS98, FR, Charalampopoulos22, KaplanMNS17}.
As a base, we will maintain dynamically an $r$-division with few holes $\rdiv$ of $G$ using Theorem~\ref{t:dyn-rdiv}
with auxiliary piecewise data structures to be fixed later.
Intuitively, as long as the piecewise data structures are powerful enough
to allow recomputing the requested graph property (e.g., strong connectivity, shortest
path between a fixed source/target pair) while spending $r^{1-\eps}$ time
per piece, for some choice of $r$ we get a sublinear update bound
of $\Ot(n/r^{\eps}+r+T(r))$. For example, if $T(r)=O(r^{9.9})$ and $\eps=0.1$, for $r=n^{0.1}$ we get $\Ot(n^{0.99})$
worst-case update time bound.

\paragraph{Reachability certificates.}
\cite{Subramanian93} described \emph{reachability certificates} that sparsify reachability
between a subset of vertices lying on $O(1)$ faces of a plane digraph $G$ into a (non-necessarily planar)
digraph of size near-linear in the size of the subset in question. Formally, we have the following.
\begin{lemma}[\cite{Subramanian93}]\label{l:subramanian}
  Let $H$ be a plane digraph with a distinguished set $\bnd{H}\subseteq V(H)$ lying on some $O(1)$ faces of $H$.  There exists a directed graph $X_H$, where $\bnd{H}\subseteq V(X_H)$,
  of size $\Ot(|\bnd{H}|)$ satisfying the following property:
  for any $u,v\in \bnd{H}$, a path $u\to v$ exists in $H$
  if and only if there exists a $u\to v$ path in $X_H$.
  The graph $X_H$ can be computed in $\Ot(|H|)$ time.
\end{lemma}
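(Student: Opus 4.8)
The plan is to build $X_H$ by a balanced separator recursion on $H$ in which each recursive merge step rebuilds a \emph{small} (necessarily non‑planar) reachability certificate by exploiting the planar, non‑crossing structure of reachability among vertices lying on a common face; this is the high‑level structure of Subramanian's construction~\cite{Subramanian93}. Throughout, a subproblem is a pair $(H',B')$ where $H'$ is a plane digraph and $B'\subseteq V(H')$ is a terminal set lying on $O(1)$ faces of $H'$, and the goal is a graph $X_{H'}$ with $B'\subseteq V(X_{H'})$, of size $\Ot(|B'|)$, preserving reachability among the vertices of $B'$. Applying this to $(H,\bnd{H})$ yields the lemma.

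For the recursion: if $|H'|=O(|B'|\cdot\polylog n)$, return $X_{H'}:=H'$. Otherwise compute, in $O(|H'|)$ time, a simple cycle separator $C$ of a triangulation of $H'$ with $|V(C)|=O(\sqrt{|H'|})$ that splits $H'$ into an inside part $H_1'$ and an outside part $H_2'$, each of size at most $\tfrac{2}{3}|H'|+O(\sqrt{|H'|})$. Cutting $H'$ along $C$ makes all of $V(C)$ appear on a single face of each $H_i'$, and — using the same care needed to keep the number of holes bounded in $r$‑division‑with‑few‑holes constructions~\cite{KleinMS13} — one can arrange that each subproblem $(H_i',B_i')$ with $B_i':=(B'\cap V(H_i'))\cup V(C)$ again has its terminals on $O(1)$ faces. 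Recurse to obtain $X_{H_1'},X_{H_2'}$ and form their union $Y$ with the two copies of $V(C)$ identified. Any $u\to v$ walk in $H'$ with $u,v\in B'$ decomposes into maximal subpaths alternating between the two sides, each with endpoints on $C\subseteq B_i'$ (since $C$ separates), so $Y$ preserves reachability among $B'\cup V(C)$; conversely a $u\to v$ path in $Y$ chains together $H_i'$‑paths and hence witnesses $u\to v$ in $H'$.

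The crucial — and, I expect, hardest — step is to turn $Y$ into a certificate $X_{H'}$ of size $\Ot(|B'|)$, i.e.\ to \emph{summarize away} the separator vertices; keeping them is fatal, since $\sum_{\text{recursion}}|V(C)|=\Theta(|H|)$, which is far more than $\Ot(|\bnd H|)$. The summarization uses that the terminals we must keep lie on $O(1)$ faces. Fix one such face and list its terminals $t_1,\dots,t_k$ in cyclic order. A Jordan‑curve argument gives the key structural fact: whenever $t_i\to t_j$ and $t_a\to t_c$ hold with $i,a,j,c$ in this cyclic order, any $t_i\to t_j$ path and any $t_a\to t_c$ path must share a vertex $x$, whence $t_i\to t_c$ and $t_a\to t_j$ (routing through $x$). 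Consequently, after bisecting $t_1,\dots,t_k$ into a ``left'' arc $L$ and a ``right'' arc $R$, the $L\to R$ part of the reachability relation — viewed as a Boolean matrix with rows and columns in cyclic order — satisfies a Monge‑type monotonicity condition, and such relations admit Steiner representations of size $\Ot(|L|+|R|)$ (e.g.\ by a balanced decomposition of the ``staircase'' describing them); recursing on $L$ and $R$, and doing the analogous construction for the $O(1)$ ordered pairs of faces, produces $X_{H'}$ of size $\Ot(|B'|)$ that preserves reachability among $B'$.

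Finally, one has to make this merge run in $\Ot(|B_1'|+|B_2'|)=\Ot(|B'|+\sqrt{|H'|})$ time, rather than recomputing reachability among $B'$ from scratch (which would cost $\Omega(|B'|^2)$ and be too slow when $|B'|$ is large): the Monge/staircase descriptions must be maintained and recombined directly along the recursion. Given this, the recursion has depth $O(\log n)$ with geometrically shrinking piece sizes; summing the per‑node work over the recursion tree telescopes to $\Ot(|H|)$ (the separator‑vertex contributions at the deepest level dominate and total $O(|H|)$), so $X_H$ of size $\Ot(|\bnd{H}|)$ is built in $\Ot(|H|)$ time. I regard this last accounting as routine; the technical heart is the $\Ot(|B'|)$‑time Monge‑structure merge, which is where all of the planarity is ultimately spent.
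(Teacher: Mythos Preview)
The paper does not prove this lemma; it is stated as a black box with attribution to~\cite{Subramanian93}, so there is no in-paper proof to compare against. Your plan is a faithful high-level reconstruction of Subramanian's argument: cycle-separator recursion, the Jordan-curve ``non-crossing'' property of single-face reachability, and the resulting Monge/staircase compression to $\Ot(|B'|)$ size are exactly the ingredients of the original construction.

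That said, your sketch stops short precisely where the difficulty lies. You correctly flag that the merge must run in $\Ot(|B'|+\sqrt{|H'|})$ time and that ``the Monge/staircase descriptions must be maintained and recombined directly along the recursion,'' but you do not say how. This is not routine: one must specify a concrete near-linear-size representation of the reachability relation on each face (and between pairs of faces) that (i) supports the needed queries and (ii) can be merged across a cycle separator in near-linear time in the number of terminals, \emph{without} materializing the $\Theta(|B'|^2)$ reachability matrix. The non-crossing observation alone does not give this; one needs the actual data structure and merge procedure from~\cite{Subramanian93}. A second, smaller gap: keeping the number of holes $O(1)$ through the recursion is not automatic from a generic cycle separator --- you gesture at~\cite{KleinMS13}, but you should either invoke an $r$-division with few holes explicitly or argue how the hole count stays bounded when the separator cycle intersects existing holes. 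As written, your plan is a correct outline with the key technical step deferred rather than resolved.
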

\begin{remark}\label{rem:sub}
  For Lemma~\ref{l:subramanian} to hold, it is enough that $\bnd{H}$ lies on $O(1)$ Jordan curves in the plane, each of them having the embedding of $H$ entirely (but not strictly) on one side of the curve. In particular, it is enough that $\bnd{H}$ lies on $O(1)$ faces of some plane supergraph $H'$ with $H\subseteq H'$.
\end{remark}

Roughly speaking, \cite{Subramanian93}~uses reachability certificates as auxiliary data structures
in Theorem~\ref{t:dyn-rdiv} in order to obtain
a fully dynamic reachability data structure.
Crucially, the union of the piecewise certificates preserves pairwise reachability
between the boundary vertices $\bnd{\rdiv}$, or more formally
(see e.g.~\cite{Charalampopoulos22} for a proof):
\begin{restatable}{lemma}{bndcert}\label{l:bnd-cert}
For any $u,v\in \bnd{\rdiv}$,
  $u$ can reach $v$ in $G$ if and only if $u$ can reach $v$ in $X=\bigcup_{P\in\rdiv} X_P$.
\end{restatable}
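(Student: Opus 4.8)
The plan is to prove both implications by cutting a walk into per-piece segments and invoking the certificate property of Lemma~\ref{l:subramanian} inside each piece. Two structural facts will do all the work. First, after relabelling, the auxiliary vertices that a certificate $X_P$ may introduce beyond $\bnd{P}$ are private to $P$; hence for $P\neq P'$ we have $V(X_P)\cap V(X_{P'})\subseteq\bnd{P}\cap\bnd{P'}$, and in particular any vertex of $V(X_P)$ that happens to lie in $\bnd{\rdiv}$ actually lies in $\bnd{P}$. Second, a vertex of $G$ not in $\bnd{\rdiv}$ belongs to a unique piece $P\in\rdiv$ and has \emph{all} of its incident edges inside $P$ (otherwise some incident edge would force it into a second piece). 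I would state these two observations first.

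For the ``if'' direction, suppose $u$ reaches $v$ in $X$ and fix a $u\to v$ walk $W$ in $X$. Since $u,v\in\bnd{\rdiv}$, cut $W$ at every occurrence of a vertex of $\bnd{\rdiv}$, obtaining consecutive subwalks $W_1,\dots,W_q$ whose endpoints lie in $\bnd{\rdiv}$ and whose interiors avoid $\bnd{\rdiv}$. Each $W_i$ lies in a single $X_P$: if it is a lone edge it lies in some $X_P$ by definition of $X$; otherwise its first interior vertex $w$ is private to exactly one $X_P$, the edges of $W$ incident to $w$ must therefore lie in $X_P$, and an easy induction propagates this along $W_i$. The endpoints of $W_i$ then lie in $V(X_P)\cap\bnd{\rdiv}=\bnd{P}$, so Lemma~\ref{l:subramanian} gives a path between them in $P\subseteq G$. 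Concatenating these paths over $i=1,\dots,q$ yields a $u\to v$ walk in $G$, so $u$ reaches $v$ in $G$.

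For the ``only if'' direction, fix a $u\to v$ path $Q=x_0x_1\cdots x_k$ in $G$ and let $i_0<i_1<\dots<i_q$ be exactly the indices $i$ with $x_i\in\bnd{\rdiv}$; since $u=x_0$ and $v=x_k$ lie in $\bnd{\rdiv}$ we have $i_0=0$ and $i_q=k$. I claim each segment $Q_j:=x_{i_{j-1}}\cdots x_{i_j}$ lies inside a single piece $P_j$: its first edge lies in some piece $P_j$, and every later internal vertex, being outside $\bnd{\rdiv}$, has all of its edges in its unique piece, which must be $P_j$ because the preceding edge lies there — an induction finishes the claim. The endpoints $x_{i_{j-1}},x_{i_j}$ lie in $V(P_j)\cap\bnd{\rdiv}=\bnd{P_j}$, so Lemma~\ref{l:subramanian} yields an $x_{i_{j-1}}\to x_{i_j}$ path in $X_{P_j}\subseteq X$. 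Concatenating over $j=1,\dots,q$ gives a $u\to v$ walk in $X$, hence $u$ reaches $v$ in $X$. The degenerate cases $u=v$ and a one-edge segment are covered by the same argument.

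I do not expect a serious obstacle: this is essentially the standard ``gluing'' argument for planar emulators. The only points needing care are the bookkeeping that distinct certificates share only division-boundary vertices (so that a walk in $X$ splits cleanly at $\bnd{\rdiv}$) and the dual observation that a path in $G$ can leave a piece only through $\bnd{\rdiv}$; both are exactly what makes the segment-wise application of Lemma~\ref{l:subramanian} legitimate, and in both directions one passes from a concatenated walk back to a path using that walks and paths witness the same reachability.
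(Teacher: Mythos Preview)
Your proof is correct and is precisely the standard ``gluing'' argument the paper has in mind; note that the paper does not actually spell out a proof of this lemma but defers to~\cite{Charalampopoulos22}, where the same segment-by-segment application of the per-piece certificate property is carried out. Your two bookkeeping observations (privacy of auxiliary certificate vertices after relabelling, and that non-boundary vertices together with all their incident edges live in a unique piece) are exactly the facts needed to make the cut-and-replace work in both directions.
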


\paragraph{Strong connectivity data structure.}

The union of certificates $X$ preserves reachability, and thus strong connectivity between the
vertices $\bnd{\rdiv}:=\bigcup_{P\in\rdiv}\bnd{P}$. As a result, if $G$ is strongly connected,
then so is $\bnd{\rdiv}$ in~$X$. But the reverse implication might not hold.
It turns out that for connected graphs, to have an equivalence, it is enough to additionally maintain, for each piece $P$,
whether $P$ is strongly connected conditioned on whether $\bnd{P}$ is strongly connected in $G$.

In the following, we give a formal description of the data structure.
As already said, the data structure maintains a dynamic $r$-division $\rdiv^+$ of a supergraph $G^+$ of $G$ (i.e., the input graph), as given by Theorem~\ref{t:dyn-rdiv}. Since $G\subseteq G^+$, the pieces $\{P^+\cap G:P^+\in \rdiv^+\}$
induce an $r$-division $\rdiv$ of~$G$; however, a piece $P\in\rdiv$ is not necessarily connected and $\bnd{P}$ not necessarily lie on $O(1)$ faces of $P$, so $\rdiv$ is not technically an $r$-division with few holes. Nevertheless, $\bnd{P}$ still lies on $O(1)$ faces of a connected plane supergraph $P^+$ of $P$ that do not contain
vertices outside $\bnd{P}$. Consequently, by Remark~\ref{rem:sub}, we can still use Lemma~\ref{l:subramanian} to construct a sparse reachability certificate for the piece $P\in\rdiv$. For obtaining Lemma~\ref{l:planar-sccon}, we do not
require anything besides that, so 
for simplicity and wlog. we can assume we work with $\rdiv$ instead of $\rdiv^+$.

While $\rdiv$ evolves, each piece $P$ is accompanied with a reachability certificate $X_P$ of Lemma~\ref{l:subramanian}. Note that since $|\bnd{P}|=O(\sqrt{r})$, $X_P$ has size $\Ot(\sqrt{r})$ and can be constructed
in $\Ot(r)$ time.
Moreover, for each $P$, let $C_{\bnd{P}}$ be a directed simple cycle on the vertices $\bnd{P}$. We additionally store the (1-bit) information whether the graph $P\cup C_{\bnd{P}}$ is strongly connected.
Clearly, this can be computed in $O(|P|)=O(r)$ time.
All the accompanying data structures of a piece $P\in \rdiv$ can be thus constructed in $\Ot(r)$ time.
Therefore, by Theorem~\ref{t:dyn-rdiv}, they are maintained in $\Ot(r)$ time per update.

Finally, in a separate data structure, we maintain whether $G$ is connected
(in the undirected sense). This can be maintained within $n^{o(1)}$ worst-case update
time deterministically even in general graphs~\cite{GoranciRST21};
in our case, also a less involved data structure such as~\cite{Frederickson85} would suffice.

After $\rdiv$ and the accompanying data structures are updated, strong connectivity of $G$ can be verified as follows.
First of all, the union $X$ of all $X_P$, $P\in \rdiv$, is formed.
Note that we can test whether the vertices $\bnd{\rdiv}$ are strongly connected
in $X$ in $O(|X|)=\Ot(n/\sqrt{r})$ time
by computing the strongly connected components $\sccs_X$ of $X$ using any classical linear time algorithm.
If $G$ is not connected, or $\bnd{\rdiv}$ is not strongly connected in~$X$, we declare $G$ not strongly connected.
If, on the other hand, $\bnd{\rdiv}$ is strongly connected in $X$, we simply check whether
$P\cup C_{\bnd{P}}$ is strongly connected for each $P\in \rdiv$ and if so, declare $G$ strongly connected. This takes $O(n/r)$ time.
Thus, testing strong connectivity takes $\Ot(n/\sqrt{r})$ time.
The following lemma establishes the correctness.

\begin{restatable}{lemma}{lsubramanian}\label{l:subramanian-correct}
  $G$ is strongly connected if and only if $G$ is connected, the vertices $\bnd{\rdiv}$ are strongly connected in $X$, and for all $P\in\rdiv$, $P\cup C_{\bnd{P}}$ is strongly
  connected.
\end{restatable}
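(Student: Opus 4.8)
The plan is to prove the biconditional in two directions, with the forward direction being essentially immediate and the reverse direction requiring the real argument.

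For the \emph{forward} direction, suppose $G$ is strongly connected. Then trivially $G$ is connected. Since every $u,v\in\bnd{\rdiv}$ can reach each other in $G$, \Cref{l:bnd-cert} gives that they reach each other in $X$, so $\bnd{\rdiv}$ is strongly connected in $X$. Finally, fix any $P\in\rdiv$: I claim $P\cup C_{\bnd{P}}$ is strongly connected. Take $u,v\in V(P)\subseteq V(G)$. Since $G$ is strongly connected, there is a $u\to v$ path $Q$ in $G$. Decompose $Q$ into maximal subpaths that are internally contained in $P$; consecutive such subpaths are ``glued'' at vertices where $Q$ leaves and re-enters $P$, and any such vertex lies in $\bnd{P}$ (since it is shared between $P$ and another piece). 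Each maximal $P$-internal subpath runs between two vertices of $\{u,v\}\cup\bnd{P}$ and lies inside $P$; whenever $Q$ leaves $P$ at some $x\in\bnd{P}$ and later re-enters at $y\in\bnd{P}$, we can instead travel from $x$ to $y$ along the cycle $C_{\bnd{P}}$. Concatenating these pieces yields a $u\to v$ walk in $P\cup C_{\bnd{P}}$; by symmetry a $v\to u$ walk exists too, so $P\cup C_{\bnd{P}}$ is strongly connected.

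For the \emph{reverse} direction, assume $G$ is connected, $\bnd{\rdiv}$ is strongly connected in $X$, and every $P\cup C_{\bnd{P}}$ is strongly connected. By \Cref{l:bnd-cert} again (applied in both directions), strong connectivity of $\bnd{\rdiv}$ in $X$ transfers back to strong connectivity of $\bnd{\rdiv}$ within $G$. So it suffices to show that every vertex $v\in V(G)$ is strongly connected, in $G$, to some boundary vertex. Since $G$ is connected and $\rdiv$ is an $r$-division, if $v$ lies in a piece $P$ that has at least one boundary vertex $b\in\bnd{P}$, then because $P\cup C_{\bnd{P}}$ is strongly connected, there is a $v\to b$ path and a $b\to v$ path in $P\cup C_{\bnd{P}}$; each edge of such a path is either an edge of $P\subseteq G$ or an edge of $C_{\bnd{P}}$ whose endpoints both lie in $\bnd{P}$ — and each such cycle edge $xy$ can be replaced by a genuine $x\to y$ path in $G$ since $x,y\in\bnd{\rdiv}$ are strongly connected in $G$. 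This realizes the $v\to b$ and $b\to v$ connections inside $G$. Combining with strong connectivity of $\bnd{\rdiv}$ in $G$, all such $v$ are mutually strongly connected in $G$, and together with $\bnd{\rdiv}$ they form one strong class. It remains to handle a piece $P$ with $\bnd{P}=\emptyset$: such a piece is a connected component of $G$ sharing no vertex with the rest of the graph, so since $G$ is connected it must be that $\rdiv=\{P\}$ and $P=G$; then $C_{\bnd{P}}$ is empty and the hypothesis that $P\cup C_{\bnd{P}}=G$ is strongly connected gives the claim directly. Hence every vertex of $G$ is strongly connected to $\bnd{\rdiv}$ (or $G$ is a single piece handled separately), so $G$ is strongly connected.

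The main obstacle is the bookkeeping in both replacement arguments: correctly arguing that whenever a path in $G$ crosses a piece boundary it does so at a vertex of $\bnd{P}$ (forward direction), and conversely that the auxiliary cycle edges of $C_{\bnd{P}}$ can always be ``paid for'' by genuine $G$-paths between strongly-connected boundary vertices (reverse direction). Care is also needed with the degenerate case where a piece has no boundary — this forces $G$ to be that single piece because $G$ is connected — and with the fact that we work with $\rdiv$ (whose pieces need not be connected, which is exactly why the undirected-connectivity hypothesis on $G$ is needed rather than connectivity of individual pieces).
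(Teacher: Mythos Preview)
Your proof is correct and follows essentially the same approach as the paper: both directions hinge on \Cref{l:bnd-cert} to transfer strong connectivity of $\bnd{\rdiv}$ between $G$ and $X$, and on the replacement argument that swaps $C_{\bnd P}$-edges for genuine $G$-paths between boundary vertices (and conversely). The only cosmetic differences are that the paper's forward direction uses a single-excursion decomposition $Q_1\cdot R\cdot Q_2$ rather than your multi-excursion one, and its reverse direction is organized as ``same piece first, then different pieces'' rather than your ``every vertex reaches a boundary vertex''; your explicit handling of the degenerate $\bnd P=\emptyset$ case is subsumed in the paper by the observation $P_u\neq G$.
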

\begin{proof}
  First suppose that $G$ is strongly connected. Then, $G$ is clearly connected.
  Moreover, by Lemma~\ref{l:bnd-cert}, $\bnd{\rdiv}$ is strongly connected
  in $X$.
  For contradiction, suppose that for some $P\in \rdiv$ and $u,v\in V(P)$, $u$ cannot reach $v$ in
  $P\cup C_{\bnd{P}}$. By strong connectivity of $G$, there exists some path $Q=u\to v$ in $G$.
  Since $u$ cannot reach $v$ in $P$, $Q$ is not fully contained in $P$. As a result, $Q$
  can be expressed as $Q_1\cdot R\cdot Q_2$, where $Q_1=u\to a,Q_2=b\to v$ are fully contained in $P$,
  and $a,b\in \bnd{P}$. But there is a path $Z=a\to b$ in $C_{\bnd{P}}$, so there is
  a $u\to v$ path $Q_1\cdot Z\cdot Q_2$ in $P\cup C_{\bnd{P}}$, a contradiction.

  Now consider the ``$\impliedby$'' direction. 
  Suppose $G$ is connected, the vertices $\bnd{\rdiv}$ are strongly connected in $X$, and for all $P\in\rdiv$, $P\cup C_{\bnd{P}}$ is strongly
  connected. By Lemma~\ref{l:bnd-cert}, $\bnd{\rdiv}$ is strongly connected in $G$.
  Consider any $P\in \rdiv$ and let $x,y\in V(P)$.
  We first prove that there exists a path $x\to y$ in $G$.
  Indeed, if an $x\to y$ path exists in $P$, it also exists in $G$.
  Otherwise, since $P\cup C_{\bnd{P}}$ is strongly connected, there exists a
  path $Q=x\to y$ in $P\cup C_{\bnd{P}}$ that can be expressed
  as $Q_1\cdot R\cdot Q_2$, where $Q_1=x\to a$ and $Q_2=b\to y$ are 
  fully contained in $P$ and $a,b\in \bnd{P}$.
  But since $a,b\in \bnd{\rdiv}$, by strong connectivity of $\bnd{\rdiv}$, there exists a path $R'=a\to b$ in $G$.
  Since $Q_1,Q_2\subseteq G$, $Q_1\cdot R'\cdot Q_2$ is an $x\to y$ path in~$G$.
  
  Now take arbitrary $u,v\in V(G)$. 
  If there exists a piece in $\rdiv$ containing both $u$ and $v$, then we have
  already proved that there exists a path $u\to v$ in $G$.
  Otherwise, let $P_u,P_v$, $P_u\neq P_v$, be some pieces of $\rdiv$ containing $u,v$, respectively.
  We have $P_u\neq G$ and $P_v\neq G$. Since $G$ is connected, $P_u$
  has at least one boundary vertex $a\in \bnd{P_u}$.
  Similarly, $P_v$ has at least one boundary vertex $b\in \bnd{P_v}$.
  We have proved that there exist paths $u\to a$ and $b\to v$ in $G$.
  But also $a,b\in \bnd{\rdiv}$, so by the strong connectivity of $\bnd{\rdiv}$, there exists a path $a\to b$ in $G$ as well.
  We conclude that there exists a path $u\to v$ in $G$.
  Since $u,v$ were arbitrary, $G$ is indeed strongly connected.
\end{proof}

The worst-case update time of the data structure is $\Ot(r+n/\sqrt{r})+n^{o(1)}$. By setting $r=n^{2/3}$,
we obtain the following.
\tplanarsccon*

\section{Dynamic strongly connected components}\label{s:planar-sccs}
The approach we take for maintaining strong connectivity in planar graphs does not easily generalize
even to dynamic SCCs counting.
This is the case for the following reason. Even if the piece $P$ is fixed (static),
there can be possibly an exponential number of different
assignments of the vertices $\bnd{P}$ to the SCCs in $G$ (when the other pieces are subject
to changes), whereas for dynamic SC, a non-trivial situation arises only when all of $\bnd{P}$
lies within a single SCC.
In order to achieve sublinear update time,
for any assignment we need to be able to count the SCCs
fully contained in~$P$ in time sublinear in $r$
after preprocessing $P$ in only \emph{polynomial} (and not exponential) time.

The following notion will be crucial for all our developments.
\begin{definition}
  Let $P\in\rdiv$, and let $A\subseteq \bnd{P}$. A \emph{path net} $\Pi_P(A)$ induced by $A$ is the set of vertices of~$P$ that lie on some directed path in $P$ connecting some two elements of $A$.
\end{definition}
In other words, the path net $\Pi_P(A)$ contains vertices $v\in V(P)$ such that $v$ can reach $A$ and can be reached from $A$ in $P$. We call a path net $\Pi_P(A)$ \emph{closed} if $A=\Pi_P(A)\cap \bnd{P}$, that is, there are no boundary vertices of $P$ outside $A$ that can reach and can be reached from $A$.

The following key lemma relates a piece's path net to the SCCs of $G$.

\begin{lemma}\label{l:scc-reach}
  Let $S$ be an SCC of $G$ containing at least one boundary vertex of $P$,
  i.e., $S\cap \bnd{P}\neq\emptyset$.
  Then the path net $\Pi_P(S\cap \bnd{P})$ is closed and equals $S\cap V(P)$.
\end{lemma}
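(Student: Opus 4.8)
The plan is to prove the two assertions --- that $\Pi_P(S\cap\bnd{P}) = S\cap V(P)$ and that this path net is closed --- in that order, using only the fact that $P\subseteq G$ (so paths in $P$ are paths in $G$) together with the definition of an SCC as a maximal strongly connected vertex set. Write $A := S\cap\bnd{P}$; by hypothesis $A\neq\emptyset$, so fix some $a_0\in A$.

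First I would show $\Pi_P(A)\subseteq S\cap V(P)$. Take any $v\in\Pi_P(A)$. By definition there are $a,b\in A$ and a directed path in $P$ from $a$ to $v$ and one from $v$ to $b$; concatenating, $v$ lies on a directed path in $P$ from $a$ to $b$. Since $P\subseteq G$, we get $a\to v$ and $v\to b$ paths in $G$. Because $a,b\in S$ and $S$ is strongly connected in $G$, there is also a $b\to a$ path in $G$, hence $v$ can reach $a$ (via $b$) and be reached from $a$ in $G$; by maximality of the SCC $S$, this forces $v\in S$. As $v\in V(P)$ trivially, $v\in S\cap V(P)$.

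For the reverse inclusion $S\cap V(P)\subseteq\Pi_P(A)$, here is where the main subtlety lies, and I would spend the most care on it. Take $v\in S\cap V(P)$. I must exhibit $a,b\in A = S\cap\bnd{P}$ and a directed path in $P$ (not merely in $G$!) from $a$ through $v$ to $b$. Since $v,a_0\in S$ there is an $a_0\to v$ path $Q$ and a $v\to a_0$ path $Q'$ in $G$. The cycle $Q\cdot Q'$ need not stay inside $P$, so I would walk along $Q$ from its start $a_0$ (which is in $V(P)$, indeed in $\bnd{P}$) and let $a$ be the \emph{last} vertex on $Q$ before $v$ that lies in $\bnd{P}$ such that the portion of $Q$ from $a$ to $v$ stays within $P$ --- more precisely, consider the suffix of $Q$ ending at $v$ and take its maximal final segment contained in $P$; its first vertex $a$ is either $a_0$ itself or a boundary vertex of $P$ (since the segment was entered from outside $P$, and only boundary vertices are shared between pieces, so leaving/entering $P$ happens at $\bnd P$). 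In either case $a\in\bnd P$, and $a\in S$ because $a$ lies on the $G$-path $Q$ between two vertices of $S$ and reaches/is reached by $S$ --- so $a\in A$, and there is an $a\to v$ path inside $P$. Symmetrically, working with $Q'$, there is a $v\to b$ path inside $P$ with $b\in A$. Concatenating gives a directed path in $P$ from $a$ through $v$ to $b$ with $a,b\in A$, so $v\in\Pi_P(A)$ by definition. This establishes $\Pi_P(A) = S\cap V(P)$.

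Finally I would verify closedness, i.e., $A = \Pi_P(A)\cap\bnd{P}$. One inclusion is immediate: $A\subseteq\bnd{P}$ by definition, and $A\subseteq\Pi_P(A)$ since each $a\in A$ lies on the trivial path from $a$ to $a$ (recall the paper allows single-vertex paths) --- or, if $|A|\ge 2$, on a path witnessed within $P$ as just constructed; either way $A\subseteq\Pi_P(A)\cap\bnd{P}$. For the other inclusion, suppose $x\in\Pi_P(A)\cap\bnd{P}$. By the equality already proved, $x\in S\cap V(P)$, so in particular $x\in S$; combined with $x\in\bnd{P}$ this gives $x\in S\cap\bnd{P} = A$. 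Hence $\Pi_P(A)\cap\bnd{P}\subseteq A$, completing the proof. The only point requiring genuine attention is the "walk along the $G$-path and extract a $P$-internal subpath anchored at boundary vertices" argument for the reverse inclusion; everything else is bookkeeping with the SCC-maximality property.
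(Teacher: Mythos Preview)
Your proof is correct and essentially matches the paper's argument: both directions of the equality are proved the same way, with the key step being to take a $G$-path between $v\in S\cap V(P)$ and a fixed boundary vertex of $S$, then extract its maximal subpath inside $P$, whose new endpoint must lie in $\bnd{P}$ (and hence in $A$). The only cosmetic difference is that the paper proves closedness first by a direct contradiction, whereas you prove the equality $\Pi_P(A)=S\cap V(P)$ first and then read off closedness as $\Pi_P(A)\cap\bnd{P}=S\cap V(P)\cap\bnd{P}=S\cap\bnd{P}=A$; your ordering is arguably a bit cleaner.
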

\begin{proof}
  Let us first argue that $\Pi_P(S\cap \bnd{P})$ is closed.
  If it was not, there would exist $b\in \bnd{P}\setminus S$ such that there exist
  paths $b\to (S\cap \bnd{P})$ and $(S\cap \bnd{P})\to b$ in $P$.
  It follows that $b$ can reach and be reached from $S$ in $G$, i.e., $b$
  is strongly connected with $S$. Hence, $b\in S$, a contradiction.

  Let $v\in \Pi_P(S\cap \bnd{P})$.
  Since $v$ can reach and can be reached from $S\cap\bnd{P}$ in $P$,
  then it is indeed strongly connected with $S$ in $G$, since the vertices $S$ are strongly connected in~$G$.
  So $v\in S\cap V(P)$.

  Now let $v\in S\cap V(P)$.
  Pick any $b\in S\cap\bnd{P}$ (possibly $b=v$ if $v\in \bnd{P}$). There exists
  paths $R=v\to b$ and $Q=b\to v$ in~$G$. Note that
  $R$ has some prefix $R_1=v\to a$ that is fully contained in~$P$ and $a\in \bnd{P}$.
  Similarly, $Q$ has a suffix $Q_1=c\to v$ that is fully contained in $P$ and $c\in\bnd{P}$.
  Since there exists paths $v\to a$, $a\to b$, $b\to c$, $c\to v$ in $G$, vertices $a,b,c$
  are strongly connected in~$G$. So $a,c\in S\cap \bnd{P}$. The paths $R_1,Q_1$
  certify that $v$ can be reached from and can reach $S\cap\bnd{P}$ in $P$.
  Therefore, $v\in \Pi_P(S\cap \bnd{P})$ as desired.
\end{proof}

If an SCC $S$ is as in Lemma~\ref{l:scc-reach}, then since the vertices $\bnd{P}$ might be shared with other pieces of~$\rdiv$, $\Pi_P(S\cap \bnd{P})\setminus \bnd{P}$
constitutes the vertices of $S$ contained \emph{exclusively in the piece~$P$}.
As there are only $\Ot(n/\sqrt{r})$ boundary vertices through all pieces, their affiliation
to the SCCs of $G$ can be derived from the (SCCs of the) certificate graph $X=\bigcup_{P\in\rdiv} X_P$ (defined and maintained as in Section~\ref{s:planar-conn}),
i.e., they may be handled efficiently separately.
Consequently, being able to efficiently aggregate labels -- or report the elements -- of the sets of the form
$\Pi_P(A)\setminus \bnd{P}$ (where $\Pi_P(A)$ is closed) is the key to obtaining an efficient implicit representation
of the SCCs of~$G$. 
Our main technical contribution (Theorem~\ref{t:ds-planar}) is the \emph{path net data structure} enabling precisely that. The data structure requires a rather large $\Ot(r^3)$ preprocessing time
but achieves the goal by supporting queries about $A\subseteq \bnd{P}$ in near-optimal $\Ot(|A|)$ time. Formally, we show:

\begin{restatable}{theorem}{tdsplanar}\label{t:ds-planar}
  Let $P\in \rdiv$ and let $\alpha:V(P)\to\mathbb{R}$ be a weight function.
  In $\Ot(r^{3})$ time one can construct a data structure satisfying the following.

  Given a subset $A\subseteq \bnd{P}$ such that $\Pi_P(A)$ is closed,  in $\Ot(|A|)$ time one can:
  \begin{itemize}[topsep=4pt,itemsep=2pt]
    \item create an \emph{iterator} that enables listing elements of $\Pi_P(A)\setminus \bnd{P}$ in $O(1)$ time per element,
    \item aggregate weights over $\Pi_P(A)\setminus \bnd{P}$, i.e., compute $\sum_{v\in \Pi_P(A)\setminus \bnd{P}}\alpha(v)$.
  \end{itemize}
\end{restatable}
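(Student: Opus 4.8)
We pass to the condensation $\widehat{P}$ of $P$: contract every strongly connected component of $P$ to obtain a DAG on $O(r)$ nodes, annotating each node $x$ with $\widehat{\alpha}(x)$, the total $\alpha$-weight of the non-boundary vertices of $P$ inside $x$; write $x_a$ for the node containing a boundary vertex $a$. Call a node \emph{terminal} if it contains a boundary vertex; there are $O(\sqrt{r})$ of them. For $A\subseteq\bnd{P}$ with $\Pi_P(A)$ closed, let $T_A=\{x_a:a\in A\}$. Unfolding the definitions, $v\in\Pi_P(A)$ iff the node of $v$ is both reachable from some node of $T_A$ and able to reach some node of $T_A$; equivalently, writing $x\preceq y$ when $x$ reaches $y$ in $\widehat{P}$ and $[x,y]:=\{z:x\preceq z\preceq y\}$ (the nodes lying on $x\to y$ paths, empty unless $x\preceq y$), the nodes of $\Pi_P(A)$ are exactly $\bigcup_{x,y\in T_A}[x,y]$. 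Closedness of $\Pi_P(A)$ is equivalent to $T_A$ being \emph{order-convex} among the terminals (no terminal node sandwiched between two nodes of $T_A$ escapes $T_A$), plus the routine condition that no node mixes boundary vertices of $A$ and of $\bnd{P}\setminus A$. Thus the task reduces to the following DAG problem: preprocess $\widehat{P}$ and $\widehat{\alpha}$ so that, given a convex $T_A$, one can aggregate $\widehat{\alpha}$ over $\bigcup_{x,y\in T_A}[x,y]$ (and symmetrically enumerate its nodes) in $\Ot(|T_A|)$ time; a precomputed $\bnd{P}$-correction then removes the contribution of boundary vertices.

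The plan is to represent the net as a union of few \emph{canonical intervals} fixed during preprocessing. For each boundary vertex $a$ we compute, inside $\widehat{P}$, a reachability tree spanning $\{z:x_a\preceq z\}$ and a co-reachability tree spanning $\{z:z\preceq x_a\}$. Exploiting that the boundary vertices of $P$ lie on $O(1)$ Jordan curves, so that the out-reach (resp.\ in-reach) regions of two incomparable terminals meet in a non-crossing fashion, the plan is to show these $O(\sqrt{r})$ trees amalgamate into $O(1)$ global rooted trees in which each individual reachability (resp.\ co-reachability) set is a union of $O(\log r)$ root-paths. Heavy-path-decomposing the $O(1)$ global trees yields $O(r)$ canonical paths; hence every $[x,y]$ is a union of $O(\log^2 r)$ canonical sub-paths, and for each canonical path we store the $\widehat{\alpha}$-prefix sums, the number of $\bnd{P}$-vertices in each prefix, and doubly linked pointers for enumeration. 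The bulk of the $\Ot(r^3)$ preprocessing goes into making the pieces mutually consistent: for each of the $O(r)$ ordered pairs $(a,b)$ of boundary vertices we additionally precompute, in $\Ot(r^2)$ time, the internal reachability of the slab $[x_a,x_b]$ and how it meets every canonical path, so that unions of such slabs can later be deduplicated quickly.

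A query proceeds as follows. We compute $T_A$ from $A$ in $\Ot(|A|)$ time, and using all-pairs reachability among the $O(\sqrt{r})$ terminals (precomputed, $O(r)$ space) we extract the minimal and maximal nodes of $T_A$ and their comparable pairs; at this point the plan is to invoke order-convexity of $T_A$ together with the non-crossing structure to show that $\bigcup_{x,y\in T_A}[x,y]$ already equals the union of only $\Ot(|T_A|)$ slabs $[x,y]$, hence of $\Ot(|T_A|)$ canonical sub-paths in total. Finally we sort these $\Ot(|T_A|)$ canonical sub-paths by their position in the $O(1)$ global trees and sweep over them, using the precomputed prefix sums and the laminar nesting of root-paths to add each node's weight exactly once and to subtract the $\bnd{P}$-vertices; the enumeration iterator walks the same sub-paths in this order and uses the linked-list pointers to skip over nodes already reported, spending $O(1)$ per element.

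The step I expect to be the main obstacle is the structural one underlying the second and third paragraphs: showing that the planar, $O(1)$-holes embedding forces the a priori exponentially many reachability and co-reachability sets $\{z:x_a\preceq z\}$ and $\{z:z\preceq x_a\}$ into $O(1)$ consistently-nested trees, and that order-convexity of $T_A$ collapses the potentially $\Theta(|T_A|^2)$ slabs $[x,y]$ into $\Ot(|T_A|)$ of them. The remaining ingredients---the condensation, the prefix-sum bookkeeping, the sweep---are routine; but without this structural collapse the query degrades either to $\Omega(r)$, from having to scan the whole net, or to $\Omega(|T_A|^2)$, from enumerating all pairs of active terminals, and the target $\Ot(|A|)$ bound is lost.
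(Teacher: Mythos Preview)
Your reduction to the condensation $\widehat{P}$ and the reformulation of $\Pi_P(A)$ as $\bigcup_{x,y\in T_A}[x,y]$ are correct and match the paper's first step. The gap is in the structural claims that carry the rest of the argument. The assertion that the $O(\sqrt{r})$ reachability and co-reachability trees ``amalgamate into $O(1)$ global rooted trees in which each individual reachability set is a union of $O(\log r)$ root-paths'' is not proven and does not reflect how single-face reachability actually behaves: even on one hole the known structure (see Lemma~\ref{l:reach-partition}) needs $O(|Y|)$ interval systems with each vertex appearing in $O(\log|Y|)$ of them, not one tree. More fundamentally, an interval $[x,y]=\{z:x\preceq z\preceq y\}$ in a planar DAG is in general a full ``diamond''-shaped subgraph of size up to $\Theta(r)$, not a union of $O(\log^2 r)$ paths in any fixed tree on $V(\widehat{P})$; heavy-path decomposition of a tree buys nothing because $[x,y]$ is not a tree path to begin with. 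The claimed collapse from $\Theta(|T_A|^2)$ slabs to $\Ot(|T_A|)$ via order-convexity is likewise unsubstantiated: convexity of $T_A$ among \emph{terminals} says nothing about how the \emph{non-terminal} nodes of different slabs overlap, which is exactly what you must control to deduplicate weights without scanning the net. You yourself flag this step as the main obstacle, and indeed the proposal does not overcome it.

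The paper's route is geometric rather than order-theoretic. It exploits planarity through the embedding, not through the reachability partial order: after perturbing so that shortest paths $\pi_{u,v}$ are unique and hence pairwise non-crossing (Lemma~\ref{l:crossing}), the query region is cut, polygon-triangulation style, along paths $\pi_{\{t_1,x_i\}}$ into \emph{base instances} in which every interior vertex lies on paths between at most two \emph{consecutive} boundary pairs (Lemmas~\ref{l:aux-base} and~\ref{l:fivepaths}). That $O(1)$-locality per vertex is what lets $O(|\bnd{P}|^5)$ precomputed $5$-tuple answers cover everything, and the multiple-hole case is reduced to the single-hole case by a separate recursive cut along a hole-to-hole path. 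Your plan never reaches a statement of this kind---that each net vertex is governed by $O(1)$ nearby boundary pairs---and without it the $\Ot(|A|)$ query bound does not follow.
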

\begin{remark}
  We do not require using subtractions to compute the aggregate weights.
  In fact, the data structure of Theorem~\ref{t:ds-planar} can be easily modified to aggregate weights coming from any
  semigroup, e.g., one can compute the max/min weight in $\Pi_P(A)\setminus \bnd{P}$ within these bounds.
\end{remark}

Our high-level strategy is to maintain the certificates and path net data structures accompanying individual pieces along with the $r$-division.
Roughly speaking, to obtain the information about the SCCs of $G$ beyond how the partition of $\bnd{\rdiv}$ into SCCs looks like, we will query the path net data structures for each piece $P$ with the sets $A$ equal to the SCCs of $X$ having non-empty intersection with $\bnd{P}$.
We prove Theorem~\ref{t:ds-planar} later on, in Section~\ref{s:ds-planar}.

In the remaining part of the section, we explain in detail
how, equipped with Theorem~\ref{t:ds-planar}, a dynamic (implicit) strongly connected components data structure can be obtained.
As in Section~\ref{s:planar-conn}, we maintain an $r$-division $\rdiv$ dynamically,
and maintain sparse reachability certificates $X_P$
along with the set $\sccs_X$ of SCCs of $X=\bigcup_{P\in\rdiv} X_P$.
Moreover, for each $P$ we store the strongly connected components $\sccs_P$ of $P$.
Let $\sccs_{\bnd{P}}$ be the elements of $\sccs_P$ that contain a boundary vertex,
and $\sccs_{P\setminus\bnd{P}}$ the elements of $\sccs_P$ that do not.
Clearly, we have $\sccs_P=\sccs_{\bnd{P}}\cup \sccs_{P\setminus\bnd{P}}$ and
$\left(\bigcup \sccs_P\right)\cap \left(\bigcup\sccs_{P\setminus\bnd{P}}\right)=\emptyset$.

For each piece $P\in \rdiv$, we additionally store a path net data structure $\mathcal{D}_P$ of Theorem~\ref{t:ds-planar}
with an appropriately defined weight function (to be chosen depending on the application later).
Note that for a piece $P$, all the auxiliary data structures accompanying $P$ that we have defined can be computed in $\Ot(r^{3})$ time.
We now consider the specific goals that can be achieved this way.

\paragraph{Finding the largest SCC.} Denote by $S^*$ the largest SCC of $G$. To be able to identify
$S^*$, and e.g., compute its size, we proceed as follows.
For each piece $P$, we additionally maintain the largest
SCC $S^*_P$ of $P$. The sizes of all the SCCs of $P$, in particular the size of $S^*_P$,
can be easily found and stored after computing $\sccs_P$.

Note that if the largest SCC $S^*$ of $G$ is not contained entirely in any individual piece $P$
(and thus is larger than $\max_{P\in\rdiv}|S^*_P|$),
it has to intersect $\bnd{\rdiv}$.
More specifically, in this case for each piece $P$ such that $S^*\cap V(P)\neq\emptyset$,
we have $S^*\cap \bnd{P}\neq\emptyset$.

Recall that by Lemma~\ref{l:bnd-cert}, $X=\bigcup_{P\in\rdiv} X_P$ preserves the strong connectivity relation
on the vertices $\bnd{\rdiv}$.
Therefore, if $S^*$ intersects $\bnd{\rdiv}$, it has to contain $B\cap \bnd{\rdiv}$ for some SCC $B$ of $X$.
For any such $B$, we can compute the size of the SCC $S_B$ of $G$ satisfying $B\cap\bnd{\rdiv}\subseteq S_B$
as follows. First of all, $|S_B\cap \bnd{\rdiv}|=|B\cap \bnd{\rdiv}|$ since $B$ is an SCC of $X$.
It is thus enough to compute, for all $P\in\rdiv$, $|S_B\cap (V(P)\setminus \bnd{P})|$.
Since the sets $V(P)\setminus \bnd{P}$ are pairwise disjoint across the pieces, by adding these values, we will get the desired size $|S_B|$.

We have already argued that if $B\cap \bnd{P}=\emptyset$, then $S_B\cap V(P)=\emptyset$.
If, on the other hand, $B\cap\bnd{P}$ is non-empty, by Lemma~\ref{l:scc-reach}, if we use the weight function $\alpha(v)\equiv 1$ in the piecewise
data structures $\mathcal{D}_P$ of Theorem~\ref{t:ds-planar}, we can compute
  $|S_B\cap (V(P)\setminus \bnd{P})|=\sum_{v\in \Pi_P(B\cap \bnd{P})\setminus\bnd{P}}\alpha(v)$
in $\Ot(|B\cap \bnd{P}|)$ time using the input set $A:=B\cap \bnd{P}$.
We conclude that the sizes $S_B$ for all $B\in\sccs_X$ can be computed in
time
\begin{equation}\label{eq:sum-over-pieces}
  \Ot\left(\sum_{B\in\sccs_X}\sum_{\substack{P\in\rdiv\\B\cap\bnd{P}\neq\emptyset}}|B\cap\bnd{P}|\right)=\Ot\left(\sum_{P\in\rdiv}\sum_{\substack{B\in\sccs_X\\B\cap\bnd{P}\neq\emptyset}}|B\cap\bnd{P}|\right)=\Ot\left(\sum_{P\in\rdiv} |\bnd{P}|\right)=\Ot(n/\sqrt{r}).
\end{equation}
Finally, $S^*$ is either equal to the largest $S_B$ for $B\in \sccs_X$, or
the largest $S^*_P$ (through $P\in\rdiv$).
The latter is the case if $\max_{B\in \sccs_X}|S_B|<\max_{P\in\rdiv}|S^*_P|$.
Which case we fall into is easily decided once all the $O(n/\sqrt{r})$ sizes $|S_B|$ are computed.

\paragraph{Accessing the SCC of a specified vertex.}
Suppose first that we know the SCC $S_v$ of $G$ containing a query vertex $v$,
and additionally whether $S_v$ intersects $\bnd{\rdiv}$.
If $S_v\cap \bnd{\rdiv}=\emptyset$, then $v$ is a vertex of a unique piece $P$,
and $S_v\in \sccs_{P\setminus \bnd{P}}$.
In this case, we can clearly compute the size of $S_v$ and report the elements
of $S_v$ in $O(1)$ time since~$S_v\in\sccs_P$ is stored explicitly.

Otherwise, if $S_v\cap \bnd{\rdiv}\neq\emptyset$, we reuse the information that
we have computed for finding the largest SCC. 
We have already described how to compute the sizes of all the SCCs $S$ of $G$
intersecting $\bnd{\rdiv}$ (in particular $S_v$)
along with these respective intersections
in $\Ot(n/\sqrt{r})$ time upon update.
Moreover, for all $P\in\rdiv$ we have computed
$|S\cap (V(P)\setminus \bnd{P})|$ using the data structure $\mathcal{D}_P$ of Theorem~\ref{t:ds-planar}. As a result, we can store,
for each such $S$, a subset $L(S)\subseteq \rdiv$ of pieces $P$ such that
$S\cap (V(P)\setminus \bnd{P})\neq\emptyset$.
Recall that for all $P\in L(S)$, we can also use $\mathcal{D}_P$ to create
an iterator for reporting the elements of $S\cap (V(P)\setminus \bnd{P})$ in $O(1)$ time per element.
So, in order to efficiently report elements of any such $S$,
we first report the vertices of $S\cap \bnd{\rdiv}$, and then the elements of each
$S\cap (V(P)\setminus \bnd{P})$ for subsequent pieces $P\in L(S)$.
Indeed, one needs only $O(1)$ worst-case time to find each subsequent vertex of $S$.

Finally, we are left with the task of of finding the SCC $S_v$ of a query vertex $v\in V$.
It is not clear how to leverage path net data structures for that
in the case when $S_v$ intersects $\bnd{\rdiv}$.
Instead, we use the data structure
of~\cite{Charalampopoulos22} in a black-box way.
That data structure handles fully dynamic edge updates in $\Ot(n^{4/5})$ worst-case time,
and provides $O(\log^2{n})$ worst-case time access to consistent (for queries issued between any two subsequent updates) SCC identifiers
of individual vertices\footnote{The query time of that data structure can be easily reduced to $O(\log{n}\cdot\log\log{n})$
without affecting the $\Ot(n^{4/5})$ update bound
if one simply replaces the classical MSSP data structure~\cite{MSSP}
used internally for performing point location queries in additively weighted Voronoi diagrams~\cite{GawrychowskiMWW18}
with the MSSP data structure of~\cite{LongP21}.}.
Using~\cite{Charalampopoulos22}, after every update we find the identifiers $I$ of
the SCCs of vertices $\bnd{\rdiv}$ in $G$ in $\Ot(n/\sqrt{r})$ time.
Now, to find the SCC of $v$ upon query, we find
the identifier $i_v$ of the SCC containing~$v$ in $O(\log^2{n})$ time.
If $i_v\in I$ (which can be tested in $O(\log{n})$ time), we obtain that $v$ is in an SCC of~$G$ intersecting
$\bnd{\rdiv}$ and some vertex $b_v$ from the intersection. $b_v$ can be in turn used to access
$L(S_v)$ and thus enable reporting the elements of $S_v$.
Otherwise, if $i_v\notin I$, $S\cap\bnd{\rdiv}=\emptyset$ and thus $S$ equals
the unique SCC from $S_{P\setminus\bnd{P}}$ containing $v$ in the unique piece $P$ containing $v$.

\paragraph{Counting strongly connected components.} Let us separately count SCCs $\sccs_{\bnd{\rdiv}}$ that intersect
$\bnd{\rdiv}$ and those that do not.
The former can be counted in $\Ot(n/\sqrt{r})$ time by counting
the SCCs of $X$ that intersect $\bnd{\rdiv}$ (that we maintain).
The latter can be computed as follows. Consider the sum $\Phi=\sum_{P\in \rdiv}|\sccs_{P\setminus\bnd{P}}|$.
If we wanted the sum $\Phi$ to count the SCCs not intersecting $\bnd{\rdiv}$,
then an SCC $S\in \sccs_{P\setminus\bnd{P}}$ contributes to the sum
unnecessarily precisely when $S$ is not an SCC of $G$.
To see that, note that if an SCC $S$ of $P$ is not an SCC of $G$, it has
to be a part of another SCC $S'$ of $G$ that also contains vertices
of other pieces, i.e., $S'$ intersects $\bnd{\rdiv}$.
From Lemma~\ref{l:scc-reach}, we conclude:

\begin{corollary}\label{cor:scc-reach}
  An SCC $S\in \sccs_{P\setminus\bnd{P}}$ is not an SCC of $G$ iff there exists
  (precisely one) SCC $B$ of~$X$ such that $S\subseteq \Pi_P(B\cap \bnd{P})$ (or equivalently, such that $S\cap \Pi_P(B\cap\bnd{P})\neq\emptyset$).
\end{corollary}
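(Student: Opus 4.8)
The plan is to prove Corollary~\ref{cor:scc-reach} as a direct consequence of Lemma~\ref{l:scc-reach}, by characterizing when an SCC $S\in\sccs_{P\setminus\bnd P}$ fails to be an SCC of $G$ in terms of the SCCs of the certificate graph $X$. First I would unfold the two directions of the ``iff''. For the forward direction, suppose $S\in\sccs_{P\setminus\bnd P}$ is not an SCC of $G$. Since $S$ is a maximal strongly connected set in $P$, and $P\subseteq G$, the set $S$ is still strongly connected in $G$; hence $S$ must be a strict subset of the SCC $S'$ of $G$ containing it. I would argue that $S'$ must intersect $\bnd P$: if $S'\cap\bnd P=\emptyset$ then, since any path in $G$ between two vertices of $S'$ that leaves a piece must pass through a boundary vertex of that piece, one can show $S'$ is confined to $P$ (the vertices of $S'$ reachable from $S$ without leaving $P$ form a strongly connected set in $P$ strictly larger than $S$ unless $S'\subseteq V(P)$, contradicting maximality of $S$ in $P$) — but $S'\supsetneq S$ and $S$ is a maximal SCC of $P$, contradiction. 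So $S'\cap\bnd P\neq\emptyset$, which means $S'$ intersects $\bnd\rdiv$, and by Lemma~\ref{l:bnd-cert} the set $S'\cap\bnd\rdiv$ is strongly connected in $X$, hence contained in some SCC $B$ of $X$. Applying Lemma~\ref{l:scc-reach} to $S'$ and $P$ gives $\Pi_P(S'\cap\bnd P)=S'\cap V(P)\supseteq S$; and since $S'\cap\bnd\rdiv\subseteq B$ (and in fact $B\cap\bnd\rdiv=S'\cap\bnd\rdiv$ because $B$ is an SCC of $X$ and strong connectivity on $\bnd\rdiv$ is preserved), we get $B\cap\bnd P=S'\cap\bnd P$, so $S\subseteq\Pi_P(B\cap\bnd P)$.

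For the reverse direction, suppose there is an SCC $B$ of $X$ with $S\cap\Pi_P(B\cap\bnd P)\neq\emptyset$ (the weaker of the two stated forms, which I'd show implies the stronger). Pick $v\in S\cap\Pi_P(B\cap\bnd P)$. By definition of the path net, $v$ lies on a directed path in $P$ between two elements of $B\cap\bnd P$, so $v$ reaches some $a\in B\cap\bnd P$ and is reached from some $c\in B\cap\bnd P$, all within $P\subseteq G$. Since $a,c\in B$ and $B$ is strongly connected in $X$, and by Lemma~\ref{l:bnd-cert} strong connectivity of boundary vertices in $X$ certifies strong connectivity in $G$, there is a path $a\to c$ in $G$; combined with $c\to v$ and $v\to a$ this shows $v$ is strongly connected in $G$ with $a$. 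But $a\in\bnd P\subseteq\bnd\rdiv$, so the SCC of $G$ containing $v$ contains $a\notin S$ (as $S\subseteq V(P)\setminus\bnd P$), hence is strictly larger than $S$, so $S$ is not an SCC of $G$. To upgrade $S\cap\Pi_P(B\cap\bnd P)\neq\emptyset$ to $S\subseteq\Pi_P(B\cap\bnd P)$: every $u\in S$ is strongly connected with the above $v$ in $P$ (since $S\in\sccs_P$), hence $u$ also reaches $a$ and is reached from $c$ in $P$, so $u\in\Pi_P(B\cap\bnd P)$ as well.

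Finally I would note uniqueness: the parenthetical ``precisely one'' follows because two distinct SCCs $B_1,B_2$ of $X$ have disjoint vertex sets, so $\Pi_P(B_1\cap\bnd P)\cap\bnd P=B_1\cap\bnd P$ and $\Pi_P(B_2\cap\bnd P)\cap\bnd P=B_2\cap\bnd P$ are disjoint (using that the path nets in question are closed by Lemma~\ref{l:scc-reach} applied to the corresponding SCCs of $G$, or arguing directly that $\Pi_P(B_1\cap\bnd P)$ and $\Pi_P(B_2\cap\bnd P)$ cannot share a vertex without merging $B_1$ and $B_2$ into one SCC of $G$ and hence of $X$). So at most one $B$ can have $S\subseteq\Pi_P(B\cap\bnd P)$.

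The main obstacle I anticipate is the forward direction's claim that the SCC $S'\supsetneq S$ of $G$ must intersect $\bnd P$ — i.e., ruling out the case where $S'$ extends $S$ but stays entirely inside $V(P)$. This needs the maximality of $S$ as an SCC of $P$: if $S'\subseteq V(P)$ then $S'$ would be a strongly connected subset of $P$ strictly containing $S$, contradicting $S\in\sccs_P$; so in fact $S'\not\subseteq V(P)$, and then any $x\in S'\setminus V(P)$ together with the fact that $G$-paths between $S$ and $x$ cross $\bnd P$ forces $S'\cap\bnd P\neq\emptyset$. Making this airtight is the one place requiring a little care; everything else is bookkeeping on top of Lemmas~\ref{l:bnd-cert} and~\ref{l:scc-reach}.
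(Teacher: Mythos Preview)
Your proposal is correct and follows the same approach as the paper: the paper itself gives only a one-sentence sketch of the forward direction (``if an SCC $S$ of $P$ is not an SCC of $G$, it has to be a part of another SCC $S'$ of $G$ that also contains vertices of other pieces, i.e., $S'$ intersects $\bnd{\rdiv}$'') and then invokes Lemma~\ref{l:scc-reach}, leaving the reverse direction, the equivalence of the two formulations, and the uniqueness implicit. You have simply spelled out these details, all via Lemmas~\ref{l:bnd-cert} and~\ref{l:scc-reach} as intended.
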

As a result, we can count the number of SCCs in $G$ that do not
intersect $\bnd{\rdiv}$ by subtracting from $\Phi$, for each $P\in\rdiv$,
and each $B\in \sccs_X$
the number $c_{P,B}$ of SCCs in $\sccs_{P\setminus\bnd{P}}$ that
intersect $\Pi_P(B\cap \bnd{P})$.
To this end, we can use the data structure $\mathcal{D}_P$ of Theorem~\ref{t:ds-planar} built
upon~$P$ (and maintained as described before) with a weight function $\alpha$ on $V(P)$
assigning $1$
to an arbitrary single vertex $v_S$ of each SCC $S\in \sccs_{P\setminus\bnd{P}}$,
and $0$ to all other vertices.
By Corollary~\ref{cor:scc-reach}, with such a weight function, $c_{P,B}=\sum_{v\in \Pi_P(B\cap\bnd{P})\setminus\bnd{P}}\alpha(v)$ can be computed in $\Ot(|B\cap \bnd{P}|)$ time using $\mathcal{D}_P$.
Consequently, similarly as in~\eqref{eq:sum-over-pieces}, over all $B\in \sccs_X$, and $P\in\rdiv$, computing all the values $c_{P,B}$ will take $\Ot(n/\sqrt{r})$
time.
As mentioned before, the SCC count is obtained by subtracting those from $\Phi$
and adding the result to the count of $\sccs_{\bnd{\rdiv}}$.

Depending on the application, the worst-case update time of the data structure is $\Ot(n/\sqrt{r}+r^{3})$ or
$\Ot(n/\sqrt{r}+r^{3}+n^{4/5})$. The bound is optimized for $r=n^{2/7}$ and this yields the following.
\planarsccs*

\section{The path net data structure}\label{s:ds-planar}
This section is devoted to describing the below key component
of our dynamic SCCs data structure.
\tdsplanar*

\subsection{Overview}
\cite{Charalampopoulos22}~showed that for any SCC $S$ of $G$,
$V(P)\cap S$ forms an intersection of two cells coming from two
carefully prepared additively weighted Voronoi diagrams on $P$ with sites $\bnd{P}$.
As a result, they could use Voronoi diagram point location mechanism~\cite{GawrychowskiMWW18, LongP21}
for testing in $O(\polylog{n})$ time whether a query vertex lies in such an intersection.
If we tried to follow this approach, we would need to be able to aggregate/report vertices in such
intersections of cells coming from two seemingly unrelated Voronoi diagrams.
This is very different from just testing membership, and it is not clear whether this can be done efficiently.

Instead, in order to prove Theorem~\ref{t:ds-planar}, we take a more direct approach. 
As is done typically, we first consider the situation when $A$ lies on a single face of $P$.
In the single-hole case, the first step is to reduce to the case when the input piece $P$ is acyclic;
note that if a vertex lies in the path net $\Pi_P(A)$, its entire SCC in $P$ does.
Acyclicity and appropriate perturbation~\cite{Erickson18} allows us to
pick a collection of paths $\pi_{u,v}$, for all $u,v\in \bnd{P}$, such that every two paths
in the collection are either disjoint or their intersection forms a single subpath of both.
This property makes the paths~$\pi_{u,v}$ particularly convenient to use for cutting
the piece $P$ into smaller non-overlapping parts.

More specifically, the paths $\pi_{u,v}$ are used to partition -- using a polygon triangulation-like procedure -- a queried net $\Pi_P(A)$
into regions in the plane with vertices $B\subseteq A$ bounded by either fragments of the face of $P$ 
containing $\bnd{P}$ or some paths $\pi_{u,v}$ for $u,v\in B$ (so-called \emph{base instances}).
A~base instance has a very special structure guaranteeing that for a given vertex $v\in V(P)\setminus \bnd{P}$,
there are \emph{only $O(1)$ pairs} $s,t\in B$ such that an $s\to v\to t$ path exists in $P$.
At the end, this crucial property can be used to reduce a base instance query $B$
even further to looking up $\Ot(|B|)$ preprocessed answers for base instances with at most $5$ vertices from $\bnd{P}$,
of which there are at most $\Ot(|\bnd{P}|^5)=\Ot(r^{5/2})$.
The precomputation of small base instances can be done in $\Ot(r^3)$ time.

For efficiently implementing the polygon triangulation-like partition procedure -- which
repeatedly cuts off base instances from the ``core'' part of the problem --
we develop a dynamic data structure for existential reachability
queries on a single face of a plane digraph (Lemma~\ref{l:any-reach}, proved in Section~\ref{a:anyreachproof}).
To this end, we leverage the single-face reachability characterization of~\cite{ItalianoKLS17}.

As the final step, we show a reduction from the case when $A$ can lie on $k$ faces on $P$
to the case when $A$ lies on $k-1$ faces. The reduction -- which eventually reduces the problem
to the single-hole case -- can blow up the piece's size by a constant factor. However, since there are only $O(1)$ holes
initially, the general case can still be solved only a constant factor slower than the single-hole case.

\subsection{Reducing to the acyclic case}\label{s:acyclic}
Recall that $P$ is a piece of an $r$-division with few holes $\rdiv$ and thus has size $O(r)$. Moreover, $\bnd{P}$ lies on $O(1)$
faces (\emph{holes}) of some supergraph $P^+$ of $P$ and $|\bnd{P}|=O(\sqrt{r})$.
For simplicity, we will not differentiate between the faces/holes of $P$ and $P^+$. Whenever
we refer to a hole $h$ of $P$, we mean a closed curve $h$ in the plane such that
whole embedding of $P$ lies weakly on a single side of~$h$, and only vertices $\bnd{P}$ may lie
on the curve $h$.

First of all, we compute the strongly connected components $\sccs_P$ of $P$.
Observe that from the point of view of supported queries, all vertices
of a single SCC $S\in\sccs_P$ are treated exactly the same:
when a single vertex $v\in S\setminus \bnd{P}$ is reported (or its weight $\alpha(v)$ is aggregated),
all other vertices from $S\setminus \bnd{P}$ are as well.
For each $S\in\sccs_P$ we precompute the aggregate weight $\alpha(S)=\sum_{v\in S\setminus\bnd{P}}\alpha(v)$
in $O(r)$ time.
We then contract each strongly connected component $S\in\sccs_P$
into a single vertex~$v_S$. Crucially, since each subgraph $P[S]$ is connected in the undirected
sense, contractions can be performed in an embedding preserving way (via a sequence of single-edge contractions), so that:
\begin{itemize}
  \item The obtained graph $P'$ is acyclic.
  \item The vertices $\bnd{P'}:=\{v_S:S\in\sccs_P, S\cap \bnd{P}\neq\emptyset\}$ lie on $O(1)$
    faces of (a supergraph of) $P'$.
\end{itemize}
Indeed, to see the latter property, label each $b\in\bnd{P}$ initially each with hole that $b$ lies on
and make non-boundary vertices unlabeled.
When the vertices merge, label the resulting vertex using one of the involved vertices' labels, if any of them has one. At the end there will be still $O(1)$ labels,
each vertex of $\bnd{P'}$ will have one of these labels, and for each label, all the vertices
holding that label will be incident to a single face of $P'$.
Clearly, $|P'|=O(r)$ and $|\bnd{P'}|=O(\sqrt{r})$.

\begin{restatable}{lemma}{lacyclic}\label{l:acyclic}
Suppose a data structure $\mathcal{D}'$ of Theorem~\ref{t:ds-planar} is built for the acyclic graph $P'$
with the weight function $\alpha'(v_S):=\alpha(S)=\sum_{v\in S\setminus \bnd{P}}\alpha(v)$.
Then, a query $A\subseteq \bnd{P}$ from Theorem~\ref{t:ds-planar} can be reduced, in $O(|A|)$ time,
  to a query about $A'=\{v_S:S\in\sccs_P,S\cap A\neq\emptyset\}$ issued to $\mathcal{D}'$.
\end{restatable}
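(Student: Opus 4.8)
\textbf{Proof proposal for Lemma~\ref{l:acyclic}.}

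The plan is to show that the path net of $A$ in $P$ corresponds, under the contraction map $v\mapsto v_{S(v)}$ (where $S(v)$ denotes the SCC of $P$ containing $v$), to the path net of $A'$ in $P'$, and that the quantities the data structure is asked to return can be recovered from $\mathcal{D}'$ with only $O(|A|)$ overhead. First I would observe that computing $A'$ from $A$ takes $O(|A|)$ time: each $b\in A$ maps to the vertex $v_{S(b)}$, and duplicates are removed, so $|A'|\le |A|=O(|A|)$. Also $A'\subseteq\bnd{P'}$ since each element of $A$ is a boundary vertex of $P$ and hence $v_{S(b)}\in\bnd{P'}$. I would also recall that contractions here preserve reachability \emph{in both directions}: for $u,v\in V(P)$, there is a $u\to v$ path in $P$ iff there is a $v_{S(u)}\to v_{S(v)}$ path in $P'$ — the forward direction because contracting an edge cannot destroy a path, the backward direction because a path in $P'$ lifts to a walk in $P$ by expanding each contracted vertex using the strong connectivity inside its SCC.

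Next I would verify the two correspondences that make the reduction work. First, $\Pi_P(A)$ is closed iff $\Pi_{P'}(A')$ is closed: by definition $\Pi_P(A)$ is closed means no $b\in\bnd{P}\setminus A$ both reaches and is reached from $A$ in $P$; translating via the reachability equivalence above, and noting that $v_{S(b)}\notin A'$ exactly when $S(b)\cap A=\emptyset$ (which holds precisely when $b\notin A$, by closedness bookkeeping — here I need that a boundary SCC of $P$ either lies entirely in $A$ or is disjoint from $A$, which follows because closedness of $\Pi_P(A)$ forbids a vertex of such an SCC outside $A$ from reaching/being reached by $A$ while another vertex of the same SCC, inside $A$, does), this is equivalent to $\Pi_{P'}(A')$ being closed. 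Second, and most importantly, I claim
\[
  v\in\Pi_P(A)\iff v_{S(v)}\in\Pi_{P'}(A').
\]
Indeed $v\in\Pi_P(A)$ means $v$ reaches some $a\in A$ and is reached from some $a'\in A$ in $P$; by the reachability equivalence this is the same as $v_{S(v)}$ reaching $v_{S(a)}\in A'$ and being reached from $v_{S(a')}\in A'$ in $P'$, i.e.\ $v_{S(v)}\in\Pi_{P'}(A')$. Consequently
\[
  \Pi_P(A)\setminus\bnd{P}=\bigcup_{\substack{S\in\sccs_P\\ v_S\in\Pi_{P'}(A')\setminus\bnd{P'}}}(S\setminus\bnd{P})
  \;\cup\;
  \bigcup_{\substack{S\in\sccs_P,\ v_S\in\bnd{P'}\\ v_S\in\Pi_{P'}(A')}}(S\setminus\bnd{P}),
\]
but the second union is empty because if $v_S\in\bnd{P'}$ then $S\cap\bnd{P}\neq\emptyset$, and closedness together with the bookkeeping above forces $S\subseteq A$, so $v_S\in A'$ and $S\setminus\bnd{P}$ contributes nothing new beyond what the membership-in-$\bnd{P}$ removal already handles — more carefully, one checks $\Pi_P(A)\setminus\bnd{P}$ is the disjoint union of $S\setminus\bnd{P}$ over exactly those $S$ with $v_S\in\Pi_{P'}(A')\setminus\bnd{P'}$. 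Then: aggregating, $\sum_{v\in\Pi_P(A)\setminus\bnd{P}}\alpha(v)=\sum_{v_S\in\Pi_{P'}(A')\setminus\bnd{P'}}\alpha(S)=\sum_{v_S\in\Pi_{P'}(A')\setminus\bnd{P'}}\alpha'(v_S)$, which is precisely the value returned by $\mathcal{D}'$ on query $A'$; and reporting, each element $v_S\in\Pi_{P'}(A')\setminus\bnd{P'}$ returned by the $\mathcal{D}'$-iterator is expanded on the fly into the list of vertices of $S\setminus\bnd{P}$ (stored explicitly at preprocessing), still $O(1)$ time per reported element of $\Pi_P(A)\setminus\bnd{P}$.

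The main obstacle I anticipate is the bookkeeping around boundary SCCs: one must argue cleanly that, under the closedness hypothesis on $\Pi_P(A)$, every SCC $S$ of $P$ with $S\cap\bnd{P}\neq\emptyset$ satisfies either $S\cap\bnd{P}\subseteq A$ or $S\cap\bnd{P}\cap A=\emptyset$, and moreover that $v_S\in A'$ iff $S\cap\bnd{P}\subseteq A$, so that the contracted query $A'$ is a legitimate closed-path-net query for $P'$ and the set equality for $\Pi_P(A)\setminus\bnd{P}$ goes through without double-counting or omission. Everything else — the reachability equivalence under contraction, the linear-time computation of $A'$, and the $O(1)$-per-element expansion of SCCs during reporting — is routine given the setup in Section~\ref{s:acyclic}.
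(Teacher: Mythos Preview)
Your overall plan is sound and matches the paper's approach, but there is a genuine gap in the handling of boundary SCCs. You assert that if $v_S\in\bnd{P'}$ and $v_S\in\Pi_{P'}(A')$, then ``closedness together with the bookkeeping above forces $S\subseteq A$'', and hence the second union contributes nothing. This is false: closedness of $\Pi_P(A)$ only yields $S\cap\bnd{P}\subseteq A$, not $S\subseteq A$. An SCC $S$ meeting $A$ can perfectly well contain non-boundary vertices of $P$. Concretely, take $S=\{a,x\}$ with $a\in A\subseteq\bnd{P}$ and $x\in V(P)\setminus\bnd{P}$. Then $v_S\in A'\subseteq\bnd{P'}$, so $v_S\notin\Pi_{P'}(A')\setminus\bnd{P'}$, yet $x\in\Pi_P(A)\setminus\bnd{P}$ since $x$ is strongly connected to $a\in A$. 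Your aggregation identity
\[
\sum_{v\in\Pi_P(A)\setminus\bnd{P}}\alpha(v)=\sum_{v_S\in\Pi_{P'}(A')\setminus\bnd{P'}}\alpha'(v_S)
\]
therefore drops $\alpha(x)$, and the ``more carefully, one checks'' sentence that follows is asserting something false.

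The paper's proof faces exactly this issue and resolves it differently: querying $\mathcal{D}'$ recovers only the vertices of $\Pi_P(A)\setminus\bnd{P}$ that are \emph{not} strongly connected to any boundary vertex of $P$. The remaining vertices --- non-boundary vertices lying in an SCC of $P$ that intersects $A$ --- are handled separately, by iterating over those at most $|A|$ SCCs and adding in their precomputed aggregate weights $\alpha(S)$ (or listing their stored elements). This extra pass costs $O(|A|)$ time. Once you add this step, your argument is complete and essentially identical to the paper's.
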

\begin{proof}
By the assumption that $\Pi_P(A)$ is closed, we have that $\Pi_P(A)$
contains only such $S\in \sccs_P$ that $S\cap \bnd{P}\subseteq A$.
If the set $\Pi_{P'}(A')$ contained some $v_{R}\in \bnd{P'}\setminus A'$, then
in the graph $P$ a vertex from $R\cap (\bnd{P}\setminus A)$ could reach and could
be reached from $A$, which would contradict $\Pi_P(A)\cap \bnd{P}=A$.
As a result, $A'$ is indeed a valid query parameter to the data structure $\mathcal{D}'$
and in $\Ot(|A'|) \le \Ot(|A|)$ time we can gain reporting access to $\Pi_{P'}(A')\setminus \bnd{P'}$
or aggregate the weights of $\Pi_{P'}(A')\setminus \bnd{P'}$.
Observe that any vertex $v_S\in \Pi_{P'}(A')\setminus \bnd{P'}$
corresponds to a subset $S\subseteq \Pi_P(A)\setminus \bnd{P}$.
These subsets $S$, being strongly connected components of $P$, are clearly pairwise disjoint.
So the reporting procedure, when $\mathcal{D'}$ reports $v_S$, can simply
report the individual vertices of $S$ one by one.
Moreover, the weight $\alpha(v_S)$ equals the aggregate weight $\sum_{v\in S}\alpha(v)$.
It is easy to see that this way, using $\mathcal{D'}$ we can actually efficiently report/aggregate all the vertices
of $(\Pi_P(A)\setminus \bnd{P})\cap \left(\bigcup \{S:v_S\in \Pi_{P'}(A')\setminus \bnd{P'}\}\right)$,
or, in other words, the vertices of $\Pi_P(A)\setminus \bnd{P}$ that are not strongly
connected to a boundary vertex of $P$.
However, some vertices from 
$\Pi_P(A)\setminus \bnd{P}$ are not necessarily in 
$\bigcup \{S:v_S\in \Pi_{P'}(A')\setminus \bnd{P'}\}$.
But these are precisely the non-boundary vertices of $P$ that are strongly connected
to some vertex of $A$ in $P$.
As a result, we can handle these remaining vertices by iterating over the $O(|A|)$ SCCs of $P$ that intersect $A$.
\end{proof}

In the following we will assume that $P$ is an acyclic graph.
We will no longer need the assumption $\Pi_P(A)\cap \bnd{P}=A$; this was only
needed for the efficient reduction to the acyclic case.
Consequently, we will design a data structure with query time $\Ot(|A|)$
even if $A\subsetneq \Pi_P(A)\cap \bnd{P}$.

In the remaining part of this section, we assume we only want to handle queries where the query set $A$ lies on a
\emph{single hole}~$h$ of an acyclic piece $P$. We present the reduction
of the general case to the single-hole case in Section~\ref{s:multiple-hole}.
Our strategy for the single-hole case will be to efficiently break the problem into
subproblems for which we have the answer precomputed.

\subsection{Picking non-crossing paths.}
We first fix, for any $u,v\in V(P)$ such that $u$ can reach $v$,
one particular $u\to v$ path $\pi_{u,v}\subseteq P$.
We apply the perturbation scheme of~\cite{Erickson18} to $P$,
so that the shortest paths in $P$ become unique.
For any $u,v\in V(P)$ such that $u$ can reach $v$ in $P$,
we define $\pi_{u,v}$ to be the unique shortest $u\to v$ path in~$P$.
Note that each $\pi_{u,v}$ is a simple path in $P$.
We have the following crucial property.

\begin{lemma}\label{l:crossing}
  Let $u,v,x,y\in V(P)$ be such that $u$ can reach $v$ and $x$ can reach $y$ in $P$.
  If \linebreak
  $V(\pi_{u,v})\cap V(\pi_{x,y})\neq\emptyset$, then $\pi_{u,v}$ and $\pi_{x,y}$
  share a single (possibly zero-edge) subpath.
\end{lemma}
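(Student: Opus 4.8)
The plan is to argue that two unique shortest paths in a digraph can only interact in a very restricted way, and then use acyclicity and the fact that the perturbation makes shortest paths unique. First I would establish the basic "shortest subpath" property: if $\pi_{u,v}$ passes through vertices $a$ and $b$ in that order (i.e.\ $a$ appears before $b$ along $\pi_{u,v}$), then the subpath of $\pi_{u,v}$ from $a$ to $b$ is itself the unique shortest $a\to b$ path, hence equals $\pi_{a,b}$; this is the standard optimal-substructure argument, and uniqueness is inherited from the perturbation scheme of~\cite{Erickson18}. The same holds for $\pi_{x,y}$. So any common vertex of the two paths, together with the induced subpaths, is governed by the same canonical family $\{\pi_{a,b}\}$.

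Next I would show that $V(\pi_{u,v})\cap V(\pi_{x,y})$, once nonempty, must be \emph{contiguous along each path}. Suppose $p$ and $q$ are both in the intersection. Because $P$ is acyclic, the cyclic ambiguity that could otherwise arise is ruled out: $p$ and $q$ appear in a consistent order, say $p$ before $q$, on \emph{both} $\pi_{u,v}$ and $\pi_{x,y}$ (if the orders disagreed we would get a $p\to q$ path and a $q\to p$ path, forming a cycle). By the subpath property, the $p\to q$ segment of $\pi_{u,v}$ and the $p\to q$ segment of $\pi_{x,y}$ are both the unique shortest $p\to q$ path, hence identical. Thus every vertex between two intersection vertices is again an intersection vertex, on both paths — so the intersection is a single common subpath. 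Finally I would handle the degenerate possibility that the "subpath" has zero edges (the two paths just touch at one vertex and then diverge), which is why the statement allows a zero-edge subpath.

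The main obstacle I expect is being careful about the \emph{order consistency} of shared vertices, i.e.\ ruling out that $p$ precedes $q$ on one path but $q$ precedes $p$ on the other. This is precisely where acyclicity of $P$ is essential (the reduction in Section~\ref{s:acyclic} was set up for exactly this reason): any order mismatch yields a directed cycle through $p$ and $q$, contradicting that $P$ is a DAG. Once order consistency is in hand, the rest is a clean application of uniqueness of shortest paths and optimal substructure; no planarity is actually needed for this particular lemma, only acyclicity and the unique-shortest-path property granted by the perturbation. I would phrase the argument so that the "single subpath" conclusion is immediate from "the intersection is contiguous on each path and the two contiguous pieces coincide."
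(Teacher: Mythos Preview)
Your proposal is correct and follows essentially the same approach as the paper: the paper picks the first and last common vertices $a,b$ along $\pi_{u,v}$, uses acyclicity to force $a$ to precede $b$ on $\pi_{x,y}$ as well, and then invokes unique shortest paths plus optimal substructure to conclude the two $a\to b$ segments coincide. Your ``contiguity'' phrasing is a minor repackaging of the same three ingredients (acyclicity for order consistency, optimal substructure, uniqueness from the perturbation), and implicitly reduces to the extremal-vertex choice the paper makes explicitly.
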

\begin{proof}
  Let $a$ ($b$) be the first (last, resp.) vertex on $\pi_{u,v}$ that is also a vertex of $\pi_{x,y}$.
  $\pi_{u,v}$ can be expressed as $Q_1\cdot T\cdot Q_2$, where $T=a\to b$.
  Moreover, $V(\pi_{u,v})\cap V(\pi_{x,y})\subseteq V(T)$.
  If $a\neq b$, the vertex~$b$ cannot appear before $a$ on $\pi_{x,y}$, because then $a$ and $b$ would lie on
  a cycle in $P$, contradicting acyclicity of $P$.
  As a result, $\pi_{x,y}$ has an $a\to b$ subpath as well. 
  Since shortest paths in $P$ are unique and shortest paths have optimal substructure, the $a\to b$ subpath 
  of $\pi_{x,y}$ equals $T$.
  So, $\pi_{x,y}$ can be expressed as $R_1\cdot T\cdot R_2$.
  Since $\pi_{x,y}$ is simple, and $V(\pi_{u,v})\cap V(\pi_{x,y})\subseteq V(T)$,
  we indeed have $\pi_{u,v}\cap \pi_{x,y}=T$.
\end{proof}

Observe that since $P$ is a DAG, for $u,v\in V(P)$, $u\neq v$, $\pi_{u,v}$ and $\pi_{v,u}$ cannot
both exist. When one of them exists, we will sometimes write $\pi_{\{u,v\}}$ to denote
the one that exists.

\subsection{Generalization}
Before we proceed, we need to state our problem in a more general way that will enable
decomposition into smaller subproblems of similar kind.
Let us first fix a counterclockwise order of $V(h)$, as follows.
Let $u_1,e_1,u_2,e_2,\ldots,u_r,e_r$ be the sequence of vertices and edges on a counterclockwise bounding walk of $h$ with an arbitrary starting point. The encountered vertices $u_i$ might not be all distinct since~$h$ can be a non-simple face in general. The order $\prec_h$ of $V(h)$ that we use is obtained by removing duplicate vertices from the sequence in an arbitrary way, i.e., for each $v\in V(h)$ we pick one index $i_v$ such that $u_{i_v}=v$. If $x,y\in V(h)$ and $i_x<i_y$, then we define $C_{x,y}$ to be the curve $e_{i_x}e_{i_x+1}\ldots e_{i_{y}-1}$. Otherwise,
if $i_x>i_y$, then $C_{x,y}=e_{i_x}e_{i_x+1}\ldots e_re_1e_2\ldots e_{i_{y}-1}$.
If $h$ is simple, then $C_{x,y}$ is a simple curve.
However, in general, $C_{x,y}$ can be self-touching but it does not cross itself.

Suppose $B\subseteq \bnd{P}\cap V(h)$ is given ordered by $\prec_h$.
Moreover, for each pair of neighboring (i.e., appearing consecutively in the \emph{cyclic} order of $B$ given by $\prec_h$) vertices $x,y$ of~$B$,
a curve $\Phi_{x,y}$ connecting $x$ to $y$ is given.
The curve $\Phi_{x,y}$ equals either $\pi_{\{x,y\}}$ or $C_{x,y}$.
Let $\Phi$ be the closed curve formed by concatenating the subsequent
curves $\Phi_{x,y}$ for all neighboring $x,y\in B$.
For brevity, we also extend the notation $\Phi_{a,b}$ to non-neighboring 
vertices $a,b$ of $B$, and define $\Phi_{a,b}$ to be a concatenation
of the curves $\Phi_{x,y}$ for all neighboring pairs $(x,y)$ between $a$ and $b$ on $\Phi$.
We don't require $\Phi$ to be a simple (Jordan) curve; parts of it may be overlapping.
Whereas~$\Phi$ is allowed to be self-touching, it does not cross itself,
like e.g., the cycle bounding a non-simple face of a plane graph.
Let $P[\Phi]$ denote the region of $P$ that lies \emph{weakly} inside the curve $\Phi$.

The objective of the problem $(B,\Phi)$ is to aggregate weights of (or report) the vertices of ${\Pi_P(B)\setminus \bnd{P}}$ that
lie \emph{strictly inside} $\Phi$. 
More formally, we want to aggregate vertices $v$ that lie, at the same time, strictly on the left side
of each of the curves $\Phi_{x,y}$ (seen as a curve directed from $x$ to $y$) 
where $x,y$ are neighboring elements of $B$ sorted by $\prec_h$.\footnote{Even more formally, assuming $P$ is embedded in such a way that $h$ is the infinite face of a supergraph of $P$, we are interested in the vertices lying in the intersection
of the regions strictly inside closed curves $\Phi_{x,y}\cdot C_{y,x}$ for all neighboring $x,y$ in $B$.}
Note that with such a defined problem, the original goal of the query procedure
can be rephrased as $(A,h)$ since all the vertices of $\Pi_P(A)\setminus\bnd{P}$ indeed lie
strictly inside~$h$.

\subsection{Preprocessing}

\paragraph{Preprocessing for small instances and subproblems.}
\begin{itemize}
  \item For any tuple $B=(b_1,\ldots,b_q)$ of at most $4$ vertices of $\bnd{P}$ appearing
in that order in $\prec_h$, and any out of at most $2^q=O(1)$ possible curves $\Phi$
that might constitute the problem $(B,\Phi)$, we precompute the aggregate weight
    and the list of vertices in $\Pi_P(B)\setminus \bnd{P}$ that lie strictly inside $\Phi$.
This can clearly be done in $\Ot(|\bnd{P}|^4\cdot |P|)=\Ot(r^3)$ time.

  \item For any 5-tuple $\tau=(b_0,\ldots,b_4)$ ordered by $\prec_h$, and any out of at most $2^4$ possible
    curves $\Phi_{b_i,b_{i+1}}\in \{\pi_{\{b_i,b_{i+1}\}},C_{b_i,b_{i+1}}\}$, where $i\in \{0,\ldots,3\}$,
    we precompute and store the set $X_\tau$ of vertices $x\in V(P)\setminus \bnd{P}$
    such that:
    \begin{enumerate}
      \item $x$ lies on some path connecting $b_1$ and $b_2$ in $P$,
      \item $x$ \emph{does not} lie on any path connecting $b_0$ and $b_1$ in $P$,
      \item $x$ lies strictly to the left of each $\Phi_{b_i,b_{i+1}}$ for $i=\{0,\ldots,3\}$.
    \end{enumerate}
    We also store the aggregate weight $w(X_\tau)$.
    This preprocessing can be performed in a brute-force way in $\Ot(|\bnd{P}|^5\cdot |P|)=\Ot(r^{7/2})$ time.
    It can be also optimized fairly easily to $\Ot(r^3)$ time and space, as follows.

    First, for every $a\in \bnd{P}$, and $v\in V(P)\setminus \bnd{P}$, find the earliest vertex $z_{a,v}\in \bnd{P}$
    following $a$ in the order $\prec_h$ such that $v$ does not lie
    strictly to the left of $\pi_{\{a,z_{a,v}\}}$.
    Note that having fixed $a\in \bnd{P}$, the property of $v$ being strictly left of $\pi_{\{a,b\}}$ is monotonous
    in the vertex $b\in \bnd{P}$ following~$a$.
    The vertices $z_{a,v}$ can be computed in $\Ot(|\bnd{P}|^2\cdot |P|^2)=\Ot(r^3)$ naively.

    Having
    fixed $(b_0,\ldots,b_3)$, compute the set $X'$ of all vertices $x\in V(P)\setminus \bnd{P}$
    satisfying all the above conditions possibly except the
    one that $x$ lies strictly to the left of $\Phi_{b_3,b_4}$.
    Computing all such $X'$ takes $\Ot(|\bnd{P}|^4\cdot |P|)=\Ot(r^3)$ time.
    Now, for each of the $\Ot(r^2)$ obtained sets~$X'$, in $\Ot(r)$ time we sort the elements $x\in X'$
    by the value $z_{b_3,x}$ where $z_{b_3,x}$ ($b_3\prec_h z_{b_3,x}$) appearing later in $\prec_h$ is considered smaller.
    This way, for each possible $b_4\in B$, the desired set of vertices~$X_\tau$ forms a prefix of the vertices $X'$:
    we have $x\in X_\tau$ iff $x\in X'$ and $z_{b_3,x}$ appears later than $b_4$ (and after $b_3$) in the order $\prec_h$.
    Storing $X'$ sorted like that, along with the aggregate prefix weights, allows us to aggregate/report
    any set $X_\tau$ for a 5-tuple $\tau=(b_0,\ldots,b_4)$ and 4-tuple $(\Phi_{b_i,b_{i+1}})_{i=0}^3$
    in $O(1)$ time after only $\Ot(r^3)$ preprocessing.

  \item For any $u,v\in \bnd{P}$, we store the path $\pi_{\{u,v\}}$ itself, along
    with aggregate weights of all its subpaths. This data can be computed in $\Ot(|\bnd{P}|^2\cdot |P|^2)=\Ot(r^3)$
    time in a brute-force way.

  \item Finally, for any two pairs $(u,v),(x,y)\in \bnd{P}\times \bnd{P}$, we compute the intersection
    of the paths $\pi_{\{u,v\}}$ and $\pi_{\{x,y\}}$. By Lemma~\ref{l:crossing}, that intersection
    is either empty or is a subpath of both these paths.
    Hence, it is enough to store the two endpoints of the intersection subpath only.
    The desired intersections can be found in $\Ot(|\bnd{P}|^4\cdot |P|)=\Ot(r^3)$ time
    in a brute-force way.

 \end{itemize}

\newcommand{\anyds}{\mathcal{L}}

\paragraph{Existential reachability data structure.}
We also build data structures $\anyds,\anyds^R$ of the following lemma for $P$ and $P^R$, respectively.
\begin{restatable}{lemma}{lanyreach}\label{l:any-reach}
  In $\Ot(r)$ time one can construct a data structure
  maintaining an (initially empty) set $Z\subseteq \bnd{P}\cap V(h)$ and supporting the following operations in $O(\polylog{n})$ time:
  \begin{itemize}
    \item insert or delete a single $b\in \bnd{P}\cap V(h)$ either to or from $Z$.
    \item for any query vertex $v\in (\bnd{P}\cap V(h))\setminus Z$, find any $z\in Z$ (if exists) that $v$ can reach in $P$.
  \end{itemize}
\end{restatable}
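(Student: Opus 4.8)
The plan is to use the characterization of single-face reachability from~\cite{ItalianoKLS17} to reduce the problem, in $\Ot(r)$ time, to a purely combinatorial data-structuring task on a bounded-size ground set, and then handle the dynamic set $Z$ with a standard marked-tree technique. First, since $h$ is a face of $P$, we may re-embed so that $h$ is the outer face, and --- as already established in this section --- assume $P$ is a DAG; let $u_1,\ldots,u_k$ with $k=|\bnd{P}\cap V(h)|=O(\sqrt{r})$ be the vertices of $\bnd{P}\cap V(h)$ in counterclockwise order around $h$. Since only these $k$ vertices are ever inserted into $Z$ or issued as queries, it suffices to capture reachability in $P$ restricted to pairs from $\{u_1,\ldots,u_k\}$.

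Next I would invoke the single-face reachability characterization of~\cite{ItalianoKLS17}. Informally, it shows that the ``can reach'' relation among vertices on one face of a plane digraph is highly structured (this is where the non-crossing structure of the perturbed shortest paths $\pi_{u_i,u_j}$ of Lemma~\ref{l:crossing}, i.e.\ of the reach trees $\bigcup_j \pi_{v,u_j}$, gets exploited). Concretely, I will use it to compute, in $\Ot(r)$ time, a laminar family $\mathcal{C}$ of $\Ot(\sqrt{r})$ subsets of $\{u_1,\ldots,u_k\}$ --- organised as a rooted tree $\TG$ of depth $O(\log k)$ whose leaves are $u_1,\ldots,u_k$ --- together with, for each $v\in\{u_1,\ldots,u_k\}$, a list $\mathcal{C}(v)$ of $\Ot(1)$ members of $\mathcal{C}$ whose union equals $R(v):=\{u_j : v\text{ reaches }u_j\text{ in }P\}$, with $\mathcal{C}(v)$ retrievable in $\Ot(1)$ time. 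Note that a naive $|\bnd{P}|$-fold graph search would only yield an $\Ot(r^{3/2})$ bound, so the near-linear construction genuinely relies on~\cite{ItalianoKLS17}.

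The dynamic set $Z$ is then handled as follows. With each node of $\TG$ (equivalently each piece $C\in\mathcal{C}$) we store a counter of how many current elements of $Z$ lie in $C$; a node is \emph{marked} iff its counter is positive. Inserting/deleting a single $b$ to/from $Z$ changes the counters only along the $O(\log k)$ ancestors of the leaf $b$ in $\TG$, so it runs in $O(\log k)=O(\polylog{n})$ time (we also keep $Z$ in a balanced BST for $O(\log k)$-time membership tests). To answer a query for $v\in(\bnd{P}\cap V(h))\setminus Z$, we fetch the $\Ot(1)$ pieces in $\mathcal{C}(v)$: if none is marked then $R(v)\cap Z=\emptyset$ and we report ``none''; otherwise we pick a marked $C\in\mathcal{C}(v)$ and descend in $\TG$ from $C$, always moving to a marked child, until we reach a leaf $z$. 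Since $C\subseteq R(v)$, every leaf below $C$ lies in $R(v)$, and a marked leaf lies in $Z$, so this $z$ is a valid witness, found in $O(\log k)$ time. The main obstacle is the second step: extracting from the~\cite{ItalianoKLS17} characterization exactly such a laminar decomposition --- with $\Ot(1)$ pieces per reach-from set and an $\Ot(r)$-time construction, as opposed to merely an $\Ot(1)$-time point-to-point reachability oracle; the rest is routine.
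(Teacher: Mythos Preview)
Your plan diverges from the paper's proof, and the divergence is exactly at the step you yourself flag as the ``main obstacle''. The paper does \emph{not} extract a laminar family. What it actually uses from~\cite{ItalianoKLS17} (stated in the paper as Lemma~\ref{l:reach-partition}) is an interval structure: there are $O(|Y|)$ pairs $(S_i,T_i)$ with $Y=\bnd{P}\cap V(h)$, each $T_i$ carrying a linear order $\prec_i$, so that every $v\in Y$ lies in $O(\log|Y|)$ of the $S_i$'s and $O(\log|Y|)$ of the $T_i$'s, and for each $s\in S_i$ the set $\{t\in T_i:s\text{ reaches }t\}$ is a single $\prec_i$-interval $[l_{i,s},r_{i,s}]$. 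The data structure then simply stores, for each $i$, the set $T_i\cap Z$ in a balanced BST keyed by $\prec_i$. An insertion/deletion of $b$ touches the $O(\log|Y|)$ trees with $b\in T_i$; a query for $v$ scans the $O(\log|Y|)$ indices with $v\in S_i$ and does a range-emptiness (predecessor/successor) lookup for $[l_{i,v},r_{i,v}]$ in each. This gives $O(\log^2 r)$ per operation after $\Ot(r)$ preprocessing.

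Your laminar-tree formulation is not what the characterization hands you. The sets $T_i$ come from a recursive halving of the face, so they are nested, but within each $T_i$ the reach set of $v$ is an \emph{interval} of $T_i$ (with respect to the order $\prec_i$ on $T_i$, which need not be a contiguous arc of $u_1,\ldots,u_k$), not a full member of any fixed laminar family. If you try to fix this by putting a segment tree over each $T_i$, you obtain a \emph{forest} of $O(|Y|)$ trees rather than a single rooted tree with leaves $u_1,\ldots,u_k$; the marked-descendant trick still works on that forest, but at that point you have re-derived the paper's BST-interval solution with an extra layer of indirection. So the gap you identify is real for your stated structure (one laminar tree, $\Ot(1)$ pieces per $R(v)$), and the paper's approach sidesteps it entirely by keeping the intervals as intervals.
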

\noindent The proof of Lemma~\ref{l:any-reach} is deferred to Section~\ref{a:anyreachproof}.

\subsection{Answering queries}
We now proceed with the description of our algorithm solving the general problem
$(B,\Phi)$.
\paragraph{Base case.}
Let the elements of $B=\{b_1,\ldots,b_k\}$ be sorted according to their order $\prec_h$ on~$\Phi$.
For convenience, identify $b_{k+l}$ with $b_l$ for any integer $l$.

We first consider the easier \emph{base case}, with the following additional requirement:
\begin{quote}
  \textit{For any $1\leq i<j\leq k$, if $b_i$ can reach $b_j$ or can be reached from $b_j$ in $P$, then either $j=i+1$ or $(i,j)=(1,k)$.}
\end{quote}

If $k\leq 4$, we will simply return the precomputed aggregate weight, or an iterator
to a list of vertices in $\Pi_P(B)\setminus\bnd{P}$ strictly inside $\Phi$.
So suppose $k\geq 5$. We start with the following lemma.
\begin{restatable}{lemma}{lauxbase}\label{l:aux-base}
  Let $v\in \Pi_P(B)\setminus \bnd{P}$ lie strictly inside $\Phi$. Then:
  \begin{itemize}
    \item There exists such $i$ that
      $v$ lies on some $b_i\to b_{i+1}$ or on some $b_{i+1}\to b_i$ path in $P$.
    \item Moreover, there is at most one more pair $\{x,y\}\subseteq B$,
      $\{x,y\}\neq\{b_i,b_{i+1}\}$
      such that $v$ lies on some $x\to y$ or $y\to x$ path in $P$,
      and either $\{x,y\}=\{b_{i-1},b_i\}$, or $\{x,y\}=\{b_{i+1},b_{i+2}\}$.
  \end{itemize}
\end{restatable}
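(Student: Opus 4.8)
The plan is to exploit the base-case hypothesis, which says that among all pairs $\{b_i,b_j\}\subseteq B$ that are reachability-comparable in $P$ (i.e. one reaches the other), the only ones allowed are consecutive pairs $\{b_i,b_{i+1}\}$ on the cyclic walk $\Phi$. Fix $v\in\Pi_P(B)\setminus\bnd P$ lying strictly inside $\Phi$. Since $v\in\Pi_P(B)$, there are $s,t\in B$ with an $s\to v$ path and a $v\to t$ path in $P$; concatenating, $s$ reaches $t$, so by the hypothesis either $\{s,t\}$ is a consecutive pair or $\{s,t\}=\{b_1,b_k\}$ (which is also "consecutive" in the cyclic order). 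Writing $\{s,t\}=\{b_i,b_{i+1}\}$ for the appropriate $i$, the concatenated $s\to v\to t$ walk contains a simple $b_i\to b_{i+1}$ or $b_{i+1}\to b_i$ path through $v$ (extract a simple path; it still passes through $v$ because $v$ is interior, using acyclicity of $P$ so that no shortcut bypasses $v$). This gives the first bullet.

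For the second bullet, suppose $v$ also lies on some $x\to y$ (or $y\to x$) path with $\{x,y\}\subseteq B$ and $\{x,y\}\neq\{b_i,b_{i+1}\}$. Again $\{x,y\}$ must be a consecutive pair of $B$. I would argue that a consecutive pair through $v$ that is disjoint from $\{b_i,b_{i+1}\}$ is impossible, and that among the two consecutive pairs sharing exactly one endpoint with $\{b_i,b_{i+1}\}$ — namely $\{b_{i-1},b_i\}$ and $\{b_{i+1},b_{i+2}\}$ — at most one can occur. The key tool is the planarity/topology of $\Phi$: $v$ lies strictly inside the closed curve $\Phi$, and each path $\pi$ through $v$ connecting two elements of $B$ must exit the interior of $\Phi$ only at its endpoints (its endpoints are on $\Phi$; but interior vertices of the path need not stay inside, so care is needed here — see below). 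The crucial combinatorial fact to extract is that the endpoints of the two reachability-witnessing paths through $v$ interleave badly around $\Phi$ unless they share an endpoint and sit on the same side, and that three pairwise-consecutive-or-equal comparable pairs through a single interior vertex would force, via path concatenation, a comparability between two non-consecutive elements of $B$ (e.g. $b_{i-1}$ and $b_{i+1}$, or $b_{i-1}$ and $b_{i+2}$), contradicting the base-case hypothesis since $k\geq 5$ so none of these is the wrapping pair $\{b_1,b_k\}$ unless it genuinely is consecutive.

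The main obstacle I anticipate is handling the fact that a reachability path through $v$ need not lie weakly inside $\Phi$: parts of it can wander outside the region $P[\Phi]$ and re-enter. To control this I would first reduce to paths that do stay inside by replacing a witnessing path $s\to v\to t$ with its maximal sub-path around $v$ that stays in $P[\Phi]$ — but that sub-path may have endpoints not in $B$. The cleaner route is to argue directly with comparabilities: from the existence of a $b_j\to v$ or $v\to b_j$ path for $j\in\{i-1,i,i+1,i+2\}$ and of a $b_i\to v\to b_{i+1}$ path (say), concatenate through $v$ to deduce $b_j$ is comparable with $b_i$ or with $b_{i+1}$; if $v$ were on comparable paths for both $\{b_{i-1},b_i\}$ and $\{b_{i+1},b_{i+2}\}$, then chaining through $v$ yields $b_{i-1}$ comparable with $b_{i+2}$ (using acyclicity to fix the directions consistently), and since $k\geq 5$ the pair $\{b_{i-1},b_{i+2}\}$ is neither consecutive nor equal to $\{b_1,b_k\}$, contradicting the hypothesis. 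A symmetric chaining rules out any consecutive pair disjoint from $\{b_i,b_{i+1}\}$. The only subtlety is pinning down the edge directions so that the concatenations are genuine directed paths; here acyclicity of $P$ (Section~\ref{s:acyclic}) is exactly what makes the case analysis finite and forces consistency, since a vertex cannot both reach and be reached from the same $b_j$ unless it equals $b_j$.
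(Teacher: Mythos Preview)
Your proposal is correct and, once you settle on the ``cleaner route,'' follows essentially the same argument as the paper: concatenate paths through $v$ to force comparabilities between elements of $B$, then invoke the base-case requirement (together with $k\geq 5$) to rule out non-adjacent pairs and the simultaneous occurrence of $\{b_{i-1},b_i\}$ and $\{b_{i+1},b_{i+2}\}$. Your topological detour about paths wandering outside $\Phi$ is unnecessary---the hypothesis that $v$ lies strictly inside $\Phi$ is in fact not used in the proof at all, which is purely combinatorial (acyclicity of $P$ plus the base-case reachability restriction suffice); you recognized this yourself when you switched approaches.
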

\begin{proof}
Item~(1) follows easily by the additional requirement of the base case.
Let $v$ lie on some $b_j\to c$ path in $P$, where $c\in B$. Since
$P$ is acyclic, we have that $b_j\neq c$ and thus $c\in \{b_{j-1},b_{j+1}\}$.
  If $c=b_{j+1}$, we put $i=j$ and if $c=b_{j-1}$, we put $i=j-1$.

  For item~(2), consider the case when $v$ lies on some $b_i\to b_{i+1}$; the case when $v$ lies on a $b_{i+1}\to b_i$ path is symmetric. Suppose $v$ also lies on some $x\to y$ path in $P$, where $x,y\in B$, $x\neq y$, $\{x,y\}\neq \{b_i,b_{i+1}\}$.
Since $b_i$ can reach $y$ through $v$, $y\in \{b_{i-1},b_{i+1}\}$ by the requirement.
  Similarly, since $x$ can reach $b_{i+1}$ through $v$, $x\in \{b_i,b_{i+2}\}$.
  We cannot have $x=b_{i+2}$ and $y=b_{i-1}$ since $b_{i+2}$ reaching $b_{i-1}$ is a contradiction with the first base case
  requirement for $k\geq 5$.
  As a result, $(x,y)\neq (b_i,b_{i+1})$ implies $(x,y)=(b_i,b_{i-1})$ or $(x,y)=(b_{i+2},b_{i+1})$.
Moreover, $v$ cannot lie on both some $b_i\to b_{i-1}$ path
and some $b_{i+2}\to b_{i+1}$ path, since then $b_{i+2}$ could reach $b_{i-1}$,
which is again a contradiction for $k\geq 5$.
\end{proof}

By Lemma~\ref{l:aux-base}, each $v\in \Pi_P(B)\setminus \bnd{P}$
falls into exactly
one of the $k$ sets $Y_i$, for $i=1,\ldots,k$,
such that $Y_i$ contains those $v$ in $V(P)\setminus \bnd{P}$
that lie on a path in $P$ connecting $b_{i}$ and $b_{i+1}$ (in any direction),
but at the same time \emph{do not} lie on any path connecting $b_{i-1}$ and $b_i$ in $P$ (in any direction).
Indeed, if~$v$ appears only on paths connecting $b_i$ and $b_{i+1}$,
it will be included in $Y_i$.
On the other hand, if $v$ appears both on paths connecting $b_i$ and $b_{i+1}$
and on paths connecting $b_{i+1}$ and $b_{i+2}$,
it will be included only in $Y_i$, but not in $Y_{i+1}$.

\begin{restatable}{lemma}{lfivepaths}\label{l:fivepaths}
  Suppose $v\in Y_i$. Then $v$ lies strictly
  inside $\Phi$ if and only if $v$ lies strictly on the left side of $\Phi_{b_{i-1},b_i}$, $\Phi_{b_i,b_{i+1}}$, $\Phi_{b_{i+1},b_{i+2}}$,
  and $\Phi_{b_{i+2},b_{i+3}}$.
\end{restatable}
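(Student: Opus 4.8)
The ``only if'' direction is immediate: lying strictly inside $\Phi$ means, by definition, lying strictly to the left of every curve $\Phi_{b_\ell,b_{\ell+1}}$, and in particular of the four curves in the statement. For the converse, suppose $v\in Y_i$ lies strictly to the left of $\Phi_{b_{i-1},b_i}$, $\Phi_{b_i,b_{i+1}}$, $\Phi_{b_{i+1},b_{i+2}}$, $\Phi_{b_{i+2},b_{i+3}}$; I will show that $v$ lies strictly to the left of $\Phi_{b_j,b_{j+1}}$ for \emph{every} $j$, which is exactly the assertion that $v$ lies strictly inside $\Phi$. For $j\in\{i-1,i,i+1,i+2\}$ this is the hypothesis, so I fix $j$ outside this set (indices cyclic modulo $k$); in particular $b_j,b_{j+1}\notin\{b_i,b_{i+1}\}$. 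If $\Phi_{b_j,b_{j+1}}=C_{b_j,b_{j+1}}$, then the associated closed curve $\Phi_{b_j,b_{j+1}}\cdot C_{b_{j+1},b_j}$ is the entire bounding walk of $h$, so ``strictly to the left'' merely means ``not on $h$'', which holds since $v\in V(P)\setminus\bnd{P}$ and only vertices of $\bnd{P}$ lie on $h$. The only case that needs work is thus $\Phi_{b_j,b_{j+1}}=\pi_{\{b_j,b_{j+1}\}}$; write $\pi:=\pi_{\{b_j,b_{j+1}\}}$, and I may assume $b_j$ reaches $b_{j+1}$ in $P$, i.e.\ $\pi=\pi_{b_j,b_{j+1}}$ (the reversed orientation of $\pi$, as well as the symmetric subcase $b_{i+1}\to v\to b_i$ below, are handled analogously).

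The first step is to determine which elements of $B$ are reachability-related to $v$. Since $v\in Y_i$ lies on a path connecting $b_i$ and $b_{i+1}$ and $P$ is acyclic, exactly one direction occurs; say $b_i\to v\to b_{i+1}$. Combining $v\in Y_i$ (so $v$ lies on no path connecting $b_{i-1}$ and $b_i$) with Lemma~\ref{l:aux-base}(2), the \emph{only} pairs $\{x,y\}\subseteq B$ for which $v$ lies on a path connecting $x$ and $y$ are $\{b_i,b_{i+1}\}$ and possibly $\{b_{i+1},b_{i+2}\}$. Two consequences: (i) $v\notin V(\pi)$, for otherwise $\pi=\pi_{b_j,b_{j+1}}$ would witness that $v$ lies on a path connecting $\{b_j,b_{j+1}\}$ --- a pair distinct from both of the above, since $j\notin\{i,i+1\}$; and (ii) if some $b_\ell\in B$ reaches $v$, then, since $v\to b_{i+1}$, $v$ lies on a path connecting $\{b_\ell,b_{i+1}\}$, forcing $\ell\in\{i,i+2\}$.

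Now I would embed $P$ so that $h$ is the outer face. The simple path $\pi$, both of whose endpoints $b_j,b_{j+1}$ lie on $h$, cuts the closed region bounded by $h$ into two closed parts meeting exactly along $\pi$: one part, $\overline{D'}$, has on its boundary the $h$-arc $C_{b_{j+1},b_j}$ and hence all of $B\setminus\{b_j,b_{j+1}\}$ (in particular $b_i$); the other, $\overline{D}$, has on its boundary the complementary arc $C_{b_j,b_{j+1}}$, which contains no vertex of $B$. By definition, ``$v$ strictly to the left of $\pi$'' means ``$v$ in the interior of $\overline{D'}$''; suppose this fails. Then, since $v\notin V(\pi)\cup V(h)$ by (i), $v$ lies in the interior of $\overline{D}$. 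If $b_i$ lay on $\pi$, then $b_j$ would reach $b_i$, violating the base-case requirement that reachability-related vertices of $B$ be cyclically adjacent (as $j\notin\{i-1,i+1\}$); hence $b_i$ lies in $\overline{D'}$. Consider the shortest path $\pi_{b_i,v}\subseteq P$: it runs from $b_i\in\overline{D'}$ to $v$ in the interior of $\overline{D}$ while staying inside the region bounded by $h$, and since within that region the two parts are separated exactly by $\pi$, planarity (no two edges of $P$ cross) forces $\pi_{b_i,v}$ to pass through some vertex $w$ of $\pi$. But then $b_j\to w$ along $\pi$ and $w\to v$ along $\pi_{b_i,v}$, so $b_j$ reaches $v$; by (ii) this gives $j\in\{i,i+2\}$, contradicting the choice of $j$. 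The remaining variants (the reversed orientation of $\pi$, and the subcase $b_{i+1}\to v\to b_i$) lead, by the same ``common vertex of $\pi$ and a shortest path towards $v$'' argument together with (ii) or its analog, to $j$ landing in the excluded set $\{i-1,i,i+1,i+2\}$ as well. Hence $v$ is strictly to the left of $\Phi_{b_j,b_{j+1}}$, and since $j$ was arbitrary, $v$ lies strictly inside $\Phi$.

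The genuinely topological point --- and the step I expect to require the most care --- is the claim that a path of $P$ cannot cross from one side of $\pi$ to the other without meeting $\pi$ at a vertex: this uses crucially that $h$ is (embeddable as) the outer face, so that inside the region bounded by $h$ the two sides of $\pi$ are genuinely separated and $\pi_{b_i,v}$ cannot ``escape around through $h$''. Everything else is a finite bookkeeping of the reachabilities among the vertices of $B$, governed by Lemma~\ref{l:aux-base} and the base-case requirement.
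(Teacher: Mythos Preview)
Your proof is correct and follows essentially the same route as the paper's: both reduce to the case $\Phi_{b_j,b_{j+1}}=\pi_{\{b_j,b_{j+1}\}}$, observe that $b_i$ and $v$ end up on opposite sides of this path, and conclude that a path from $b_i$ towards $v$ must meet $\pi_{\{b_j,b_{j+1}\}}$ in a vertex, which then yields a forbidden reachability $\{b_j,b_{j+1}\}\to v\to b_{i+1}$ contradicting the constraints imposed by $v\in Y_i$ and Lemma~\ref{l:aux-base}. The only cosmetic difference is that the paper uses the witnessing path $Q=b_i\to v\to b_{i+1}$ itself (finding the crossing vertex $w$ on its prefix up to $v$), whereas you use the shortest path $\pi_{b_i,v}$; the resulting reachability and contradiction are identical.
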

\begin{proof}
  Since strictly inside $\Phi$ means strictly to the left of all $\Phi_{b_j,b_{j+1}}$, the ``$\implies$'' direction is trivial.

  Consider the ``$\impliedby$'' direction.
  For contradiction, suppose $v$ does not lie strictly inside~$\Phi$.
  Then, for some $j\notin \{i-1,i,i+1,i+2\}$, $v$ lies weakly
  to the right of $\Phi_{b_j,b_{j+1}}$.
  Since $v\in V(P)\setminus \bnd{P}$ and the hole $h$ contains only vertices
  of $\bnd{P}$, this means that $\Phi_{b_j,b_{j+1}}=\pi_{\{b_j,b_{j+1}\}}$.
  Since $b_j$ and $b_{j+1}$ are consecutive in $B$, and $i\notin \{j,j+1\}$,
  $b_i$ lies weakly to the left of $\pi_{\{b_j,b_{j+1}\}}$.
  
  By $v\in Y_i$, the vertex $v$ lies on a path $Q$ connecting $b_i$ and $b_{i+1}$ in $P$.
  Since $v$ and $b_i$ lie weakly on different sides of
  $\pi_{\{b_j,b_{j+1}\}}$, the path $Q$ has to cross
  $\pi_{\{b_j,b_{j+1}\}}$ at a vertex $w$ appearing not later than $v$ on $Q$ (possibly $v=w$).
  If $Q=b_i\to b_{i+1}$, the existence of $w$ implies that 
  there exists an $s\to b_{i+1}$ path in $P$ going through $v$
  such that $s\in \{b_j,b_{j+1}\}$. By $v\in Y_i$, this implies $s\in \{b_i,b_{i+2}\}$.
  As a result, $j\in \{i-1,i,i+1,i+2\}$, a contradiction.
  Similarly, if $Q=b_{i+1}\to b_i$, there exists an $s\to b_i$ path in $P$
  going through $v$ such that $s\in \{b_j,b_{j+1}\}$.
  But on the other hand by $v\in Y_i$ we have $s=b_{i+1}$ so $j\in \{i,i+1\}$, a contradiction.
\end{proof}

As a result, we can equivalently aggregate vertices $v$ in each $Y_i$ under a (seemingly) weaker
requirement that $v$ lies strictly on the left side of 
$\Phi_{b_{i-1},b_{i}}$, $\Phi_{b_i,b_{i+1}}$, $\Phi_{b_{i+1},b_{i+2}}$, and $\Phi_{b_{i+2},b_{i+3}}$.
But this is, again, part of the precomputed data for the tuple $(b_{i-1},b_{i},b_{i+1},b_{i+2},b_{i+3})$.

Consequently, there are $k$ disjoint sets $Y_i$ to consider. We can thus aggregate weights or
construct a list for reporting vertices from $v\in \Pi_P(B)\setminus \bnd{P}$ strictly
inside $\Phi$ in $O(k)$ time.
We thus obtain:
\begin{lemma}\label{l:base-time}
  The base case can be handled in $O(|B|)$ time.
\end{lemma}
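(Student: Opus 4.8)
The plan is to split on the size $k=|B|$. If $k\le 4$, then $(B,\Phi)$ is one of the small instances whose aggregate weight and list of interior vertices of $\Pi_P(B)\setminus\bnd{P}$ were tabulated during preprocessing, so I simply return the stored answer in $O(1)=O(|B|)$ time. So assume $k\ge 5$ from now on.

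For $k\ge 5$ I would rely entirely on the decomposition already established. By Lemma~\ref{l:aux-base} and the ensuing discussion, the sets $Y_1,\dots,Y_k$ partition $\Pi_P(B)\setminus\bnd{P}$, where $Y_i$ consists of the non-boundary vertices lying on some path connecting $b_i$ and $b_{i+1}$ (in either direction) but on no path connecting $b_{i-1}$ and $b_i$. Since these sets are pairwise disjoint, it suffices to process $Y_i\cap\{v:v\text{ lies strictly inside }\Phi\}$ separately for each $i$ and then combine — adding the $k$ aggregates in the semigroup (no subtraction needed, precisely because the $Y_i$ are disjoint), or concatenating the $k$ reporting iterators, disjointness again guaranteeing that no vertex is produced twice.

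For a fixed $i$, I would then recognize $Y_i\cap\{v:v\text{ strictly inside }\Phi\}$ as a precomputed set. By Lemma~\ref{l:fivepaths}, a vertex $v\in Y_i$ lies strictly inside $\Phi$ if and only if it lies strictly to the left of the four curves $\Phi_{b_{i-1},b_i}$, $\Phi_{b_i,b_{i+1}}$, $\Phi_{b_{i+1},b_{i+2}}$, $\Phi_{b_{i+2},b_{i+3}}$; comparing with the definition of $X_\tau$, this set is exactly $X_\tau$ for the $5$-tuple $\tau=(b_{i-1},b_i,b_{i+1},b_{i+2},b_{i+3})$ (indices taken cyclically modulo $k$) paired with the $4$-tuple of boundary curves inherited from $\Phi$. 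Since the preprocessing stores, for each such $\tau$ and curve-tuple, both the aggregate weight $w(X_\tau)$ and a sorted array realizing $X_\tau$ as a prefix, both the aggregate and an $O(1)$-per-element iterator for $X_\tau$ are available in $O(1)$ time. Summing, respectively chaining, over $i=1,\dots,k$ therefore costs $O(k)=O(|B|)$, which proves the lemma.

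Essentially all the work here is carried by the already-proven Lemmas~\ref{l:aux-base} and~\ref{l:fivepaths} and by the $\Ot(r^3)$ precomputation of the sets $X_\tau$; the remaining argument is bookkeeping. The one point that needs a little care is the cyclic indexing: for $i$ close to $1$ or $k$ the window $(b_{i-1},\dots,b_{i+3})$ wraps around the cyclic order of $B$, which is harmless precisely because $k\ge 5$ makes these five vertices distinct and the precomputation ranges over all ordered $5$-tuples of $\bnd{P}$ occurring in cyclic $\prec_h$-order rather than only in linear order.
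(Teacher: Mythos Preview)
Your proof is correct and follows essentially the same approach as the paper: for $k\le 4$ use the precomputed small instances, and for $k\ge 5$ invoke Lemma~\ref{l:aux-base} to get the disjoint decomposition into the sets $Y_i$, then Lemma~\ref{l:fivepaths} to identify $Y_i\cap\{\text{strictly inside }\Phi\}$ with the precomputed set $X_\tau$ for $\tau=(b_{i-1},\ldots,b_{i+3})$, yielding $O(k)$ total time. Your explicit remark on cyclic indexing for $k\ge 5$ is a useful clarification the paper leaves implicit.
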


\paragraph{General case.}
To solve the general case, we reduce it to a number of base case instances.
To this end, we maintain a partition of $B=S\cup T$ such that $S$ precedes $T$ on $\Phi$.
Let $S=\{s_1,\ldots,s_p\}$ and $T=\{t_1,\ldots,t_q\}$.
We will gradually simplify the problem while maintaining the following invariants:
\begin{enumerate}[label=(\arabic*)]
  \item For any $u,v\in B$, if $u$ can reach $v$ in $P$, then $\pi_{u,v}\subseteq P[\Phi]$.
  \item For any $1\leq i<j\leq p$, if $s_i$ can reach $s_j$ or can be reached from $s_j$, then $j=i+1$.
  \item If $x,y\in B$ are neighbors in the counterclockwise order $\prec_h$ on $\Phi$ and $\Phi_{x,y}=\pi_{\{x,y\}}$,
    then $x,y\in S$.
  \item In the data structures $\anyds,\anyds^R$ of Lemma~\ref{l:any-reach}, the set $Z$ satisfies $Z=S\setminus\{s_p\}$.
\end{enumerate}

The algorithm will gradually modify $B,S,T,\Phi$ until we have $S=B$ and $T=\emptyset$.
Note that when $T=\emptyset$, $(B,\Phi)=(S,\Phi)$ satisfies the requirement of
the base case and can be solved in $\Ot(|B|)$ time.

Initially we put $\Phi=h$, $S=\{b_1,b_2\}$, and insert $b_1$ to $Z$
to satisfy the invariants.

The main loop of the procedure runs while $T\neq\emptyset$ and does the following. Using $\anyds$ and $\anyds^R$,
in $O(\polylog{n})$ time
we test whether $t_1$ can reach $Z$ or can be reached from $Z$ in $P$.
If this is not the case, we simply move $t_1$ to the end of $S$, an update
$Z$ accordingly. Note that $|T|$ decreases then.

Otherwise, we can find all vertices $X=\{x_1,\ldots,x_l\}\subseteq Z$ that $t_1$ can reach or can be reached
from in $\Ot(|X|)$ by repeatedly extracting them from the data structures $\anyds,\anyds^R$.
Additionally we sort $X$ so that the order $x_1,\ldots,x_l$ matches the order of $S$ on $\Phi$.
Let $\pi_i$ denote the path $\pi_{\{t_1,x_i\}}$ possibly reversed to go from $t_1$ to $x_i$,
and $\pi_i^R$ denote the reverse $\pi_i$ going from $x_i$ to~$t_1$.
The vertices $X$ are used to ``cut off'' $l$ base case instances, as follows.
For $i=0,\ldots,l$, let $S_i$ denote the vertices of $S$ between $x_i$ and $x_{i+1}$ on $\Phi$ (inclusive),
where we set $x_{l+1}:=s_p$, $x_0:=s_1$.
We split the problem $(B,\Phi)$ into the following subproblems (see Figure~\ref{f:split} for better understanding):
\begin{enumerate}[topsep=2pt]
  \item For each $i=1,\ldots,l-1$, an instance $(S_i\cup \{t_1\}, \Phi_{x_i,x_{i+1}}\cdot \pi_{i+1}^R\cdot \pi_i)$.
  \item An instance $(S_l\cup \{t_1\}, \Phi_{x_l,t_1}\cdot \pi_l)$.
  \item An instance $(S_0\cup T, \Phi_{s_1,x_1}\cdot \pi_1^R \cdot \Phi_{t_1,s_1})$.
\end{enumerate}
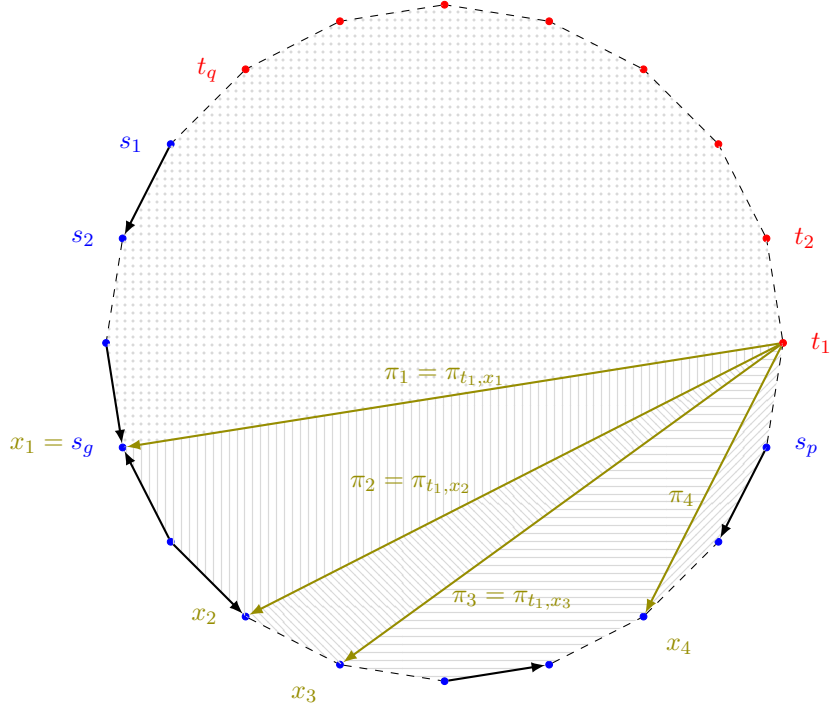
\begin{figure}[ht!]
  \centering

\begin{tikzpicture}[scale=0.9, every node/.style={scale=0.9}]
\foreach \a in {1,2,...,12}{
  \node[draw,circle,fill,inner sep=1pt,color=blue] (s\a) at (\a*360/20+126: 5) {};
}
\foreach \a in {1,2,...,8}{
  \node[draw,circle,fill,inner sep=1pt,color=red] (t\a) at (\a*360/20-18: 5) {};
}
  \node[left=0.2 of s1,blue] {$s_1$};
  \node[left=0.2 of s2,blue] {$s_2$};
  \node[left=0.2 of s4,olive] {$x_1={\color{blue}s_g}$};
  \node[left=0.2 of s6,olive] {$x_2$};
  \node[below left=0.2 of s7,olive] {$x_3$};
  \node[below right=0.2 of s10,olive] {$x_4$};
  \node[right=0.2 of s12,blue] {$s_p$};
  \node[right=0.2 of t1,red] {$t_1$};
  \node[right=0.2 of t2,red] {$t_2$};
  \node[left=0.2 of t8,red] {$t_q$};

  \draw[pattern=vertical lines,pattern color=black!15,draw=none] (s4.center) to (s5.center) to (s6.center) to (t1.center) to (s4.center);

  \draw[pattern=north west lines,pattern color=black!15,draw=none] (s6.center) to (s7.center) to (t1.center) to (s6.center);
  \draw[pattern=horizontal lines,pattern color=black!15,draw=none] (s7.center) to (s8.center) to (s9.center) to (s10.center) to (t1.center) to (s7.center);

  \draw[pattern=north east lines,pattern color=black!15,draw=none] (s10.center) to (s11.center) to (s12.center) to (t1.center) to (s10.center);
  
  \draw[pattern=dots,pattern color=black!15,draw=none] (t1.center) to (t2.center) to (t3.center) to (t4.center) to (t5.center) to (t6.center) to (t7.center) to (t8.center) to (s1.center) to (s2.center) to (s3.center) to (s4.center) to (t1.center);

  \draw[-latex,thick,olive] (t1) to (s4);
  \draw[-latex,thick,olive] (t1) to (s6);
  \draw[-latex,thick,olive] (t1) to (s7);
  \draw[-latex,thick,olive] (t1) to (s10);

  \node[olive] at (0,-0.5) {$\pi_1=\pi_{t_1,x_1}$};
  \node[olive] at (-0.5,-2.1) {$\pi_2=\pi_{t_1,x_2}$};
  \node[olive] at (1,-3.8) {$\pi_3=\pi_{t_1,x_3}$};
  \node[olive] at (3.5,-2.3) {$\pi_4$};

  \draw[-latex,thick] (s1) to (s2);
  \draw[-latex,thick] (s3) to (s4);
  \draw[-latex,thick] (s5) to (s4);
  \draw[-latex,thick] (s5) to (s6);
  \draw[-latex,thick] (s8) to (s9);
  \draw[-latex,thick] (s12) to (s11);

  \draw[dashed] (s2) to (s3);
  \draw[dashed] (s6) to (s7);
  \draw[dashed] (s7) to (s8);
  \draw[dashed] (s9) to (s10);
  \draw[dashed] (s10) to (s11);
  \draw[dashed] (s12) to (t1);

  \draw[dashed] (t1) to (t2);
  \draw[dashed] (t2) to (t3);
  \draw[dashed] (t3) to (t4);
  \draw[dashed] (t4) to (t5);
  \draw[dashed] (t5) to (t6);
  \draw[dashed] (t6) to (t7);
  \draw[dashed] (t7) to (t8);
  \draw[dashed] (t8) to (s1);

\end{tikzpicture}
  \caption{Splitting the instance $(B,\Phi)$, where $B=S\cup T$,  into 5 smaller instances with paths $\pi_1,\ldots,\pi_l$ (either originating or ending in $t_1$) for $l=4$. The vertices $S=\{s_1,\ldots,s_p\}$ are shown in blue, whereas the vertices $T=\{t_1,\ldots,t_q\}$ in red. The black
  arrows and dashed lines represent the individual parts of the curve $\Phi$: paths of the form $\pi_{u,v}$ or parts of the curve $h$, respectively. Note that the black arrows appear only on the $\Phi_{s_1,s_p}$ part of $\Phi$. The vertices
  $x_1,\ldots,x_4$, marked green, are precisely all the vertices of $S\setminus\{s_p\}$ that $t_1$ can reach or can be reached from. The obtained smaller instances are marked with distinct patterns. The instances marked with line patterns (types 1 or 2) are base instances. The instance marked using a dotted pattern (type 3) might constitute the only obtained instance that is not a base instance (for which the algorithm continues).} \label{f:split}
\end{figure}
\begin{lemma}\label{l:req1}
  For each of the above subproblems $(B',\Phi')$, and $u,v\in B'$, if $u$ can reach $v$ in $P$,
  then $\pi_{u,v}\subseteq P[\Phi']$.
\end{lemma}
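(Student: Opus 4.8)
The plan is to show that $\pi_{u,v}$ cannot leave the subproblem region $P[\Phi']$, exploiting that $\pi_{u,v}$ is already confined to $P[\Phi]$ and that the boundary separating $P[\Phi']$ from the rest of $P[\Phi]$ consists only of fragments of the cutting paths $\pi_1,\dots,\pi_l$.

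First I would observe that in each of the three subproblem types the vertex set $B'$ is a subset of $B$: it is either $S_i\cup\{t_1\}$ (types~1 and~2) or $S_0\cup T$ (type~3), and $S_i,S_0\subseteq S\subseteq B$, $t_1\in T\subseteq B$. Hence, if $u$ reaches $v$, invariant~(1) for $(B,\Phi)$ already gives $\pi_{u,v}\subseteq P[\Phi]$, so it remains to rule out that $\pi_{u,v}$ enters $P[\Phi]\setminus P[\Phi']$. Inspecting the three curves $\Phi'$, every piece of each $\Phi'$ is either a subcurve of the original $\Phi$ (some $\Phi_{x,y}$) or one of the paths $\pi_i=\pi_{\{t_1,x_i\}}$; equivalently, $\partial P[\Phi']\setminus\partial P[\Phi]\subseteq\pi_1\cup\dots\cup\pi_l$. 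Since $\pi_{u,v}\subseteq P[\Phi]$ cannot cross $\partial P[\Phi]$ to the outside, if it ever left $P[\Phi']$ it would have to cross one of the paths $\pi_i$.

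The core step is then to rule this out. Both $\pi_{u,v}$ and each $\pi_i$ are unique (perturbed) shortest paths in the acyclic graph $P$, so by Lemma~\ref{l:crossing} any two of them intersect in a single subpath and, crucially, do not cross transversally. The $x_i$ were sorted to appear in the cyclic order of $\partial P[\Phi]$, and $t_1$ also lies on $\partial P[\Phi]$, so the union $Y=\bigcup_{i=1}^{l}\pi_i$ — which by Lemma~\ref{l:crossing} is a tree, the $\pi_i$ sharing only initial segments emanating from $t_1$ — together with $\partial P[\Phi]$ partitions $P[\Phi]$ into exactly $l+1$ faces, one per subproblem. A path that crosses none of the $\pi_i$ (and does not cross $\partial P[\Phi]$) stays inside the closure of a single such face; a ``crossing through a shared subpath'' is impossible for the same reason, since after running along $\pi_i$ and leaving on the opposite side, $\pi_{u,v}$ would have to re-intersect $\pi_i$ to return to the side containing its endpoints. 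Applied to $\pi_{u,v}$, whose endpoints $u,v$ lie on $\partial P[\Phi']$, the enclosing face must be $P[\Phi']$, which is what we want.

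I expect the main obstacle to be the degenerate cases of this topological bookkeeping. One must handle endpoints of $\pi_{u,v}$ that lie on a cutting path (e.g.\ $u\in\{x_i,x_{i+1}\}$ in a type-1 instance, or $u=x_l$ in type~2): whenever such an endpoint reaches or is reached by $t_1$, uniqueness of shortest paths together with optimal substructure forces $\pi_{u,v}$ (or its reverse direction) to be a subpath of the corresponding $\pi_i$, hence part of $\Phi'$ and trivially inside $P[\Phi']$; otherwise $\pi_{u,v}$ leaves such an endpoint into the interior of $P[\Phi']$. One must also account for $t_1\in B'$ being an endpoint of every cutting path (so it lies on $Y$, not strictly inside one face), for the possibility that several $\pi_i$ share edges, and for $\Phi$ and the holes being self-touching rather than Jordan curves — so ``weakly inside'' and ``same side of $\pi_i$'' have to be read via the plane supergraph in which $h$ is the outer face, exactly as in the surrounding argument. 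Finally, it is worth making explicit that the non-crossing property used here is the geometric one — two members of the chosen consistent family of unique shortest paths do not cross in the embedding — of which Lemma~\ref{l:crossing} supplies the combinatorial half.
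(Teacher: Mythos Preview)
Your proposal is correct and follows essentially the same approach as the paper: use invariant~(1) to confine $\pi_{u,v}$ to $P[\Phi]$, observe that leaving $P[\Phi']$ would force a crossing with some cutting path $\pi_{\{t_1,x_i\}}$, and rule this out via Lemma~\ref{l:crossing} together with $u,v\in V(P[\Phi'])$. The paper's version is terser (it notes each subproblem is bounded by at most \emph{two} cutting paths rather than all of $\pi_1,\dots,\pi_l$), and your aside that $Y=\bigcup_i\pi_i$ is a tree with shared initial segments from $t_1$ is not quite accurate since some $\pi_i$ may be directed toward $t_1$ rather than away --- but that claim is inessential to your argument.
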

\begin{proof}
  Note that $(B',\Phi')$ is obtained from $(B,\Phi)$ by cutting it out of $(B,\Phi)$
  with at most two paths $\pi_{\{t_1,a\}},\pi_{\{t_1,b\}}$, for some
  $t_1,a,b\in B'\subseteq B$.
  By our assumption, $\pi_{u,v}\subseteq P[\Phi]$.
  As a result, if $\pi_{u,v}$ was not contained in $P[\Phi']$,
  then $\pi_{u,v}$ would need to cross either $\pi_{\{t_1,a\}}$ or $\pi_{\{t_1,b\}}$.
  However, this is impossible by Lemma~\ref{l:crossing} and since $u,v\in V(P[\Phi'])$.
\end{proof}
\begin{lemma}
  The obtained instances of types 1 and 2 above fall into the base case.
\end{lemma}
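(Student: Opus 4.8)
Recall that an instance $(B',\Phi')$ -- with $B'$ enumerated $b_1,\ldots,b_k$ in the cyclic order in which its elements appear on $\Phi'$ -- falls into the base case exactly when every pair $b_i,b_j$ joined by a directed path of $P$ (in either direction) is cyclically consecutive, i.e.\ $j=i+1$ or $(i,j)=(1,k)$. The plan is to verify this condition directly from the invariants maintained by the procedure together with the defining property of the set $X$.

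First I would record the shape of the two instance types. In both cases $B'=S'\cup\{t_1\}$, where $S'$ is a \emph{contiguous} block $\{s_a,s_{a+1},\ldots,s_e\}$ of $S$: for a type~1 instance with index $i\in\{1,\ldots,l-1\}$ we have $s_a=x_i$ and $s_e=x_{i+1}$, and for the type~2 instance we have $s_a=x_l$ and $s_e=s_p$. Contiguity of the indices holds because $S_i$ (resp.\ $S_l$) is, by definition, the set of vertices of $S$ lying between two consecutive vertices of $X$ on $\Phi$, and $X$ is listed in the order of $S$. Tracing the construction of $\Phi'$ out of an arc $\Phi_{x_i,x_{i+1}}$ (resp.\ $\Phi_{x_l,t_1}$) of $\Phi$ followed by the paths $\pi_{i+1}^R,\pi_i$ (resp.\ $\pi_l$), one sees that the elements of $B'$ occur along $\Phi'$ in the cyclic order $s_a,s_{a+1},\ldots,s_e,t_1$ (cf.\ Figure~\ref{f:split}); note $s_a\neq s_e$ since $x_i\neq x_{i+1}$ (resp.\ $x_l\neq s_p$ as $x_l\in Z$), so $k\geq 3$. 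It then suffices to show: (a) no two of $s_a,\ldots,s_e$ with non-consecutive indices are joined by a directed path of $P$; and (b) among $\{s_a,\ldots,s_e\}$, only the endpoints $s_a$ and $s_e$ can be joined to $t_1$ by a directed path of $P$.

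Part~(a) is immediate from the second invariant: if $a\le c<d\le e\le p$ and $s_c$ reaches $s_d$ or $s_d$ reaches $s_c$ in $P$, that invariant forces $d=c+1$, so $s_c,s_d$ are consecutive in $B'$. Part~(b) is where the real work lies, and here I would invoke the fourth invariant ($Z=S\setminus\{s_p\}$) together with the defining property of $X$: $X=\{x_1,\ldots,x_l\}$ is exactly the set of vertices of $Z$ joined to $t_1$ by a directed path of $P$ in one direction or the other, and $x_1,\ldots,x_l$ increase along $S$. For a type~1 instance, $s_e=x_{i+1}\in X\subseteq Z$ shows $s_e\neq s_p$, hence the whole block $\{s_a,\ldots,s_e\}$ lies in $Z$; a vertex of this block is thus joined to $t_1$ iff it belongs to $X$, and since $x_i$ and $x_{i+1}$ are \emph{consecutive} in $X$ no element of $X$ lies strictly between them, so the only candidates are $s_a=x_i$ and $s_e=x_{i+1}$. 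For the type~2 instance, $s_e=s_p$; any $s_c$ with $c<p$ lies in $Z$ and can be joined to $t_1$ only if $s_c\in X$, but $x_l=s_a$ is the \emph{last} element of $X$ in the order of $S$, leaving $s_a$ as the only possibility in $\{s_a,\ldots,s_{p-1}\}$; the remaining vertex $s_p=s_e$ is in any case cyclically adjacent to $t_1$ in $B'$, so it is harmless whether or not it is joined to $t_1$.

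The one genuinely delicate point -- the one I would spell out most carefully -- is exactly this $s_p\notin Z$ gap: membership in $X$ certifies a reachability relation with $t_1$ only for vertices of $Z$, so one must argue separately that $s_p$ cannot create a base-case violation. For type~1 instances this is handled by first checking that the block $\{s_a,\ldots,s_e\}$ is contained in $Z$ (so $s_p$ never occurs there), and for the type~2 instance by the placement of $s_p$ immediately before $t_1$ in the cyclic order of $B'$; everything else is routine bookkeeping, and for $l<2$ the type~1 part of the claim is vacuous. Combining (a) and (b), every reachability-related pair of $B'$ is cyclically consecutive, i.e.\ $(B',\Phi')$ satisfies the additional requirement of the base case.
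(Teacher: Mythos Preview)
Your argument is correct and follows the same two-step outline as the paper's proof: invariant~(2) handles reachability among the $S_i$ part, and the definition of $X$ together with $Z=S\setminus\{s_p\}$ handles reachability between $t_1$ and $S_i$. The paper compresses this into two sentences (``by the invariant posed on $S$\ldots'' and ``by construction, $t_1$ can be only related to the first and the last element of $S_i$''), whereas you spell out the case split and in particular the $s_p\notin Z$ wrinkle for the type~2 instance---a point the paper leaves implicit.
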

\begin{proof}
  To see that the base case requirement is satisfied, recall that by the invariant posed on $S$,
  if two elements of $S_i$, where $i\in \{1,\ldots,l\}$,
  are related (wrt. reachability in $P$), they are neighboring in $S_i$.
  By construction, $t_1$ can be only related to the first and the last element of $S_i$.
\end{proof}
Since the cutting is performed using non-crossing paths in $P$, the regions $P[\Phi']$ for the obtained subproblems can only share
their boundaries, that is, if some $v$ is strictly inside one of the curves~$\Phi'$,
then it it is not strictly inside another obtained curve~$\Phi''$.
Therefore, if we proceeded with the above subproblems recursively, we would
not aggregate or report any vertex of $v\in \Pi_P(B)\setminus \bnd{P}$ twice.
However, we still need to report/aggregate vertices that lie on paths
$\pi_1,\ldots,\pi_l$ strictly inside the curve $\Phi$.
We now discuss how this strategy is implemented.
Let us first consider solving the subproblems recursively.
We handle all the obtained base case instance of types 1 and 2
as explained before. If $x_1=s_g$, then by Lemma~\ref{l:base-time}, the total time required for this
is $O\left(\sum_{i=1}^l(|S_i|+1)\right)=O((p - g)+l)$.
But note that $l\leq p-g$, so in fact the bound is $O(p-g)$.
       
To handle the instance $(S_0\cup T, \Phi_{s_1,x_1}\cdot \pi_1^R \cdot \Phi_{t_1,s_1})$, we simply
replace $(B,\Phi)$ with it and proceed with solving it using the algorithm for the general case.
To this end, we
set $S:=S_0\cup \{t_1\}$, $T:=\{t_2,\ldots,t_q\}$ and update $Z$ in the data structures $\anyds,\anyds^R$
to $S_0$ by removing elements.
Then, invariant (1) is satisfied by Lemma~\ref{l:req1}, and invariants (2), (3) and (4) are satisfied by construction.
This way, in $\Ot(\Delta)$ time we reduce the instance $(B,\Phi)$ by $\Delta=|S|-|S_0|=p-g$
vertices. Recall that $Z=S\setminus\{s_p\}$ implies that $g<p$. Thus, $\Delta\geq 1$ and
the sizes of $B$ and $T$ strictly decrease.

Let us now discuss how to aggregate/report the vertices of 
$\Pi_P(B)\setminus \bnd{P}$ that lie on any of the paths $\pi_1,\ldots,\pi_l$
(that are not handled in any of the subproblems), but at the same
time lie strictly inside $\Phi$ (before altering $(B,\Phi)$).
Since $\Phi$ is formed of either the edges of the hole $h$, or from the paths $\pi_{\{u,v\}}$, and each $\pi_i$
is contained in $P[\Phi]$, equivalently we need to aggregate the vertices
of $\Pi_P(B)\setminus \bnd{P}$ on the paths $\pi_1,\ldots,\pi_l$ that do
not lie on $\Phi$.

Observe that since $x_1,\ldots,x_l$ lie on $\Phi$ in that order,
and the paths $\pi_1,\ldots,\pi_l$ all lie in $P[\Phi]$ and are non-crossing by Lemma~\ref{l:crossing},
for any three $i<j<k$, we have $V(\pi_i)\cap V(\pi_k)\subseteq V(\pi_j)$.
As a result, any vertex on these paths
is included in precisely one of the sets $V(\pi_i)\setminus V(\pi_{i-1})$,
for $i=1,\ldots,l$ and $V(\pi_0):=\emptyset$.
Moreover, in Lemma~\ref{l:atmost5} we will show that each $\pi_i$ can possibly have a non-empty intersection
with $O(1)$ parts (between neighboring elements of $B$) of $\Phi$,
that we can also identify in $O(1)$ time.
Since, by Lemma~\ref{l:crossing}, for every path $\pi_{\{u,v\}}$,
the intersection of $\pi_{\{u,v\}}$ with $\pi_i$ is either empty
or forms a subpath of $\pi_i$, 
aggregating or reporting the required vertices of $\pi_i$
boils down to aggregating or reporting the vertices
of $V(P)\setminus \bnd{P}$ on some $O(1)$ subpaths of $\pi_i$ that form
what remains from $\pi_i$ after removing $O(1)$ of its intersections with other paths $\pi_{u,v}$.

\begin{restatable}{lemma}{latmostfive}\label{l:atmost5}
  Consider the moment when the split into subproblems happens.
  Suppose $x_i=s_j$. Let $u,v\in B$ be neighboring on $\Phi$,
  so that $u$ comes before $v$ in the counterclockwise order $\prec_h$ on $\Phi$.
  Then $(V(\pi_i)\cap V(\Phi_{u,v}))\setminus\bnd{P}\neq\emptyset$ implies
  that $u=s_{j'}$ for some $j'\in \{1,\ldots,p\}$ such that $|j-j'|\leq 2$.
\end{restatable}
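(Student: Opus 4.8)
The plan is to first pin down which part of $\Phi$ the portion $\Phi_{u,v}$ can be, and then to combine the non-crossing property of Lemma~\ref{l:crossing} with invariant~(2) on $S$.

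First I would argue that $\Phi_{u,v}=\pi_{\{u,v\}}$. By construction $\Phi_{u,v}$ is either $\pi_{\{u,v\}}$ or the sub-curve $C_{u,v}$ of the bounding walk of the hole $h$; since only vertices of $\bnd{P}$ lie on $h$, every vertex of $C_{u,v}$ is in $\bnd{P}$, so a vertex $w\in (V(\pi_i)\cap V(\Phi_{u,v}))\setminus\bnd{P}$ (which exists by hypothesis and lies in $V(P)$ as $\pi_i\subseteq P$) forces $\Phi_{u,v}=\pi_{\{u,v\}}$. Invariant~(3) then yields $u,v\in S$, and since the vertices of $S$ appear on $\Phi$ exactly in the order $s_1,\dots,s_p$ while the successor of $s_p$ on $\Phi$ is $t_1\in T$ (a split occurs only inside the main loop, so $T\neq\emptyset$), two neighbours on $\Phi$ that both lie in $S$ must be $s_{j'},s_{j'+1}$ for some $j'\in\{1,\dots,p-1\}$. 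It thus remains to show $|j-j'|\le 2$.

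Next I would apply Lemma~\ref{l:crossing} to the two simple paths $\pi_i=\pi_{\{t_1,x_i\}}=\pi_{\{t_1,s_j\}}$ and $\pi_{\{s_{j'},s_{j'+1}\}}$; their vertex sets intersect (both contain $w$), hence they share a subpath $T'$, directed $a'\to b'$ consistently in both. Reading off reachabilities in $P$: along $\pi_i$, its source reaches $a'$ and $b'$ reaches its sink; along $\pi_{\{s_{j'},s_{j'+1}\}}$, its source reaches $a'$ and $b'$ reaches its sink; and $a'$ reaches $b'$. Composing these, the source of $\pi_i$ reaches the sink of $\pi_{\{s_{j'},s_{j'+1}\}}$, and the source of $\pi_{\{s_{j'},s_{j'+1}\}}$ reaches the sink of $\pi_i$. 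Since $P$ is a DAG, each of $\{t_1,s_j\}$ and $\{s_{j'},s_{j'+1}\}$ splits into a definite source/sink pair, and $s_j$ is either the source or the sink of $\pi_i$; in either case one of the two composed reachabilities relates $s_j$ (in $P$) to some $s_m$ with $m\in\{j',j'+1\}$. Invariant~(2) states that two related elements of $S$ have consecutive indices, so $|m-j|\le 1$, i.e.\ $m\in\{j-1,j,j+1\}$; as $j'\in\{m-1,m\}$, this gives $j'\in\{j-2,j-1,j,j+1\}$, hence $|j-j'|\le 2$.

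I expect the one genuinely fiddly point to be the bookkeeping in this last step: one must check that, regardless of which endpoint of $\pi_i$ equals $t_1$ and which orientation $\pi_{\{s_{j'},s_{j'+1}\}}$ has, the composed reachability one keeps is precisely the one relating $s_j$ to a vertex of $\{s_{j'},s_{j'+1}\}$ — the other composed reachability only relates $t_1$ to a vertex of $S$ and is discarded. I would also double-check the degenerate subcases where $T'$ is a single vertex (so $a'=b'=w$) or where $a'$ or $b'$ coincides with an endpoint of one of the two paths; in all of these the reachability relations invoked above hold trivially, so the argument is unaffected.
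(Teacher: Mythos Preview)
Your proposal is correct and follows essentially the same approach as the paper: first use invariant~(3) to force $\Phi_{u,v}=\pi_{\{s_{j'},s_{j'+1}\}}$, then compose a path from $s_j$ to some $s_m\in\{s_{j'},s_{j'+1}\}$ via a common vertex of the two $\pi$-paths, and conclude $|j-j'|\le 2$ from invariant~(2). The only cosmetic difference is that the paper picks a single intersection vertex $z$ and directly splices the two paths at $z$, whereas you invoke Lemma~\ref{l:crossing} to get a shared subpath before composing; both arrive at the same composed reachability and the same case analysis.
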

\begin{proof}
  Recall that the curve is $\pi_i$ is either the path $\pi_{\{x_i,t_1\}}$ or its reverse.
  Let us only consider the case when $\pi_{\{x_i,t_1\}}=\pi_{x_i,t_1}$; the
  case when $\pi_{\{x_i,t_1\}}=\pi_{t_1,x_i}$ is analogous.

  By invariant~(3), we have that $u,v\in S$ since otherwise $\Phi_{u,v}$ is a part
  of the curve $h$ and therefore does
  not contain any vertices from outside $V(h)$.
  So $\{u,v\}=\{s_{j'},s_{j'+1}\}$ for some $j'\in \{1,\ldots,p-1\}$.
  Let $z\in (V(\pi_i)\cap V(\Phi_{u,v}))\setminus \bnd{P}$.
  If $\Phi_{u,v}=\pi_{s_{j'},s_{j'+1}}$, then the $z\to v$ subpath
  of $\Phi_{u,v}$ and the $x_i\to z$ subpath of $\pi_i$ together form
  an $s_j\to s_{j'+1}$ path in $P$, which by invariant~(2) implies $j'+1\in \{j-1,j,j+1\}$,
  and thus $j'\in [j-2,j]$.
  If $\Phi_{u,v}=\pi_{s_{j'+1},s_{j'}}$, then, analogously, there exists
  an $s_{j}\to s_{j'}$ path in $P$, so $j'\in [j-1,j+1]$.
\end{proof}

By Lemma~\ref{l:atmost5}, for each $\pi_i$, we need to report all vertices
of $\pi_i$ from outside $\bnd{P}$, except those on the union of at most $6$ subpaths
of $\pi_i$. Since the subpaths are always intersections of some two paths
$\pi_{u,v}$, we can identify these subpaths using precomputed data in $O(1)$ time.
Aggregating vertex weights not on at most $6$ subpaths of $\pi_i$ is the same as aggregating
weights on at most $7$ disjoint subpaths of $\pi_i$.
Recall that we have precomputed the aggregate weights for all the subpaths of all the possible $\pi_{u,v}$.
As a result, aggregating or reporting vertices 
$\Pi_P(B)\setminus \bnd{P}$ that lie on any of the paths $\pi_1,\ldots,\pi_l$
takes $O(l)=O(p-g)$ time.
\begin{lemma}
  The general case can be handled in $\Ot(|B|)$ time.
\end{lemma}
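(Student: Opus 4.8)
Correctness of the procedure is already in place: invariants (1)--(4) are preserved by construction and by Lemma~\ref{l:req1}, and together with Lemma~\ref{l:atmost5} they guarantee that each vertex of $\Pi_P(B)\setminus\bnd{P}$ lying strictly inside $\Phi$ is accounted for exactly once --- either in exactly one of the subproblems of types 1--3, or on exactly one of the prefixes $V(\pi_i)\setminus V(\pi_{i-1})$ of the cutting paths. So what remains is purely an amortized running-time analysis of the main loop. The whole argument rests on two monotonicity observations. First, every iteration removes $t_1$ from $T$ (in a type-(a) step it is appended to $S$; in a type-(b) step it becomes the last element of the new set $S_0\cup\{t_1\}$) and $T$ never grows, so, since initially $|T|=|B|-2$, the loop runs for at most $|B|$ iterations. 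Second, $|B|=|S|+|T|$ never increases: a type-(a) step leaves it unchanged, while a type-(b) step decreases it by exactly $\Delta:=p-g\ge 1$, the number of elements $s_{g+1},\dots,s_p$ of $S$ that are consumed into base instances. Hence $\sum\Delta\le|B|$, the sum taken over all type-(b) iterations.

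Next I would bill the per-iteration cost. A type-(a) iteration performs two existential-reachability tests on $\anyds,\anyds^R$ (Lemma~\ref{l:any-reach}), the constant-time move of $t_1$, and one insertion into $Z$, hence runs in $O(\polylog{n})$ time; over at most $|B|$ iterations this contributes $\Ot(|B|)$. A type-(b) iteration with parameter $\Delta$ runs in $\Ot(\Delta)$ time: extracting and sorting $X=\{x_1,\dots,x_l\}$ from $\anyds,\anyds^R$ takes $\Ot(l)$ time and $l\le\Delta$; solving the $l$ base instances of types 1--2 takes $O\bigl(\sum_{i}(|S_i|+1)\bigr)=O(\Delta)$ time by Lemma~\ref{l:base-time}; reporting/aggregating the vertices of $\pi_1,\dots,\pi_l$ strictly inside $\Phi$ takes $O(l)=O(\Delta)$ time, since by Lemma~\ref{l:atmost5} each $\pi_i$ meets only $O(1)$ parts $\Phi_{u,v}$ of $\Phi$, each intersection $\pi_{\{u,v\}}\cap\pi_i$ is (by Lemma~\ref{l:crossing}) a subpath that we have located during preprocessing, and aggregate weights of all subpaths of every $\pi_{\{u,v\}}$ are precomputed; and finally rebuilding $(B,S,T,\Phi)$ and shrinking $Z$ from $S\setminus\{s_p\}$ to $S_0$ (i.e.\ removing $\Delta-1$ elements, each in $O(\polylog{n})$ time) also takes $\Ot(\Delta)$ time.

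Summing the type-(b) costs over all such iterations and using $\sum\Delta\le|B|$ gives $\Ot(|B|)$. Adding the initialization (sorting $B$ by $\prec_h$, setting $\Phi=h$, inserting $b_1$ into $Z$), which is $\Ot(|B|)$, and the terminal step when $T=\emptyset$, which is a single base-case call on $(S,\Phi)$ costing $\Ot(|S|)\le\Ot(|B|)$ by Lemma~\ref{l:base-time}, we obtain a total of $\Ot(|B|)$. The same bound holds in reporting mode: the iterator for $(B,\Phi)$ is the concatenation of the iterators returned by the sub-instances and the (constant-per-$\pi_i$) path-segment iterators, and building this chain costs $\Ot(|B|)$ by the identical amortization, after which each element is produced in $O(1)$ worst-case time.

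The only genuinely non-obvious step is the amortization in the first paragraph: one must notice that, although a type-(a) step does not shrink $|B|$ and a type-(b) step may be individually expensive, every type-(a) step is cheap and permanently removes one vertex from $T$, while the total cost of the expensive type-(b) steps is proportional to the total number of $S$-vertices they permanently eliminate from $B$, which is at most $|B|$. Once this is in hand, the lemma follows by routinely summing the per-operation bounds of Lemmas~\ref{l:base-time}, \ref{l:atmost5}, and~\ref{l:any-reach}.
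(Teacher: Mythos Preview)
Your proof is correct and follows essentially the same amortization as the paper: cheap iterations each decrement $|T|$ and cost $O(\polylog n)$, while expensive iterations each shrink $|B|$ by some $\Delta\ge 1$ in $\Ot(\Delta)$ time, so both types sum to $\Ot(|B|)$. Your write-up is considerably more detailed than the paper's three-sentence version (you spell out the $Z$-maintenance cost, the path-segment aggregation via Lemmas~\ref{l:crossing} and~\ref{l:atmost5}, and the reporting-mode iterator chaining), but the underlying argument is identical.
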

\begin{proof}
  Recall that when $T=\emptyset$, we have a base instance that can be solved
  in $O(|B|)$ time.

  Every iteration of the main loop that does not involve changing $B$
  takes $O(\polylog{n})$ time and reduces the size of $T$ by one.
  But the set $T$ can shrink at most $|B|$ times,
  so such iterations cost $\Ot(|B|)$ time in total.
  Every other iteration of the main loop involves reducing the size of $B$
  by some $\Delta>0$ in $\Ot(\Delta)$ time.
  Such iterations clearly cost $\Ot(|B|)$ time in total as well.
\end{proof}

\newcommand{\first}{\textrm{first}}
\newcommand{\last}{\textrm{last}}

\subsection{Handling multiple-hole queries}\label{s:multiple-hole}
In this section we discuss how to drop the simplifying assumption that the query set $A\subseteq \bnd{P}$
lies entirely on a single hole of $P$.

Before we continue, let us remark that in Section~\ref{s:ds-planar} we did not fully leverage
the fact that the holes' bounding cycles of a piece of an $r$-division with few holes
contain vertices from $\bnd{P}$ exclusively.
Indeed, we only used that (i) $A$ lies on a single face $h$ of $P$, and (ii) that there was no
need to report or aggregate vertices lying on the bounding cycle of $h$ -- indeed,
in a piece of an $r$-division with few holes no vertex of $V(P)\setminus \bnd{P}$ lies on the bounding cycle of $h$.

In the following, we will slightly relax the problem: 
define $\bnd^*{P}$ to contain
all the vertices of $P$ on the bounding cycles of $h_1,\ldots,h_k$,
and let $\bnd{P}$ be an arbitrary fixed subset of $\bnd^*{P}$ of size $O(\sqrt{r})$
(but the set $\bnd^*{P}$ can be possibly larger).
Given $A\subseteq \bnd{P}$, the query's objective is to aggregate/report
vertices of $\Pi_P(A)\setminus \bnd^*{P}$.
By the above discussion, the data structure of Section~\ref{s:ds-planar} can
actually solve this slightly more general problem within the same bounds.
Moreover, as discussed in Section~\ref{s:acyclic}, we no longer use the assumption
of Theorem~\ref{t:ds-planar} that $\Pi_P(A)\cap \bnd{P}=A$ which was only required
to reduce to the acyclic case.

Denote by $\bnd_i^*{P}$, for $i=1,\ldots,k$, the vertices of $\bnd^*{P}$ that lie on the hole $h_i$.
Recall that $k=O(1)$.
The sets $\bnd_1^*{P},\ldots,\bnd_k^*{P}$ are not necessarily disjoint.
We now describe how to reduce the problem, roughly speaking, to the case when $P$ has one hole less.
The reduction will require $O(r^3)$ additional preprocessing
and will add only $O(|A|)$ overhead to the query procedure.
The reduction will preserve acyclicity.

\subsubsection{Preprocessing}
Suppose first that there exist no such $s\in \bnd_i^*{P}$ and $t\in \bnd_j^*{P}$, where $i\neq j$,
that $s$ can reach $t$ in~$P$. This can be clearly checked in $O(|P|)=O(r)$ time
by running graph search $k$ times.
In this case, we have that $\Pi_P(A)\setminus \bnd^*{P}$ is a disjoint sum of
$\Pi_P(A\cap \bnd^*_i{P})\setminus \bnd^*_i{P}$. Hence,
all we do is simply apply the preprocessing of Section~\ref{s:ds-planar} $k$ times,
for each hole $h_i$ independently, i.e., we set $\bnd^*{P}:=\bnd^*_i{P}$ and $\bnd{P}:=\bnd{P}\cap \bnd^*_i{P}$
in the $i$-th single-hole data structure. This takes $\Ot(kr^3)=\Ot(r^3)$ time.

Let us now assume that such $s,t$ exist and pick them so that we can find a \emph{simple} $s\to t$ path~$Q$ such
that $V(Q)\cap \bnd^*{P}=\{s,t\}$ (we might possibly have $s=t$). For example, such a path $Q$ can be found in $O(r)$ time
by computing, for each pair of holes of $P$, a shortest path between these two holes,
and taking the shortest out of these paths as $Q$. Wlog. let us renumber the sets $\bnd^*_i{P}$
so that $s\in \bnd^*_1{P}$ and $t\in \bnd^*_2{P}$.

For every $b\in\bnd{P}$, we compute (i) the first vertex $\first_b$ on the path $Q$ that $b$ can reach in $P$,
and (ii) the last vertex $\last_b$ on $Q$ that can reach $b$ in $P$.
All the vertices $\first_b$ and $\last_b$ (that exist) can be computed in $O(r)$ time via graph search (see e.g.~\cite{Thorup04}).

Let us obtain a graph $P'$ from $P$ by cutting along the path $Q=u_1\ldots u_q=s\to t$. Formally, the path $Q$ is
replaced with two copies $Q'=u_1'\ldots u_q'$ and $Q''=u_1''\ldots u_q''$,
but the copies $u_i',u_i''$ of the vertex $u_i$ are not connected. Neither the vertices nor the edges of $Q$
are contained in the graph~$P'$, only $Q'$ and $Q''$ are there.
For all $v\in V(P)\setminus V(Q)$, for convenience we set $v'=v'':=v$.
Clearly, $P'$ is acyclic if $P$ is acyclic.

We set the boundary of $P'$ to be:
\begin{equation*}
  \bnd{P'}:=\bnd{P}\setminus \{s,t\}\cup \bigcup_{b\in \bnd{P}}\{\first_b',\first_b'',\last_b',\last_b''\}.
\end{equation*}

Note that the two holes $h_1,h_2$ of $P$ are merged into a single hole $h$ in $P'$. As a result, and
since all the vertices $\bnd{P'}\setminus \bnd{P}$ lie on $h$,
$P'$ has less holes (that is, faces containing boundary vertices $\bnd{P'}$) than~$P$.
Moreover, we have (recall that $\bnd^*{P'}$ denotes all vertices on the holes of $P'$):
\begin{equation*}
  \bnd^*{P'}:=\bnd^*{P}\setminus \{s,t\}\cup V(Q')\cup V(Q'').
\end{equation*}
Observe that $|P'|\leq 3|P|=O(r)$ and $|\bnd{P'}|\leq 5|\bnd{P}|$ so $|\bnd{P'}|=O(\sqrt{r})$.
We recursively build a data structure supporting desired queries $A'\subseteq \bnd{P'}$ on the graph
$P'$ as defined above.

Finally, for all the subpaths $R$ of $Q$ we precompute the aggregate weights
of all the vertices of $V(R)\setminus \bnd^*{P}=V(R)\setminus \{s,t\}$.
This can be done in $O(r^3)$ time using a brute force algorithm.

Since the piece $P$ has initially size $O(r)$, $|\bnd{P}|=O(\sqrt{r})$ and $\bnd{P}$
lies on $k=O(1)$ holes of~$P$, after at most $k-1$ recursive preprocessing steps
we will arrive at the single-hole case (when a pair $s,t$ cannot be found, possibly because $k=1$).
In each of the recursive data structures, the input piece has size $2^{O(k)}\cdot O(r)=O(r)$,
and has $|\bnd{P}|=2^{O(k)}\cdot O(\sqrt{r})=O(\sqrt{r})$.
At each of the $O(1)$ recursive levels, the additional preprocessing takes $O(r^3)$ time.

\subsubsection{Query}

Given $A\subseteq\bnd{P}$, the query procedure is as follows. 
If $P$ is at the lowest level of the recursive data structure,
then we split $A$ into $O(1)$ sets
$A_i=A\cap \bnd_i^*{P}$.
For each of the sets $A_i$ we run the single-hole query procedure independently.
Since there are no paths between boundary vertices on distinct
holes, we can aggregate the results in an obvious way.
This is justified by the fact that $\Pi_P(A)\setminus \bnd^*{P}$ is a disjoint union
of the $O(1)$ sets $\Pi_P(A\cap \bnd_i^*{P})\setminus \bnd^*{P}$.
Clearly, the query algorithm runs in $\Ot(|A|)$ time in this case.

Otherwise, if $P$ is not leaf-level in the recursion, let $\prec$ denote the natural
order on vertices of the path $Q$.
Let $f=\min_{\prec}\{\first_a:a\in A\}$ and $l=\max_{\prec}\{\last_a:a\in A\}$.
We construct the set
\begin{equation*}
  A':=\begin{cases}
    (A\cap \bnd{P'})\cup \{f',f'',l',l''\}&\text{ if }f\preceq l,\\
    A\cap \bnd{P'}&\text{ otherwise. }
    \end{cases}
\end{equation*}
Clearly, $A'\subseteq \bnd{P'}$ can be constructed in $O(|A|)$ time and has size $O(|A|)$.
Note that $A'$ might not satisfy $\Pi_{P'}(A')\cap \bnd{P'}=A'$, but
we do not need that assumption if $P'$ is acyclic, as discussed before.
We issue a query about $A'$ to the data structure built recursively
on $P'$ which has one hole less.
That data structure produces aggregate weight $\sigma$ of $\Pi_{P'}(A')\setminus \bnd^*{P'}$
or an iterator reporting the elements of that set in $O(1)$ time.
To lift $\sigma$ to the desired aggregate weight for $\Pi_{P}(A)\setminus \bnd^*{P}$,
we simply additionally add the precomputed aggregate weight of the subpath
$Q[f,l]$ (except $s,t\in \bnd^*{P}$).
To lift the reporting function, we simply report the vertices $V(Q[f,l])\setminus \{s,t\}$
after we have finished reporting $\Pi_{P'}(A')\setminus \bnd^*{P'}$.
Recall that $V(Q[f,l])\setminus \{s,t\}$ is disjoint 
with $\Pi_{P'}(A')\setminus \bnd^*{P'}$ by construction.

We now prove the correctness of this algorithm.

\begin{lemma}
  For any $A\subseteq \bnd{P}$, we have $\Pi_{P}(A)\setminus \bnd^*{P}=(\Pi_{P'}(A')\setminus \bnd^*{P'})\cup (V(Q[f,l])\setminus \{s,t\})$.
\end{lemma}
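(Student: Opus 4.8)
The plan is to prove the two set inclusions separately, using the structure of the cut along $Q$ and the definitions of $f,l,A'$. Throughout, I will freely use that the correspondence $v\mapsto \{v',v''\}$ identifies vertices of $P\setminus V(Q)$ with vertices of $P'$, that every $u_i\in V(Q)$ gives rise to two "copies" $u_i',u_i''$ in $P'$, and that a directed path in $P$ that internally touches $Q$ corresponds (after choosing which copy $Q'$ or $Q''$ it uses on each maximal subpath inside $Q$) to a directed path in $P'$, and vice versa — this is the standard consequence of cutting a plane graph along a simple path whose interior avoids the boundary. The key auxiliary fact I will establish first is: \emph{a vertex $v\in V(P)\setminus\bnd^*{P}$ lies in $\Pi_P(A)$ but fails to lie in $\Pi_{P'}(A')$ (under the natural copy-identification) precisely when every $A$--to--$A$ path in $P$ through $v$ must pass through the interior of $Q$ on both "sides" of $v$, i.e.\ $v\in V(Q)$ and $v\in V(Q[f,l])\setminus\{s,t\}$.} Granting this, the lemma follows.

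\textbf{Inclusion $\supseteq$.} I first check $V(Q[f,l])\setminus\{s,t\}\subseteq \Pi_P(A)\setminus\bnd^*{P}$. If $f\preceq l$ then by definition of $f$ there is $a\in A$ with $\first_a=f$, so $a$ reaches $f$ in $P$; by definition of $l$ there is $b\in A$ with $\last_b=l$, so $l$ reaches $b$ in $P$; and $f$ reaches $l$ along the subpath $Q[f,l]$. Hence every vertex of $Q[f,l]$ is reachable from $a\in A$ and reaches $b\in A$, so it lies in $\Pi_P(A)$; and its interior vertices are not in $\bnd^*{P}$ since $Q$ was chosen with $V(Q)\cap\bnd^*{P}=\{s,t\}$. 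Next, $\Pi_{P'}(A')\setminus\bnd^*{P'}\subseteq \Pi_P(A)\setminus\bnd^*{P}$: take $v'\in\Pi_{P'}(A')$ with $v\notin\bnd^*{P}$. There are paths in $P'$ from some $a_1'\in A'$ to $v'$ and from $v'$ to some $a_2'\in A'$. Map these back to $P$ (merging $Q',Q''$ back into $Q$): we get $a_1$-to-$v$ and $v$-to-$a_2$ walks in $P$, hence $v\in\Pi_P(A)$ \emph{provided} $a_1,a_2\in \Pi_P(A)$. This is where I use that the four "extra" generators $\{f',f'',l',l''\}$ of $A'$ were themselves shown above to be in $\Pi_P(A)\setminus\bnd^*{P}$ (and $Q$'s copies map to $Q$), while the generators coming from $A\cap\bnd{P'}$ are a subset of the original $A$; so in all cases $a_1,a_2$ lie in $\Pi_P(A)$ and therefore $v\in\Pi_P(A)$. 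Finally $v\notin\bnd^*{P}$ by hypothesis, so $v\in\Pi_P(A)\setminus\bnd^*{P}$. Disjointness of the two pieces on the right is clear: $V(Q[f,l])\setminus\{s,t\}$ consists of interior vertices of $Q$, which are not vertices of $P'$ at all (they were split into two non-adjacent copies), whereas $\Pi_{P'}(A')\setminus\bnd^*{P'}$ consists of vertices of $P'$ not on $Q'\cup Q''$.

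\textbf{Inclusion $\subseteq$.} Let $v\in\Pi_P(A)\setminus\bnd^*{P}$; fix $a_1,a_2\in A$ and a path $R_1=a_1\to v$ and a path $R_2=v\to a_2$ in $P$. I split into two cases. \emph{Case (a): $v\notin V(Q)$.} Then I want to produce $A'$-to-$v$ and $v$-to-$A'$ paths in $P'$. Walk along $R_1$ from $a_1$ toward $v$: either it reaches $v$ without ever touching $Q$ — then it survives verbatim in $P'$ and witnesses reachability of $v'$ from $a_1\in A\cap\bnd{P'}\subseteq A'$ (note $a_1\ne s,t$ here, which I argue since if $a_1\in\{s,t\}$ one can replace $a_1$ by the relevant copy, or more cleanly reroute through $f$) — or it hits $Q$; let $u_i$ be the \emph{last} vertex of $Q$ on $R_1$ before $v$. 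Then $a_1$ reaches $u_i$, so $f\preceq \first_{a_1}\preceq u_i$; also the $u_i\to v$ suffix of $R_1$ avoids $Q$'s interior. Now in $P'$: take the edge-path of $Q[f,u_i]$ along copy $Q'$, arriving at $u_i'$, then follow the (untouched) suffix $u_i\to v$, arriving at $v'$; this is a path in $P'$ from $f'\in A'$ to $v'$. Symmetrically, using $R_2$ and $l$, one builds a $v'\to l''$ path in $P'$ (choosing the other copy $Q''$ so as not to create spurious crossings — the planarity / non-crossing choice is exactly what makes the two routes compatible). Hence $v'\in\Pi_{P'}(A')$, and $v\notin\bnd^*{P}$ combined with $v\notin V(Q)$ gives $v\notin\bnd^*{P'}$, so $v'\in\Pi_{P'}(A')\setminus\bnd^*{P'}$. \emph{Case (b): $v\in V(Q)$.} Since $v\notin\bnd^*{P}$, $v$ is an interior vertex of $Q$, so $s\ne v\ne t$. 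From $R_1$, $a_1$ reaches $v$, hence $f\preceq\first_{a_1}\preceq v$; from $R_2$, $v$ reaches $a_2$, hence $v\preceq\last_{a_2}\preceq l$. Therefore $v\in V(Q[f,l])\setminus\{s,t\}$, landing in the second set on the right-hand side. This exhausts the cases and completes the inclusion.

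\textbf{Main obstacle.} The routine-but-delicate point — the one I would write out most carefully — is the correspondence between paths in $P$ and paths in $P'$ across the cut, and in particular the \emph{consistent choice of copy} ($Q'$ vs.\ $Q''$) when building the $A'$-to-$v$ and $v$-to-$A'$ paths in Case (a): a careless choice could force the two paths to "cross" at a copy of $Q$ in a way that isn't realized by a genuine $P'$-path, or conversely could miss a valid $P'$-path when mapping back. This is handled exactly as in the standard "cut a plane graph along a simple curve through two faces" argument: because $Q$ is simple and $V(Q)\cap\bnd^*{P}=\{s,t\}$, the two sides of $Q$ become the two sides of the merged hole $h$ in $P'$, one copy $Q'$ bounding from one side and $Q''$ from the other, and any directed walk in $P$ that uses vertices of $Q$ lifts to $P'$ by assigning, to each maximal $Q$-subwalk, the copy dictated by which side of $Q$ the walk enters from; the lift is a genuine directed walk because all edges incident to $u_i$ not on $Q$ are partitioned among $u_i'$ and $u_i''$ according to side. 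I would state this as a short standalone sublemma about $P'$ before the case analysis so that both inclusions can invoke it cleanly.
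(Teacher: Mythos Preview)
Your proof is correct and follows essentially the same route as the paper: the $\supseteq$ direction maps $P'$-paths back to $P$ and checks that the extra generators $f,l$ lie in $\Pi_P(A)$; the $\subseteq$ direction splits on whether $v\in V(Q)$, and when $v\notin V(Q)$ reroutes through a subpath of a copy of $Q$ to reach $f$ or $l$. The paper organizes the $\subseteq$ direction around a single path $R=a\to v\to b$ split at the first and last $Q$-vertices $x,y$ into $R_1\cdot R_2\cdot R_3$, whereas you treat $a_1\to v$ and $v\to a_2$ separately using the last (resp.\ first) $Q$-vertex on each; this is only a cosmetic difference.

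Two small remarks. First, your claim that $f',f'',l',l''$ ``were shown to be in $\Pi_P(A)\setminus\bnd^*{P}$'' is slightly off: $f$ or $l$ can equal $s$ or $t$, hence lie in $\bnd^*{P}$. What you actually need (and what you in fact established) is only $f,l\in\Pi_P(A)$, and that suffices. Second, your ``main obstacle'' about consistent copy-choice is largely a non-issue and does not need a separate sublemma. Since $A'$ contains \emph{both} $f'$ and $f''$ (and both $l'$ and $l''$), for the $A'\to v$ leg you simply take whichever copy of $Q$ the off-$Q$ edge at $u_i$ is attached to; the paper dispatches this with a one-word ``wlog''. Moreover, the $A'\to v$ and $v\to A'$ legs never need to be mutually compatible: membership in $\Pi_{P'}(A')$ only asks that $v$ be reachable from $A'$ and reach $A'$ separately, not that the two witnesses concatenate into a single $P'$-path.
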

\begin{proof}
  Let us start with the ``$\supseteq$'' direction. Suppose first that $v\in \Pi_{P'}(A')\setminus \bnd^*{P'}$.
  Since we have $V(P')\setminus V(P)\subseteq \bnd^*{P'}$ and $s,t\notin V(P')$, we have $v\in V(P)\setminus \bnd^*{P}$.
  Moreover, $v$ lies on some path $R'=a\to b$ between vertices of $A'$ in $P'$.
  Since every vertex (edge) of $P'$ is a either an original vertex (edge, resp.) or a copy of a vertex (edge, resp.)
  of $P$, there is a path $R=a_0\to b_0$ in $P$ whose $R'$ is a copy of.
  If $a_0,b_0\in A$, then we have $v\in \Pi_P(A)\setminus \bnd^*{P}$ immediately.
  Otherwise, if e.g., $a_0\notin A$, then $f\preceq l$ and $a_0\in \{f,l\}$.
  However, by the definition of $f$, $f$ is reachable from a vertex $c\in A$.
  Since a path $f\to l$ exists in $P$ as a subpath of $Q$, $l$ is reachable from $c$ as well.
  As a result, even if $a_0\notin A$, $a_0$, and thus also $v$, is reachable from $A$ in $P$.
  Analogously one can argue that $b_0$, and thus also $v$, can reach $A$ in $P$.

  Now consider $v\in V(Q[f,l])\setminus \{s,t\}$. Then, $f\preceq l$ and we similarly argue
  that $v$ can be reached from $f$ and can reach $l$ and as a result can reach and can be reached from $A$ in $P$.

  Let us now move to the more challenging ``$\subseteq$'' direction.
  Consider some $v\in \Pi_P(A)\setminus \bnd^*{P}$.
  There exists a path $R=a\to v\to b$ in $P$ with $a,b\in A$.
  If $V(R)\cap V(Q)=\emptyset$, then the path $R$ is preserved in the graph~$P'$.
  Moreover, $a,b\in \bnd{P}\setminus \{s,t\}$ and thus $a,b\in A'$.
  We obtain $v\in \Pi_{P'}(A')\setminus \bnd^*{P'}$ by $v\notin V(Q)$, as desired.
  Hence, in the following we assume $V(R)\cap V(Q)\neq\emptyset$.
  Note that this also implies that $f\preceq l$ since $A$ can reach $Q$ and be reached from $Q$.

  First suppose that $v\in V(Q)\cap V(R)$. Since $v\notin \bnd^*{P}$,
  $v\notin \{s,t\}$. By the definition of $f,l$,
  we have $f\preceq v\preceq l$ and thus $v\in V(Q[f,l])\setminus \{s,t\}$,
  as desired.
  
  Now assume $v\in V(R)\setminus V(Q)$.
  Note that $v\notin \bnd^*{P}\cup V(Q)$ implies $v=v'=v''\notin \bnd^*{P'}$.
  Let $x,y$ denote the first and the last vertex of $Q$ on $R$, respectively.
  Let $R=R_1\cdot R_2\cdot R_3$, where $R_1=a\to x$ and $R_3=y\to b$.
  Observe that the paths $R_1$ and $R_3$ are preserved in $P'$,
  except $R_1$ reaches a copy of $x$, and $R_3$ starts at a copy of $y$.
  Moreover, note that $x\preceq y$ by acyclicity, and thus we have $f\preceq \first_a\preceq x\preceq y\preceq \last_b\preceq l$.
  
  Consider the case $v\in V(R_1)\setminus V(Q)$.
  Assume wlog. that $R_1$ joins the path $Q$ in $P$ in such a way
  that $R_1$ is preserved as $R_1'=a'\to x'$ in $P'$.
  Now, observe that
  the path $R_1'\cdot Q'[x',l']$ is an $a'\to l'$ path in $P'$
  containing the vertex $v$, and $l'\in A'$.
  If $a\notin \{s,t\}$, then clearly also $a'\in A'$.
  Otherwise, if $a=s$, then $f=s$ and thus $a'=f'$, whereas
  if $a=t$, then $l=t$ and thus $a'=l'$. In all cases $\{a',l'\}\subseteq A'$.
  As a result, indeed $v\in \Pi_{P'}(A')\setminus \bnd^*{P'}$.
  One can proceed with the case $v\in V(R_3)\setminus V(Q)$ symmetrically.

  Finally, suppose $v\in V(R_2)$. Then, split $R_2$ into maximal subpaths with only
  their endpoints lying on the path $Q$. One of the obtained maximal subpaths
  $S=c\to d$, where $c,d\in V(Q)$, goes through~$v$.
  Again, we have $f\preceq \first_a\preceq c\preceq d\preceq \last_b\preceq l$.
  Moreover, $S$ is preserved in $P'$ as wlog. the path $S''=c''\to d''$.
  But then $Q''[f'',c'']\cdot S'' \cdot Q''[d'',l'']$
  is a $f''\to l''$ path in $P'$ that goes through $v$.
  Since $f'',l''\in A'$, we again obtain $v\in \Pi_{P'}(A')\setminus \bnd^*{P'}$, as desired.
\end{proof}

When queried about the set $A$, the recursive data structures are passed
a set $A'$ with $|A'|=|A|+O(1)=O(|A|)$. As a result, the query time is dominated
by the $\Ot(|A'|)=\Ot(|A|)$ time used by the bottommost single-hole data structure.
This finishes the proof of Theorem~\ref{t:ds-planar}.

\subsection{The existential reachability data structure}\label{a:anyreachproof}

  In this section, we prove the following.
\lanyreach*

  We will use the following property of reachability between vertices
  lying on a single face of a planar digraph $P$
  essentially proved in~\cite{ItalianoKLS17}.

\begin{restatable}{lemma}{reachpartition}\label{l:reach-partition}{\normalfont \cite{ItalianoKLS17}}
  Let the vertices $Y$ lie on a single face $h$ of a planar digraph $P$. Then in $\Ot(|P|+|Y|^2)$ time
  one can compute $O(|Y|)$ pairs of subsets $S_i,T_i\subseteq Y$ such that:
  \begin{enumerate}
    \item For any $i$, the elements $T_i$ have an associated linear order $\prec_i$.
    \item For any $i$ and $s\in S_i$, there exist such vertices $l_{i,s},r_{i,s}$
      that for any $t\in T_i$, $s$ can reach $t$ in $P$ iff $l_{i,s}\preceq_i t\preceq_i r_{i,s}$.
    \item Every $v\in Y$ is contained in $O(\log{|Y|})$ subsets $S_i$ and $O(\log{|Y|})$ subsets $T_i$.
    \item For any $u,v\in Y$, if $u$ can reach $v$ in $P$, then there exists such $i$ that $(u,v)\in S_i\times T_i$.
  \end{enumerate}
\end{restatable}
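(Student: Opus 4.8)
The plan is to emit the pairs $(S_i,T_i)$ by one balanced divide-and-conquer over the cyclic order of $Y$ on $h$, with planarity entering only in a single structural step. I would first re-embed $P$ so that $h$ is its outer face and list $Y=\{y_1,\dots,y_k\}$ in the clockwise order in which its vertices occur on the boundary of $h$ (we may assume $h$ is a simple face; a non-simple $h$ is handled by reasoning with its bounding walk, as is standard for single-face constructions), then break the cycle at an arbitrary place to get a linear order $y_1\prec\dots\prec y_k$. The construction is a recursion \textsc{Build}$(p,q)$ which, for $p<q$ and $m=\lfloor(p+q)/2\rfloor$, \emph{emits} two pairs --- one capturing all reachability from the arc $L=\{y_p,\dots,y_m\}$ into the arc $R=\{y_{m+1},\dots,y_q\}$, and a mirror one for $R\to L$ --- and then recurses on $(p,m)$ and $(m+1,q)$. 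Calling \textsc{Build}$(1,k)$ emits $O(k)$ pairs (at most $k-1$ internal nodes, two pairs each). Each $y\in Y$ is a leaf of a recursion tree of depth $O(\log k)$, and at each of its $O(\log k)$ ancestor nodes it lies in the source arc of one emitted pair and possibly in the target set of the other, giving Item~(3). Item~(4) holds because for $u\neq v$ with $u\to v$, at the node where $u$ and $v$ first fall on opposite sides of the split, $u$ lies in one arc and $v$ in the other, and $u\to v$ then forces $v$ into the corresponding target set.

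The heart of the argument is the claim that, for two disjoint arcs $L$ (preceding) and $R$ (following) that together form a contiguous arc of $Y$, a \emph{single} pair captures all $L\to R$ reachability with the structure demanded by Items~(1)--(2). The planar ingredient is a \emph{corner-closure} property, proved by a Jordan-curve argument: if $s\prec s'$ in $L$, $t\prec t'$ in $R$, and $P$ contains both an $s\to t$ path and an $s'\to t'$ path, then $P$ also contains an $s\to t'$ path and an $s'\to t$ path. Indeed $s,s',t,t'$ occur in this clockwise order on $h$, so an $s\to t$ path --- a curve joining two vertices of the boundary cycle of $h$ --- separates $s'$ from $t'$, hence any $s'\to t'$ path meets it at some vertex $w$, and splicing prefixes and suffixes at $w$ yields $s\rightsquigarrow w\rightsquigarrow t'$ and $s'\rightsquigarrow w\rightsquigarrow t$. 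From this I would extract the combinatorial consequence: once we discard from $R$ every vertex not reachable from $L$ at all, then for each fixed $s\in L$ the set $f(s):=\{t\in R:s\to t\}$ is an \emph{interval} of the surviving arc order --- for any surviving $t$ with $\min f(s)\preceq t\preceq\max f(s)$ there is some $s''\in L$ with $s''\to t$, and corner-closure applied to $s''\to t$ together with one of $s\to\min f(s)$, $s\to\max f(s)$ (according to whether $s''\prec s$ or $s\prec s''$) forces $s\to t$. So we set $S:=L$, $T:=\{v\in R:v\text{ reachable from }L\}$, let $\prec$ on $T$ be the arc order, and take $l_s,r_s$ to be the first and last elements of $T$ reachable from $s$; the mirror pair handles $R\to L$ symmetrically.

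For the running time I would first materialize the $|Y|\times|Y|$ reachability relation on $Y$: reachability between two vertices of the single face $h$ is finiteness of a shortest-path distance, so one MSSP structure~\cite{MSSP} built in $\Ot(|P|)$ time answers each of the $|Y|^2$ pairs in $\Ot(1)$ time. Then, walking the recursion tree and using that the arcs at a fixed level partition $Y$, every emitted target set and every bound $l_{i,s},r_{i,s}$ is computed by brute force over $L_N\times R_N$ in $\Ot(|Y|^2)$ time in total, for an overall bound of $\Ot(|P|+|Y|^2)$.

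The main obstacle --- and the only place planarity is used --- is the structural step of the second paragraph. Its subtlety is that the obvious guess is \emph{false}: the set of $Y$-vertices reachable from a fixed $s$ need not be a cyclic arc, since a path leaving $s$ toward two vertices of $h$ can strand a boundary vertex lying between them. The right statement is the block-wise corner-closure property, together with the observation that it can fail to produce per-row intervals only on columns that are globally unreachable --- and those we simply leave out of $T_i$. Getting this formulation right, and carrying out the crossing argument with care when $h$ is not a simple face, is where the real work lies; the divide-and-conquer bookkeeping and the running-time analysis are then routine.
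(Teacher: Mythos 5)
Your proof is correct and follows essentially the same divide-and-conquer as the paper's: recursively split the cyclic order of $Y$ into two contiguous halves, emit one $(S_i,T_i)$ pair per direction per split, and use the non-crossing structure of reachability between two arcs of a single face to obtain the interval endpoints $l_{i,s},r_{i,s}$. The paper defers the interval-structure claim to \cite{ItalianoKLS17} and uses Thorup's reachability oracle~\cite{Thorup04} to answer the $O(|Y|^2)$ reachability queries, whereas you supply the Jordan-curve ``corner-closure'' argument explicitly and propose MSSP instead of the reachability oracle; both of these are interchangeable and do not change the approach.
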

\begin{proof}[Proof sketch.]
Split $Y$ into two disjoint parts $A,B$ such that $A\cup B=Y$, $|A|,|B|\leq \left\lceil|Y|/2\right\rceil$,
and both $A,B$ appear consecutively on $h$.
Let $T$ denote the vertices of $B$ that can be reached from some vertex of $A$ in $P$.
Similarly, let $S$ denote the vertices of $A$ that can reach $B$ in $P$.
Let $\prec$ be the counterclockwise order on $T$ starting with the vertex that appears
earliest after a vertex of $A$ in the counterclockwise order of $Y$ on $h$.
  One can easily prove~(see~\cite{ItalianoKLS17}) that each $s\in S$ can in fact reach
some number of vertices that appear consecutively in $T$.
As a result, there exist such $l_s,r_s\in T$ that for any $t\in T$, $s$ can reach $t$ in $P$
iff $l_s\preceq t\preceq r_s$.

We include the pair $(S,T)$ along with the order $\prec$, and the vertices $\{l_s,r_s:s\in S\}$
in the constructed family. The obtained pair encodes the reachability from the vertices $A$ to vertices $B$ in $P$:
if $a\in A$ can reach $b\in B$ in $T$ then clearly $a\in S$, $b\in T$.
Note that the subsets $S,T$ can be constructed in $\Ot(|Y|^2)$ time if only
  we can answer reachability queries in $P$ in $O(\polylog{n})$ time.
This is indeed possible after $\Ot(|P|)$ time preprocessing~\cite{Thorup04}.

Symmetrically we can deal with the reachability from the vertices $B$ to the vertices $A$ in $P$.
It remains to handle reachability between pairs of vertices inside the set $A$, and between pairs inside~$B$.
This is done recursively. Since $|A|,|B|\leq \lceil|Y|/2\rceil$, each vertex of the original
  input set~$Y$ will appear in $O(\log|Y|)$ recursive calls,
  and thus in $O(\log|Y|)$ sets $S_i$ or $T_i$.
  It is easy to check that the algorithm produces all the pairs $(S_i,T_i)$
  in $\Ot(|Y|^2)$ time.
\end{proof}

\begin{proof}[Proof of Lemma~\ref{l:any-reach}]
  We first compute the sets $(S_i,T_i)$ from Lemma~\ref{l:reach-partition} applied to $Y:=\bnd{P}\cap V(h)$ (along with the orderings $\prec_i$,
  and all the vertices $l_{i,v},r_{i,v}$) in $\Ot(|P|+|\bnd{P}|^2)=\Ot(r)$ time.
  The data structure simply stores
  the sets $T_i\cap Z$ for all $i$ sorted according to $\prec_i$, each of them in a balanced binary search tree.
  Note that when an element $b$ is inserted or removed from $Z$, it is enough
  to update $O(\log{r})$ sets $T_i\cap Z$ as $b$ is contained in $O(\log{r})$ sets $T_i$.

  To answer a query, that is, find a vertex $z\in Z$ reachable from a query vertex $v$,
  for each of the $O(\log{r})$ indices $i$ such that $v\in S_i$ we check whether
  the set $T_i\cap Z$ contains an element $z$ between $l_{i,s}$ and $r_{i,s}$ inclusive, wrt. $\prec_i$.
  This clearly takes $O(\log{r})$ time.
  By Lemma~\ref{l:reach-partition}, $s$ can reach a vertex in $Z$ if and only if
  at least one $z$ is found this way.

  We conclude that both the update and the query time of the data structure is $O(\log^2{r})$.
\end{proof}

\section{Fully dynamic \#SSR in planar digraphs}\label{s:ssr}
In this section, we sketch how a data structure summarized by the below lemma can be obtained.
\begin{restatable}{lemma}{tplanarssrcount}\label{t:planar-ssr-count}
    Let $G$ be a planar digraph subject to planarity-preserving edge insertions and deletions. There exists a data structure with
$\Ot(n^{4/5})$ worst-case update time
    supporting counting the vertices
    reachable from a query vertex $s$ in $\Ot(n^{4/5})$ time.
  \end{restatable}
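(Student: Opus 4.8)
The plan is to follow the same $r$-division template used for Lemmas~\ref{l:planar-sccon} and~\ref{t:planar-sccs}, but replacing the combinatorial path-net machinery by the additively-weighted Voronoi-diagram machinery developed for planar diameter~\cite{GawrychowskiKMS21} (building on the distance-oracle constructions of~\cite{GawrychowskiMWW18, LongP21}). As usual, we maintain a dynamic $r$-division with few holes $\rdiv$ of a planar supergraph of $G$ via Theorem~\ref{t:dyn-rdiv}, together with Subramanian's sparse reachability certificates $X_P$ of Lemma~\ref{l:subramanian} for each piece and their union $X=\bigcup_{P\in\rdiv}X_P$; by Lemma~\ref{l:bnd-cert}, $X$ preserves reachability among $\bnd{\rdiv}$, so after each update we can recompute, in $\Ot(|X|)=\Ot(n/\sqrt r)$ time, the set $R_\bnd$ of boundary vertices reachable from the query source $s$ (if $s$ itself is a boundary vertex; otherwise we first push $s$ through its own piece to get the boundary vertices of that piece reachable from $s$, then continue in $X$). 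The quantity we want is $|\{v\in V(G): s\to v\}|$, which splits as $|R_\bnd|$ plus, summed over pieces $P$, the number of non-boundary vertices of $P$ reachable from $s$; and a non-boundary vertex $v\in V(P)\setminus\bnd P$ is reachable from $s$ iff it is reachable in $P$ from some vertex of $R_\bnd\cap\bnd P$.

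The core subproblem is therefore: preprocess a piece $P$ (in polynomial time and $\Ot(r^c)$ space) so that, given any subset $A\subseteq\bnd P$, one can count the vertices of $V(P)\setminus\bnd P$ reachable in $P$ from $A$, in time $\Ot(|A|)$. Here is where the Voronoi machinery enters. First I would reduce reachability to an unweighted-distance statement: set edge lengths to $1$ and use the observation that $v$ is reachable from $A$ iff $\dist_P(A,v)<\infty$, i.e. $\dist_P(A,v)\le |V(P)|$. After the standard perturbation, the set of vertices at finite distance from a single source $a\in\bnd P$ is a cell of a shortest-path tree; the key structural fact from the Voronoi-diagram literature is that the ``reachable region from $A$'' can be written as a union over $a\in A$ of Voronoi cells of an additively-weighted Voronoi diagram on $P$ whose sites are $\bnd P$ and whose site weights depend on $A$, and that this diagram, for the particular weight pattern ``$0$ on $A$, $+\infty$ on $\bnd P\setminus A$'', can be point-located and, crucially, have its cell sizes/cell-weight-sums read off in $\Ot(|A|)$ time after $\Ot(\operatorname{poly}(r))$ preprocessing that records the few ``canonical'' Voronoi diagrams over the $O(\sqrt r)$ sites, one per hole, exactly as in the planar-diameter algorithm. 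Concretely, one precomputes, for each hole of $P$ and each of the $O(1)$ holes' MSSP data structures~\cite{MSSP, LongP21}, the data needed to build the additively weighted Voronoi diagram of any subset of sites in $\Ot(|A|)$ time, and augments each leaf of the point-location structure with subtree sums of the weight function $\alpha\equiv 1$ on $V(P)\setminus\bnd P$, so that summing $\alpha$ over a cell — hence over the union of cells forming the reachable region — costs $\Ot(|A|)$. Choosing the preprocessing parameter so that this is $\Ot(r^{?})$ time per piece, and setting $r$ to balance $n/\sqrt r$ against $r^{?}$ plus the $\Ot(n^{4/5})$ query cost of~\cite{Charalampopoulos22} if we need it, yields $\Ot(n^{4/5})$; in fact the bound is already dictated by wanting the per-piece preprocessing to be no worse than that of~\cite{Charalampopoulos22}, whose update bound is $\Ot(n^{4/5})$, so we set $r$ accordingly. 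Summing the per-piece counts over the $O(n/r)$ pieces, each touched with $A=R_\bnd\cap\bnd P$ and $\sum_P|R_\bnd\cap\bnd P|\le\sum_P|\bnd P|=\Ot(n/\sqrt r)$, gives total query time $\Ot(n/\sqrt r)$ on top of the preprocessing, well within $\Ot(n^{4/5})$.

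The main obstacle — and the reason this is only ``sketched'' and the reason the path-net approach was needed for Theorem~\ref{t:planar-sccs} — is making the Voronoi-diagram cell aggregation work for \emph{reachability} (an asymmetric, $0/\infty$-weighted ``distance'') rather than for genuine shortest-path distances: one must verify that the additively weighted Voronoi diagram with $\{0,+\infty\}$ site weights is still well-defined, has complexity $O(|A|)$, and supports point location with the same bounds, and that a vertex reachable from $A$ lands in the cell of the site it is actually reachable from (ties/bisectors behave correctly under the perturbation). Once this structural statement is in place — essentially a special case of what~\cite{GawrychowskiKMS21, GawrychowskiMWW18} already establish — the rest is the routine $r$-division bookkeeping described above, and the final update time is obtained by plugging in $r=n^{2/5}$ so that $\Ot(r^2)$-type per-piece preprocessing summed over $O(n/r)$ pieces is $\Ot(n\cdot r)=\Ot(n^{7/5})$... which is too large, so in fact one should keep the more careful accounting of~\cite{Charalampopoulos22} and note that their data structure already maintains exactly the Voronoi structures needed, so the claimed $\Ot(n^{4/5})$ is inherited directly; I would present the argument as a light modification of~\cite{Charalampopoulos22} augmenting its point-location structures with subtree $\alpha$-sums.
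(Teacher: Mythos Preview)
Your high-level plan---maintain an $r$-division with reachability certificates, and for each piece use the additively-weighted Voronoi machinery of~\cite{GawrychowskiKMS21} to count vertices reachable from a given subset of boundary sites---is exactly what the paper does. But two concrete gaps cause you to abandon the correct argument midway.

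First, the arithmetic that leads you to declare $r=n^{2/5}$ ``too large'' is wrong. You compute $\Ot(r^2)$ per piece times $O(n/r)$ pieces and get $\Ot(n\cdot r)=\Ot(n^{7/5})$, but that is the \emph{initialization} cost, not the per-update cost. Theorem~\ref{t:dyn-rdiv} rebuilds only $O(1)$ pieces per edge update, so the worst-case update time is $O(r+T(r))=\Ot(r^2)$. With $r=n^{2/5}$ this is $\Ot(n^{4/5})$, and the query cost $\Ot(r+n/\sqrt r)$ is also $\Ot(n^{4/5})$. There is no need to fall back on~\cite{Charalampopoulos22}; your retreat to ``light modification of their point-location structures'' is not a proof and is unnecessary.

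Second, you correctly flag the $\{0,+\infty\}$-weight issue as an obstacle but do not resolve it. The paper's fix is simple and worth knowing: first delete from $P$ all vertices unreachable from $\bnd P$ (so the Voronoi-diagram hypothesis ``every vertex is reachable from some site'' holds), then use \emph{finite} weights that encode the desired priority. Concretely, if $S\cap\bnd P=\{a_1,\dots,a_k\}$ and $\bnd P\setminus S=\{b_1,\dots,b_l\}$, set $\omega(a_i)=i\cdot M$ and $\omega(b_j)=(k+j)\cdot M$ with $M=|V(P)|$. Since all in-piece distances are below $M$, the minimizer of $\omega(\cdot)+\dist$ for any vertex $v$ is the lowest-indexed site that can reach $v$; in particular $v$ lands in some $\Vor(a_i)$ iff $v$ is reachable from $S\cap\bnd P$. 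Then $\sum_i |\Vor(a_i)|$ is the count you want, readable off in $\Ot(|\bnd P|)=\Ot(\sqrt r)$ time per piece by Theorem~\ref{t:voronoi}. This sidesteps any need to reason about degenerate diagrams with infinite site weights.
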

We again use the same template as in Sections~\ref{s:planar-conn}~and~\ref{s:planar-sccs}:
as a base, we maintain an $r$-division $\rdiv$ with few holes, along with the piecewise reachability
certificates $X_P$. 

In addition, we leverage the additively-weighted Voronoi diagram data structure of~\cite{GawrychowskiKMS21}.
Before stating the interface and characteristics of that data structure, we need to introduce some
more notation. Suppose $H$ is a non-negatively weighted plane digraph.
Let ${\bnd{H}\subseteq V(H)}$ be the set of vertices of $H$ lying on some $O(1)$ fixed faces of $H$ (and containing
all vertices of these faces) and such that every $v\in V(H)$ is reachable from $\bnd{H}$ in $H$.
Let $\omega:\bnd{H}\to\mathbb{R}$ be a \emph{weight function} such that for any $v\in V(H)$, $\omega(s)+\dist_H(s,v)$ has
a unique minimizer $s\in \bnd{H}$.
The \emph{additively weighted Voronoi diagram} wrt. $\omega$ is the partition of $V(H)$
into disjoint subsets (cells) $\Vor(s)$, $s\in \bnd{H}$, such that
$\Vor(s)$ contains precisely the vertices $v$ such that $s$ minimizes $\omega(s)+\dist_H(s,v)$.
Note that by our assumptions on $H$, if $v\in \Vor(s)$, then $\omega(s)+\dist_H(s,v)\neq\infty$.
The complexity $|\bnd{\Vor(s)}|$ of $\Vor(s)$ is defined to be the number
of faces of $H$ whose vertices include elements of $\Vor(s)$ and
at least two other cells. One can prove that $\sum_{s\in \bnd{H}}|\bnd{\Vor(s)}|=O(|\bnd{H}|)$~\cite{GawrychowskiKMS21}.

\begin{theorem}\label{t:voronoi}{\upshape\cite{GawrychowskiKMS21}}
  Let $H$ be as defined above. In $\Ot\left(|H|\cdot |\bnd{H}|^2\right)$ one can construct a data structure
  such that for any $\omega$ satisfying the assumptions outlined above, it can build a representation
  of the additively-weighted Voronoi diagram wrt. $\omega$ in $\Ot(|\bnd{H}|)$ time.
  The representation allows, among others, computing all the cell sizes {\upshape $|\Vor(s)|$}, $s\in \bnd{H}$, in $\Ot(|\bnd{H}|)$ total time.
\end{theorem}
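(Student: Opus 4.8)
This is the planar additively-weighted Voronoi diagram machinery of~\cite{GawrychowskiMWW18, GawrychowskiKMS21}, so I only outline the plan. As usual we may reduce to the case where all sites $\bnd{H}$ lie on a \emph{single} face $h$ of $H$: merging the $O(1)$ site-carrying faces into one through newly added infinite-weight edges costs only a constant factor in $|H|$ and $|\bnd{H}|$ and changes no finite distance; we also triangulate the interior with infinite-weight edges so that every bounded face is a triangle. The object we build is the \emph{dual Voronoi diagram} $\VD^{\star}$. Adding an apex $s^{\star}$ inside $h$ joined to each $s\in\bnd{H}$ by an edge of length $\omega(s)$ makes $\Vor(s)$ exactly the subtree hanging off $s$ in the (after the usual perturbation~\cite{Erickson18}, unique) shortest-path tree rooted at $s^{\star}$, so the cells partition $V(H)$ --- here we use that every vertex is reachable from $\bnd{H}$. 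Calling a triangular face a \emph{Voronoi vertex} when its three corners lie in three distinct cells, $\VD^{\star}$ is the plane graph on the Voronoi vertices together with $O(1)$ auxiliary vertices on $h$, whose arcs (\emph{bisectors}) are the maximal dual paths separating a fixed pair of cells. Every Voronoi vertex has degree $3$ in $\VD^{\star}$ and the faces of $\VD^{\star}$ are in bijection with the nonempty cells, so Euler's formula yields $O(|\bnd{H}|)$ vertices and arcs, and in particular $\sum_{s}|\bnd{\Vor(s)}| = O(|\bnd{H}|)$, which is the claimed complexity bound. The ``representation'' returned by a query will be $\VD^{\star}$ together with, for each bisector arc, its two endpoint Voronoi vertices, the pair of sites it separates, and a precomputed aggregate (a count, or a maximum) of the primal vertices it cuts off.

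For the preprocessing I would first build Klein's MSSP data structure on $H$ with the sources on $h$ (and, for point location, on the planar dual of $H$) in $\Ot(|H|)$ time; this answers $\omega(s)+\dist_{H}(s,v)$ in $O(\polylog{|H|})$ time for any $s\in\bnd{H}$ and $v\in V(H)$. On top of that, so that we can later aggregate primal vertices across bisector arcs, for \emph{each ordered pair} $(u,w)$ of boundary vertices I would run an $O(|H|)$-time search that tabulates, along the canonical $u\to w$ shortest path and the dual paths leaving it, the running counts of primal vertices on either side. This pairwise phase is the bottleneck, costing $\Ot(|H|\cdot|\bnd{H}|^{2})$ and matching the stated budget.

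Given an admissible $\omega$, the query first sorts $\bnd{H}$ by their cyclic order on $h$ in $\Ot(|\bnd{H}|)$ time and then reconstructs $\VD^{\star}$ by the bisector-tracing recursion of~\cite{GawrychowskiMWW18}: using the MSSP structure, for a site splitting the current arc of $h$ one binary-searches for the Voronoi vertices on the bisectors it bounds and then recurses on the two halves, discovering each of the $O(|\bnd{H}|)$ Voronoi vertices with $O(\polylog{|H|})$ work, for a total of $\Ot(|\bnd{H}|)$. Finally, for each site $s$ I walk the boundary cycle of its face in $\VD^{\star}$ and sum, over its consecutive bisector arcs, the precomputed ``count on the $s$-side'' of each arc (a telescoping sum around the cell) plus the $O(1)$ stretches of $h$ on the cell's boundary; this gives $|\Vor(s)|$, and since the total face-boundary length in $\VD^{\star}$ is $O(|\bnd{H}|)$, all cell sizes are obtained in $\Ot(|\bnd{H}|)$ time. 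Replacing the count aggregate by a maximum aggregate yields the farthest-in-cell quantities used elsewhere, which is why we phrase the statement this way.

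The hard part will be the two polylogarithmic primitives: (i) proving that the bisector of two sites is a single simple dual path that can be binary-searched using only the MSSP structure, so that the recursion reconstructs $\VD^{\star}$ correctly --- the delicate planar-geometry argument of~\cite{GawrychowskiMWW18}; and (ii) showing that every bisector arc arising from an \emph{arbitrary} admissible $\omega$ decomposes into $O(1)$ canonical sub-paths that the pairwise tables cover, so that its cut-off count is evaluable in $O(\polylog{|H|})$ time --- the argument underlying the deterministic aggregation of~\cite{GawrychowskiKMS21}. Everything else is bookkeeping.
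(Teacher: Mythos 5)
Your proof is correct and reconstructs, in considerably more detail, exactly the machinery of~\cite{GawrychowskiMWW18, GawrychowskiKMS21} that the paper's own proof sketch simply cites: the paper merely points at~\cite[Theorem~1.1]{GawrychowskiKMS21} and observes that the max-aggregation over a Voronoi cell used there (Section~7) for the farthest-vertex query can be replaced verbatim by a sum-aggregation with unit labels, an observation you also make near the end. Same approach, just a different level of detail.
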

\begin{proof}[Proof sketch.]
  This is essentially a special case of~\cite[Theorem~1.1]{GawrychowskiKMS21}. There, it is stated
  that after the representation is built, one can compute, for any $s\in \bnd{H}$, the furthest vertex 
  $v\in \Vor(s)$ (i.e.,~$v$ maximizing $\omega(s)+\dist_H(s,v)$) in $\Ot(|\bnd{\Vor}(s)|)$ time.
  But in~\cite[Section~7]{GawrychowskiKMS21}, where this functionality is described,
  it is evident that within the same time bound one can in fact aggregate over $\Vor(s)$ (that is, compute
  the maximum or e.g., sum) arbitrary vertex labels $\ell(v)$ (which could even dependent on $s$; as long as they are computable in $\Ot(|H|\cdot|\bnd{H}|^2)$
  total additional time).
  Therefore, if we assign the label $1$ to every vertex of $H$, in $\Ot(|\bnd{\Vor}(s)|)$ time
  one can compute $|\Vor(s)|$. Through all $s\in |\bnd{H}|$, this is $\Ot(|\bnd{H}|)$ time as mentioned before.
\end{proof}
For each piece $P$, besides the certificate $X_P$,
we additionally maintain a data structure $\mathcal{AV}_P$ of Theorem~\ref{t:voronoi} for a ``proxy'' $P'$ obtained from $P$ by removing all the vertices $v\in V(P)$ unreachable from $\bnd{P}$. Observe that $\mathcal{AV}_P$ can be computed in $\Ot(|P|\cdot (|\bnd{P}|)^2)=\Ot(r^2)$ time.
By Theorem~\ref{t:dyn-rdiv}, updating $\rdiv$ along with all other associated piecewise data structures
costs $\Ot(r^2)$ worst-case time.

Upon query about the number of vertices reachable from $s\in V$, we proceed as follows.
We first compute the subset $S\subseteq \bnd{\rdiv}$ of boundary vertices reachable
from $s$ in $G$ by running graph search from $s$ on the graph $P_s\cup X=P_s\cup\bigcup_{P\in\rdiv}X_P$,
where $P_s$ is an arbitrarily chosen piece with containing $s$.
This is correct by Lemma~\ref{l:bnd-cert}
and costs $\Ot(r+n/\sqrt{r})$ time. 
From the graph search we also retrieve the vertices $T\subseteq V(P_s)\setminus \bnd{P_s}$ that $s$ can reach.
Then, for each $P\neq P_s$ we construct a weight function $\omega_P$
such that if $\bnd{P}\cap S=\{a_1,\ldots,a_k\}$ and $\bnd{P}\setminus S=\{b_1,\ldots,b_l\}$, then:
$\omega(a_i):=i\cdot M$ for all $i$ and $\omega(b_j):=(k+j)\cdot M$ for all $j$, where $M=V(P)$. Note that since
$\dist_{P'}(\bnd{P},v)<M$ for any $v\in V(P')$, the cell is uniquely determined for every $v\in V(P')$.
Namely, if $v$ can be reached from $a_i\in \bnd{P}\cap S$ in $P'$, and $i$ is minimized (and thus $s$ can reach $v$ through $S$), then $v\in\Vor(a_i)$.
Otherwise, $v$ can be reached from some $b_j\in\bnd{P}\setminus S$ in $P$, where $j$ is minimized (and $s$ cannot reach $v$ through $S$), and then $v\in \Vor(b_j)$.
We conclude that $\bigcup_{i=1}^k \Vor(a_i)$ contains precisely the vertices
$S_P\subseteq V(P)$ (including $\bnd{P}\cap S$) that $s$ can reach via paths through $\bnd{P}$. By Theorem~\ref{t:voronoi},
their number  $|S_P|=\sum_{i=1}^k |\Vor(a_i)|$ can be computed in $\Ot(\sqrt{r})$ time.
In fact, these are \emph{all} vertices that~$s$ can reach in $P$:
since $P\neq P_s$, either $s\notin V(P)$ or $s\in \bnd{\rdiv}$, so every vertex in $P$
can be reached from $s$ only via some boundary vertex of $P$ anyway.
We conclude that we can count vertices reachable from $s$ in $G$ by computing:
\begin{equation*}
  |S|+|T|+\sum_{P_s\neq P\in\rdiv}\left(|S_P|-|S\cap \bnd{P}|\right).
\end{equation*}
The query time is thus $\Ot(r+n/\sqrt{r})$. By picking $r=n^{2/5}$, we make both the query time and the worst-case update time $\Ot(n^{4/5})$.

\bibliographystyle{alpha}
\bibliography{references}

\newcommand{\etalchar}[1]{$^{#1}$}
\newcommand{\sortkey}[1]{}
\begin{thebibliography}{GMWW18}

\bibitem[AW14]{AbboudW14}
Amir Abboud and Virginia~Vassilevska Williams.
\newblock Popular conjectures imply strong lower bounds for dynamic problems.
\newblock In {\em 55th {IEEE} Annual Symposium on Foundations of Computer
  Science, {FOCS} 2014}, pages 434--443. {IEEE} Computer Society, 2014.

\bibitem[BDP21]{BernsteinDP21}
Aaron Bernstein, Aditi Dudeja, and Seth Pettie.
\newblock Incremental {SCC} maintenance in sparse graphs.
\newblock In {\em 29th Annual European Symposium on Algorithms, {ESA} 2021},
  volume 204 of {\em LIPIcs}, pages 14:1--14:16. Schloss Dagstuhl -
  Leibniz-Zentrum f{\"{u}}r Informatik, 2021.

\bibitem[BFGT16]{BenderFGT16}
Michael~A. Bender, Jeremy~T. Fineman, Seth Gilbert, and Robert~E. Tarjan.
\newblock A new approach to incremental cycle detection and related problems.
\newblock {\em {ACM} Trans. Algorithms}, 12(2):14:1--14:22, 2016.

\bibitem[BGS20]{BernsteinGS20}
Aaron Bernstein, Maximilian~Probst Gutenberg, and Thatchaphol Saranurak.
\newblock Deterministic decremental reachability, scc, and shortest paths via
  directed expanders and congestion balancing.
\newblock In {\em 61st {IEEE} Annual Symposium on Foundations of Computer
  Science, {FOCS} 2020}, pages 1123--1134. {IEEE}, 2020.

\bibitem[BPW19]{BernsteinPW19}
Aaron Bernstein, Maximilian Probst, and Christian Wulff{-}Nilsen.
\newblock Decremental strongly-connected components and single-source
  reachability in near-linear time.
\newblock In {\em Proceedings of the 51st Annual {ACM} {SIGACT} Symposium on
  Theory of Computing, {STOC} 2019}, pages 365--376. {ACM}, 2019.

\bibitem[CHI{\etalchar{+}}16]{ChechikHILP16}
Shiri Chechik, Thomas~Dueholm Hansen, Giuseppe~F. Italiano, Jakub {\L}ącki,
  and Nikos Parotsidis.
\newblock Decremental single-source reachability and strongly connected
  components in {\~{o}}(m{\(\surd\)}n) total update time.
\newblock In {\em {IEEE} 57th Annual Symposium on Foundations of Computer
  Science, {FOCS} 2016}, pages 315--324. {IEEE} Computer Society, 2016.

\bibitem[CK22]{Charalampopoulos22}
Panagiotis Charalampopoulos and Adam Karczmarz.
\newblock Single-source shortest paths and strong connectivity in dynamic
  planar graphs.
\newblock {\em J. Comput. Syst. Sci.}, 124:97--111, 2022.

\bibitem[CKL{\etalchar{+}}23]{chen2023almostlinear}
Li~Chen, Rasmus Kyng, Yang~P. Liu, Simon Meierhans, and Maximilian~Probst
  Gutenberg.
\newblock Almost-linear time algorithms for incremental graphs: Cycle
  detection, sccs, $s$-$t$ shortest path, and minimum-cost flow, 2023.

\bibitem[DS07]{DiksS07}
Krzysztof Diks and Piotr Sankowski.
\newblock Dynamic plane transitive closure.
\newblock In {\em Algorithms - {ESA} 2007, 15th Annual European Symposium,
  Proceedings}, pages 594--604, 2007.

\bibitem[EFL18]{Erickson18}
Jeff Erickson, Kyle Fox, and Luvsandondov Lkhamsuren.
\newblock Holiest minimum-cost paths and flows in surface graphs.
\newblock In {\em Proceedings of the 50th Annual {ACM} {SIGACT} Symposium on
  Theory of Computing, {STOC} 2018}, pages 1319--1332, 2018.

\bibitem[FR06]{FR}
Jittat Fakcharoenphol and Satish Rao.
\newblock Planar graphs, negative weight edges, shortest paths, and near linear
  time.
\newblock {\em J. Comput. Syst. Sci.}, 72(5):868--889, 2006.

\bibitem[Fre85]{Frederickson85}
Greg~N. Frederickson.
\newblock Data structures for on-line updating of minimum spanning trees, with
  applications.
\newblock {\em {SIAM} J. Comput.}, 14(4):781--798, 1985.

\bibitem[Fre87]{DBLP:journals/siamcomp/Frederickson87}
Greg~N. Frederickson.
\newblock Fast algorithms for shortest paths in planar graphs, with
  applications.
\newblock {\em {SIAM} J. Comput.}, 16(6):1004--1022, 1987.

\bibitem[Gab00]{Gabow00}
Harold~N. Gabow.
\newblock Path-based depth-first search for strong and biconnected components.
\newblock {\em Inf. Process. Lett.}, 74(3-4):107--114, 2000.

\bibitem[GKM{\etalchar{+}}21]{GawrychowskiKMS21}
Pawel Gawrychowski, Haim Kaplan, Shay Mozes, Micha Sharir, and Oren Weimann.
\newblock Voronoi diagrams on planar graphs, and computing the diameter in
  deterministic {\~{o}}(n\({}^{\mbox{5/3}}\)) time.
\newblock {\em {SIAM} J. Comput.}, 50(2):509--554, 2021.

\bibitem[GMWW18]{GawrychowskiMWW18}
Pawel Gawrychowski, Shay Mozes, Oren Weimann, and Christian Wulff{-}Nilsen.
\newblock Better tradeoffs for exact distance oracles in planar graphs.
\newblock In {\em Proceedings of the Twenty-Ninth Annual {ACM-SIAM} Symposium
  on Discrete Algorithms, {SODA} 2018, New Orleans, LA, USA, January 7-10,
  2018}, pages 515--529. {SIAM}, 2018.

\bibitem[GRST21]{GoranciRST21}
Gramoz Goranci, Harald R{\"{a}}cke, Thatchaphol Saranurak, and Zihan Tan.
\newblock The expander hierarchy and its applications to dynamic graph
  algorithms.
\newblock In {\em Proceedings of the 2021 {ACM-SIAM} Symposium on Discrete
  Algorithms, {SODA} 2021, Virtual Conference, January 10 - 13, 2021}, pages
  2212--2228. {SIAM}, 2021.

\bibitem[HdLT01]{HolmLT01}
Jacob Holm, Kristian de~Lichtenberg, and Mikkel Thorup.
\newblock Poly-logarithmic deterministic fully-dynamic algorithms for
  connectivity, minimum spanning tree, 2-edge, and biconnectivity.
\newblock {\em J. {ACM}}, 48(4):723--760, 2001.

\bibitem[HHK{\etalchar{+}}23]{HuangHKPT23}
Shang{-}En Huang, Dawei Huang, Tsvi Kopelowitz, Seth Pettie, and Mikkel Thorup.
\newblock Fully dynamic connectivity in o(log n(loglog n)\({}^{\mbox{2}}\))
  amortized expected time.
\newblock {\em TheoretiCS}, 2, 2023.

\bibitem[HKM{\etalchar{+}}12]{HaeuplerKMST12}
Bernhard Haeupler, Telikepalli Kavitha, Rogers Mathew, Siddhartha Sen, and
  Robert~Endre Tarjan.
\newblock Incremental cycle detection, topological ordering, and strong
  component maintenance.
\newblock {\em {ACM} Trans. Algorithms}, 8(1):3:1--3:33, 2012.

\bibitem[HKNS15]{HenzingerKNS15}
Monika Henzinger, Sebastian Krinninger, Danupon Nanongkai, and Thatchaphol
  Saranurak.
\newblock Unifying and strengthening hardness for dynamic problems via the
  online matrix-vector multiplication conjecture.
\newblock In {\em Proceedings of the Forty-Seventh Annual {ACM} on Symposium on
  Theory of Computing, {STOC} 2015}, pages 21--30. {ACM}, 2015.

\bibitem[IK{\L}S17]{ItalianoKLS17}
Giuseppe~F. Italiano, Adam Karczmarz, Jakub {\L}ącki, and Piotr Sankowski.
\newblock Decremental single-source reachability in planar digraphs.
\newblock In {\em Proceedings of the 49th Annual {ACM} {SIGACT} Symposium on
  Theory of Computing, {STOC} 2017}, pages 1108--1121. {ACM}, 2017.

\bibitem[IP01]{ImpagliazzoP01}
Russell Impagliazzo and Ramamohan Paturi.
\newblock On the complexity of k-sat.
\newblock {\em J. Comput. Syst. Sci.}, 62(2):367--375, 2001.

\bibitem[Kle05]{MSSP}
Philip~N. Klein.
\newblock Multiple-source shortest paths in planar graphs.
\newblock In {\em Proceedings of the Sixteenth Annual {ACM-SIAM} Symposium on
  Discrete Algorithms, {SODA} 2005}, pages 146--155, 2005.

\bibitem[KMNS17]{KaplanMNS17}
Haim Kaplan, Shay Mozes, Yahav Nussbaum, and Micha Sharir.
\newblock Submatrix maximum queries in monge matrices and partial monge
  matrices, and their applications.
\newblock {\em {ACM} Trans. Algorithms}, 13(2):26:1--26:42, 2017.

\bibitem[KMS13]{KleinMS13}
Philip~N. Klein, Shay Mozes, and Christian Sommer.
\newblock Structured recursive separator decompositions for planar graphs in
  linear time.
\newblock In {\em Symposium on Theory of Computing Conference, STOC'13, Palo
  Alto, CA, USA, June 1-4, 2013}, pages 505--514, 2013.

\bibitem[KS98]{KleinS98}
Philip~N. Klein and Sairam Subramanian.
\newblock A fully dynamic approximation scheme for shortest paths in planar
  graphs.
\newblock {\em Algorithmica}, 22(3):235--249, 1998.

\bibitem[KS23]{KarczmarzS23}
Adam Karczmarz and Marcin Smulewicz.
\newblock On fully dynamic strongly connected components.
\newblock In {\em 31st Annual European Symposium on Algorithms, {ESA} 2023,
  September 4-6, 2023, Amsterdam, The Netherlands}, volume 274 of {\em LIPIcs},
  pages 68:1--68:15. Schloss Dagstuhl - Leibniz-Zentrum f{\"{u}}r Informatik,
  2023.

\bibitem[LP21]{LongP21}
Yaowei Long and Seth Pettie.
\newblock Planar distance oracles with better time-space tradeoffs.
\newblock In {\em Proceedings of the 2021 {ACM-SIAM} Symposium on Discrete
  Algorithms, {SODA} 2021, Virtual Conference, January 10 - 13, 2021}, pages
  2517--2537. {SIAM}, 2021.

\bibitem[{\L}ą13]{Lacki13}
Jakub {\L}ącki.
\newblock Improved deterministic algorithms for decremental reachability and
  strongly connected components.
\newblock {\em {ACM} Trans. Algorithms}, 9(3):27:1--27:15, 2013.

\bibitem[RZ08]{RodittyZ08}
Liam Roditty and Uri Zwick.
\newblock Improved dynamic reachability algorithms for directed graphs.
\newblock {\em {SIAM} J. Comput.}, 37(5):1455--1471, 2008.

\bibitem[San05]{Sankowski05}
Piotr Sankowski.
\newblock Subquadratic algorithm for dynamic shortest distances.
\newblock In {\em Computing and Combinatorics, 11th Annual International
  Conference, {COCOON} 2005, Kunming, China, August 16-29, 2005, Proceedings},
  volume 3595 of {\em Lecture Notes in Computer Science}, pages 461--470.
  Springer, 2005.

\bibitem[Sha81]{Sharir}
M.~Sharir.
\newblock A strong-connectivity algorithm and its applications in data flow
  analysis.
\newblock {\em Computers \& Mathematics with Applications}, 7(1):67--72, 1981.

\bibitem[Sub93]{Subramanian93}
Sairam Subramanian.
\newblock A fully dynamic data structure for reachability in planar digraphs.
\newblock In {\em Algorithms - {ESA} '93, First Annual European Symposium, Bad
  Honnef, Germany, September 30 - October 2, 1993, Proceedings}, pages
  372--383, 1993.

\bibitem[Tar72]{Tarjan72}
Robert~Endre Tarjan.
\newblock Depth-first search and linear graph algorithms.
\newblock {\em {SIAM} J. Comput.}, 1(2):146--160, 1972.

\bibitem[Tho04]{Thorup04}
Mikkel Thorup.
\newblock Compact oracles for reachability and approximate distances in planar
  digraphs.
\newblock {\em J. {ACM}}, 51(6):993--1024, 2004.

\bibitem[vdBNS19]{BrandNS19}
Jan van~den Brand, Danupon Nanongkai, and Thatchaphol Saranurak.
\newblock Dynamic matrix inverse: Improved algorithms and matching conditional
  lower bounds.
\newblock In {\em 60th {IEEE} Annual Symposium on Foundations of Computer
  Science, {FOCS} 2019}, pages 456--480. {IEEE} Computer Society, 2019.

\bibitem[Wil05]{Williams05}
Ryan Williams.
\newblock A new algorithm for optimal 2-constraint satisfaction and its
  implications.
\newblock {\em Theor. Comput. Sci.}, 348(2-3):357--365, 2005.

\bibitem[Wul13]{Wulff-Nilsen13a}
Christian Wulff{-}Nilsen.
\newblock Faster deterministic fully-dynamic graph connectivity.
\newblock In {\em Proceedings of the Twenty-Fourth Annual {ACM-SIAM} Symposium
  on Discrete Algorithms, {SODA} 2013, New Orleans, Louisiana, USA, January
  6-8, 2013}, pages 1757--1769. {SIAM}, 2013.

\end{thebibliography}

\end{document}